\definecolor{shadecolor}{rgb}{.95,.95,.95} %
\setlist[enumerate]{nolistsep,itemsep=3pt,topsep=3pt} %
\definecolor{White}{rgb}{1,1,1} %
\definecolor{Black}{rgb}{0,0,0} %
\definecolor{LightGray}{rgb}{.8,.8,.8} %
\colorlet{ChannelColor}{LightGray} %
\colorlet{ChannelTextColor}{Black} %
\colorlet{ReadoutColor}{White} %
\newtheorem{theorem}{Theorem} %
\newtheorem{lemma}[theorem]{Lemma} %
\newtheorem{corollary}[theorem]{Corollary} %
\newtheorem{definition}[theorem]{Definition} %
\newcommand{\microspace}{\mspace{.5mu}} %
\newcommand{\ket}[1]{\ensuremath{\lvert\microspace #1
    \microspace\rangle}} %
\newcommand{\bigket}[1]{\bigl\lvert\microspace #1
  \microspace\bigr\rangle} %
\newcommand{\bra}[1]{\ensuremath{\langle\microspace #1
    \microspace\rvert}} %
\newcommand{\ip}[2]{\ensuremath{\left\langle#1,#2\right\rangle}} %
\newcommand{\norm}[1]{\ensuremath{\left\lVert #1 \right\rVert}} %
\newcommand{\abs}[1]{\ensuremath{\left\lvert #1 \right\rvert}} %
\newcommand{\ceil}[1]{\ensuremath{\left\lceil #1 \right\rceil}} %
\newcommand{\defeq}{\stackrel{\smash{\text{\tiny\rm def}}}{=}} %
\newcommand{\complex}{\mathbb{C}} %
\renewcommand{\Re}{\operatorname{Re}} %
\newcommand{\class}[1]{\textup{#1}\xspace} %
\newcommand{\NP}{\class{NP}} %
\newcommand{\IP}{\class{IP}} %
\newcommand{\NEXP}{\class{NEXP}} %
\newcommand{\QMA}{\class{QMA}} %
\newcommand{\QMIP}{\class{QMIP}} %
\newcommand\QMIP*{\ensuremath{\class{QMIP}^*}} %
\newcommand{\PSPACE}{\class{PSPACE}} %
\newcommand{\PCP}{\class{PCP}} %
\newcommand{\MIP}{\class{MIP}} %
\newcommand\MIP*{\ensuremath{\class{MIP}^*}} %
\newcommand{\QIP}{\class{QIP}} %
\newcommand{\QMAM}{\class{QMAM}} %
\newcommand{\reg}[1]{\textsf{#1}} %
\newcommand{\game}[1]{\textit{#1}} %
\newcommand{\dtr}{\operatorname{D}} % Trace distance
\newcommand{\dis}{d} % Distance
\newcommand{\con}{\operatorname{C}} % Consistency
\newcommand{\setft}[1]{\mathrm{#1}} %
\newcommand{\Density}{\setft{D}} %
\newcommand{\Pos}{\setft{Pos}} %
\newcommand{\Unitary}{\setft{U}} %
\newcommand{\Herm}{\setft{Herm}} %
\newcommand{\Lin}{\setft{L}} %
\newcommand{\poly}{\mathit{poly}} %
\newcommand{\xap}{\ensuremath{\coAsterisk}} % X after prover action
\renewcommand{\hat}[1]{\widehat{#1}} % Always wide
\renewcommand{\tilde}[1]{\widetilde{#1}} % Always wide
\renewcommand{\check}[1]{\widecheck{#1}} % Always wide
\DeclareMathOperator{\tr}{tr} %
\DeclareMathOperator{\SWAP}{SWAP} %
\DeclareMathOperator{\CNOT}{CNOT} %
\DeclareMathOperator{\TOFFOLI}{TOFFOLI} %
\DeclareMathOperator{\MAP}{MAP} %
\def\B{\mathcal{B}} % Space
\def\C{\mathcal{C}} % Space
\def\H{\mathcal{H}} % Space
\def\I{\mathds{1}} % Id
\def\M{\mathcal{M}} % Space
\def\P{\mathcal{P}} % Space
\def\V{\mathcal{V}} % Space
\def\R{\mathcal{R}} % Space
\def\S{\mathcal{S}} % Space
\def\T{\mathcal{T}} % Space
\def\X{\mathcal{X}} % Space
\def\Y{\mathcal{Y}} % Space
\def\Z{\mathcal{Z}} % Space
\newcommand{\E}{\mathop{\mathbb{E}}\displaylimits} % Expectation
\newcommand{\Pauli}{\mathbb{P}} %
\newcommand{\Power}{\mathbb{Q}} %
\newcommand{\Stabilizer}{\mathcal{S}} %
\newcommand{\Strategy}{\mathfrak{S}} %
\newcommand{\legal}{\text{legal}} %
\newcommand{\prop}{\text{prop}} %
\newcommand{\propv}{\text{prop,v}} %
\newcommand{\propp}{\text{prop,p}} %
\newcommand{\GHZ}{\text{GHZ}} %
\newcommand{\cons}{\text{cons}} %
\begin{document}

%% End-Of-Header

\title{\Large\bf Compression of Quantum Multi-Prover Interactive
  Proofs}

\author{Zhengfeng Ji}

\affil{\small Centre for Quantum Software and Information, School of
  Software, Faculty of Engineering and Information Technology,
  University of Technology Sydney, NSW, Australia}

% \affil[2]{\small State Key Laboratory of Computer Science, Institute
% of Software, Chinese Academy of Sciences, Beijing, China}

% \date{September 27, 2016}
\date{\today}

\maketitle

\begin{abstract}
  We present a protocol that transforms any quantum multi-prover
  interactive proof into a nonlocal game in which questions consist of
  logarithmic number of bits and answers of constant number of bits.
  As a corollary, this proves that the promise problem corresponding
  to the approximation of the nonlocal value to inverse polynomial
  accuracy is complete for \QMIP*, and therefore \NEXP-hard.
  This establishes that nonlocal games are provably harder than
  classical games without any complexity theory assumptions.
  Our result also indicates that gap amplification for nonlocal games
  may be impossible in general and provides a negative evidence for
  the possibility of the gap amplification approach to the
  multi-prover variant of the quantum \PCP conjecture.
\end{abstract}

\section{Introduction}

The notion of the efficient proof verification is one of the
fundamental concepts in the theory of computing.
Proof verification models and corresponding complexity classes ranging
from \NP, to \IP, \MIP and \PCP greatly enrich the theory of
computing.
The class \NP~\cite{Coo71,Lev73,Kar72}, one of the cornerstones of
theoretical computer science, corresponds to the proof verification of
a proof string by an efficient deterministic computer.
Interactive models of proof verification were first proposed by
Babai~\cite{Bab85} and Goldwasser, Micali, and Rackoff~\cite{GMR85}.
It is generalized to the multi-prover setting by Ben-Or, Goldwasser,
Kilian and Wigderson~\cite{BGKW88}.
The study of different proof systems through the computational lens
has led to a blossom of celebrated results in computational complexity
theory and
cryptography~(e.g.,~\cite{LFKN92,Sha92,BFL90,AS98,ALM+98,GMW91,GMR89}).

The efforts of understanding proof systems in the context of quantum
computing have also been fruitful (see
e.g.,~\cite{Kit99,KKR06,GI13,OT08,KSV02,Wat09,AN02,Wat99,JJUW11,IV12,Vid13,RUV13}).
These interesting results are nicely summarized in the recent survey
on quantum proofs by Vidick and Watrous~\cite{VW16}.
We emphasize that, in the development of quantum proofs, entanglement
has played a dramatic role---it is both the cause of the problems and
the key to the solutions as well.

A quantum analog of \NP was proposed by
Kitaev~\cite{Kit99,KSV02,AN02}.
In this generalization, a quantum witness state plays the role of the
proof string and a polynomial-time quantum computer checks whether the
witness state is valid for the input.
Kitaev introduces the class \QMA of problems that admit efficient
verifiable quantum proofs.
He also establishes the quantum analog of the Cook-Levin theorem by
showing that the local Hamiltonian problem, the natural quantum
version of the constraint satisfaction problems, is complete for \QMA.
As elaborated in~\cite{AN02}, the difficulty is to perform local
propagation checks on the snapshot states which may be highly
entangled.
The circuit-to-Hamiltonian construction, the key technique for the
quantum Cook-Levin theorem, demonstrates how one can locally check the
propagation of quantum computation, by introducing an extra clock
system that entangles with the computational system.
A generalization of this construction to the interactive setting will
be one of the key ingredients of our result.
% The study of local Hamiltonian problems, the structure of
% entanglement in the ground states of local Hamiltonians, and the
% quantum PCP conjecture~(see e.g.,~\cite{AAV13}) form a new research
% direction called Hamiltonian complexity~\cite{Osb12,GHLS14}.

Entanglement also has unexpected use in single-prover quantum
interactive proof systems, \QIP, in which an efficient quantum
verifier exchanges quantum messages with a quantum prover before
making decisions.
Watrous presented a constant-round quantum interactive proof system
for \PSPACE~\cite{Wat99,KW00}, in which entanglement is exploited to
enforce the correct temporal structure in a classical interactive
proof for \PSPACE.
Alternatively, one can view this parallelization technique as dividing
the interactive computation into two halves and check either forward
or backward from the middle point.
This idea also gave rise to a simple public coin characterization of
\QIP called \QMAM~\cite{MW05}, which in turn helps in the final proof
that $\QIP = \PSPACE$~\cite{JJUW11}.
The technique can be extended to the multi-prover setting and show
that quantum multi-prover interactive proof systems also parallelize
to constant-rounds~\cite{KKMV08}.
This will provide a starting point for our work.

This paper is about quantum multi-prover interactive proofs and
nonlocal games, the scaled-down version of one-round quantum
multi-prover proofs with classical messages.
The class of languages that have quantum multi-prover interactive
proofs is denoted as \QMIP*.
In the multi-prover setting, shared entanglement among the provers
becomes the natural focus of the study, a topic that has received
continuing interests in physics foundations since
1960's~\cite{Bel64,KS67,Tsi80,Wer89,Mer90,Per90}.
From the complexity perspective, it is known that, without shared
entanglement, or with limited amount of entanglement, the collection
of languages that have quantum multi-prover interactive proof systems
equals to the classical counterpart, \MIP~\cite{KM03} (and, hence,
also equals to \NEXP~\cite{BFL90}).
It was pointed out in~\cite{CHTW04} that provers with shared
entanglement may break the soundness condition of a classically sound
protocol.
One striking example is given by the so-called magic square
game~\cite{Mer90,Per90}, which has nonlocal value\footnote{The
  nonlocal value of a multi-player one-round game is the supremum of
  the probability that entangled players can make the verifier
  accept.}
one even though it corresponds to a system of constraints with no
classical solution~\cite{CHTW04}.
Strong evidences are also given in that paper that the entanglement
between the players may indeed weaken the power of two-player XOR
games.

Several methods have been proposed to control the cheating ability of
entangled provers and recover soundness in certain cases.
It is proved that approximating the nonlocal value of a multi-player
game to inverse-polynomial precision is \NP-hard~\cite{KKM+08,IKM09},
and therefore at least as hard as approximating the classical
value~\cite{FRS94}.
Several natural problems arise from the study of nonlocality,
including the binary constraint system game~\cite{CM12}, the quantum
coloring game~\cite{CMN+07,RM12} and the game corresponding to the
Kochen-Specker sets~\cite{KS67}, are shown to be \NP-hard
in~\cite{Ji13}.
By proving that the multi-linearity test~\cite{BFL90} is sound against
entangled provers, Ito and Vidick proved the containment of \NEXP in
\MIP*~\cite{IV12}.
This was later improved to the result that three-player XOR games are
\NP-hard to approximate even to constant precision~\cite{Vid13}.
Very recently, techniques introduced in~\cite{FV15,Ji16} allow us to
go beyond the \NP-hardness type of results and prove that nonlocal
games are \QMA-hard.
The problem of the existence of perfect commuting-operator strategy
for binary constraint system games was shown to be undecidable in a
recent breakthrough~\cite{Slo16}.
It is, however, not comparable to the above results mainly because it
does not tolerate approximation errors.

In this paper, we significantly improve the understanding of quantum
multi-prover interactive proofs and nonlocal games by showing that any
quantum multi-prover interactive proof can be \emph{compressed} in the
sense that the resulting protocol, a nonlocal game, has one round of
classical communication with messages consisting of logarithmic number
of bits.
It has perfect completeness and an inverse polynomial completeness and
soundness gap.
Our result is made possible by combining and exploiting the unique
features of entanglement that have already led to intriguing
understandings of quantum proof systems as discussed above.

\begin{theorem}
  \label{thm:main}
  For $r\in \poly$, any problem $A$ that has an $r$-prover quantum
  interactive proof, and any instance $x$ of the problem, there exists
  an $(r+8)$-player one-round game and real numbers $s\in 1 -
  \poly^{-1}(\abs{x})$, such that
  \begin{enumerate}
  \item The questions are classical bit strings of length
    $O(\log(\abs{x}))$.
  \item The answers are classical bit strings of length $O(1)$.
  \item If $x\in A$, then the nonlocal value of the game is $1$.
  \item If $x\not\in A$, then the nonlocal value of the game is at
    most $s$.
  \end{enumerate}
\end{theorem}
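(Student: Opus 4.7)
The plan is to combine three ingredients already discussed in the introduction: parallelization of \QMIP* to a small fixed number of rounds, a circuit-to-Hamiltonian construction adapted to the interactive setting, and a nonlocal game test for the low-energy space of a local Hamiltonian with logarithmic-size questions and constant-size answers. First I would apply the parallelization result of~\cite{KKMV08} to collapse the given $r$-prover, polynomial-round protocol for $A$ into a constant-round protocol on the same $r$ provers. This puts the entire joint computation into the form of a polynomial-size verifier circuit acting on the provers' private spaces together with polynomial-size message registers, which is the regime where Feynman--Kitaev techniques apply.

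Next I would write down an interactive history-state Hamiltonian $H = H_{\text{init}} + H_{\text{prop}} + H_{\text{out}}$ whose minimum energy equals one minus the maximum entangled acceptance probability of the constant-round protocol. The propagation terms enforce each verifier step along a polynomial clock, the initialization terms fix the verifier's ancillary qubits, and the output term checks acceptance; the prover registers are deliberately left unconstrained by local terms so that any entangled prover strategy corresponds to some low-energy eigenstate. Because $H$ has polynomially many terms each of norm $O(1)$, the promise gap shrinks from the constant gap of the parallelized protocol to an inverse polynomial gap on $H$, which is exactly the $s \in 1 - \poly^{-1}(\abs{x})$ quantity appearing in the theorem.

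The game itself is then built around a nonlocal test for the ground energy of $H$, in the spirit of the stabilizer- and Pauli-based Hamiltonian tests of~\cite{FV15,Ji16}. The $r$ original provers retain their role from the constant-round protocol, each holding their share of the history state at the sampled clock time, and the $8$ additional provers play Pauli/stabilizer-measurement roles on the verifier's private qubits, the clock register, and the message tapes, plus consistency roles that cross-check pairs of them. The verifier uses $O(\log\abs{x})$ random bits to sample a Hamiltonian term together with a Pauli basis and sends each prover the $O(\log\abs{x})$-bit label of the restricted measurement it should perform; the provers respond with $O(1)$-bit outcomes, and the verifier accepts if both the sampled local term is satisfied and the cross-consistency between the Pauli answers holds. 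Completeness with value exactly $1$ follows from an honest strategy in which the provers distribute a genuine history state and share EPR pairs to simulate the requested Pauli measurements exactly, while soundness at value $s$ follows from the corresponding rigidity statement combined with the spectral gap of $H$.

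The main obstacle is precisely this rigidity/soundness analysis. One needs a self-testing argument strong enough to conclude from a near-optimal strategy that (i) the provers' shared state is close, up to a joint local isometry, to a low-energy eigenstate of $H$ distributed across the $r+8$ parties in the prescribed way, and (ii) on the restriction to the $r$ original provers at the sampled clock time, their measurements realize, again up to local isometry, an actual entangled prover strategy in the parallelized constant-round protocol. Keeping this two-layer rigidity argument composable while simultaneously respecting the logarithmic question and constant answer budgets is where almost all of the technical work will lie; once those rigidity bounds are in place, propagating them through the $\poly(\abs{x})$ many Hamiltonian terms to extract the claimed $1/\poly$ completeness-soundness gap is a routine loss accounting.
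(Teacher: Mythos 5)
Your skeleton---parallelize via~\cite{KKMV08}, build an interactive Feynman--Kitaev Hamiltonian, and wrap it in a Pauli-measurement game with eight auxiliary players enforcing the referee's measurements---does match the paper's architecture, but your Hamiltonian is wrong at exactly the point where the interactive setting differs from \QMA, and this breaks soundness. You propose to leave the prover step ``unconstrained by local terms.'' Then the clock path is cut between times $L$ and $L+1$: the propagation Laplacian acquires a second zero mode, and a cheating strategy can place all its weight on clock times $t \ge L+1$ with an \emph{arbitrary} state at $t = L+1$---in particular one in which the verifier register $\V$ is set to whatever makes $V^2$ accept---thereby satisfying $H_{\text{prop}}$, $H_{\text{out}}$, and (vacuously, having no weight at $t=0$) $H_{\text{init}}$, so the game value is $1$ even for $x \notin A$. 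Conversely, inserting a propagation term for any \emph{fixed} unitary at $t = L+1$ kills completeness, since the honest provers get to choose $W^i$. The paper's resolution, which your proposal does not contain, is a forward-backward check: the legal space splits into a forward half ($t \le L$) and a backward half ($t \ge L+1$), each original player holds a copy $\reg{B}_i$ of the $(L+1)$-th clock qubit, and the special question $\xap$ together with approximate stabilization by $X_{\reg{C},L+1} \otimes \bigl( \bigotimes_i \hat{\xap}^{(i)} \bigr)$ and $Z_{\reg{C},L+1} \otimes Z_{\reg{B}_i}$ forces, via Lemma~\ref{lem:XZ}, each $\hat{\xap}^{(i)}$ to be an $X$ conjugated by a block unitary $W_i = \ket{0}\bra{0} \otimes W_i^0 + \ket{1}\bra{1} \otimes W_i^1$, so the mid-circuit transition is certified to be $\bigotimes_i (W_i^1)^* W_i^0$---a tensor product acting only on $\M_i \otimes \P_i$, i.e.\ a legal prover action. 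Asserting that rigidity should show the players ``realize an actual entangled prover strategy'' names this conclusion without supplying the mechanism.

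The second genuine gap is your appeal to rigidity ``in the spirit of''~\cite{FV15,Ji16} for the $r$ original players. Those arguments obtain rigidity by encoding the shared state across players with a quantum code, and the paper explains in its overview why that route is closed here: the provers' circuits $W^i$ would then have to act transversally on the code, which is impossible for universal computation~\cite{ZCC11,EK09}. The paper's substitute is new machinery absent from your plan: single-player \emph{extended} nonlocal games---the $(n,k)$-constraint propagation game of Theorem~\ref{thm:mc-game}, in which the referee's own quantum clock register certifies a player's Pauli measurement algebra with no cross-player encoding---followed by delegating the referee's quantum side to the eight extra players via the eight-qubit-code stabilizer game (Theorem~\ref{thm:ms-game}), which was redesigned precisely so that it admits a perfect quantum strategy; together with the paired-Hadamard trick of~\cite{BJSW16} keeping every referee measurement a commuting $XZ$-form Pauli of constant weight, this is what delivers the perfect completeness you claim but do not secure. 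Until you specify a concrete test forcing honest Pauli measurements on unencoded single-prover systems and a sound prover-step check, the concluding ``routine loss accounting'' has nothing to propagate.
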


We mention that a corresponding claim in the classical case does not
hold since $\MIP = \NEXP$, the approximation of classical value is in
\NP, and $\NEXP \ne \NP$ by a diagonalization argument~\cite{Coo73}.
The approximation problem of nonlocal value is obviously in \QMIP* by
designing a multi-prover interactive protocol that sequentially
repeats the multi-player game polynomially many times.
This observation and Theorem~\ref{thm:main} imply that the problem is
in fact complete for the class \QMIP*.
As \NEXP is contained in $\QMIP*$~\cite{IV12}, a direct corollary of
Theorem~\ref{thm:main} is that approximating the nonlocal value of a
multi-player game is \NEXP-hard, improving the \QMA-hardness result
of~\cite{Ji16}.

\begin{corollary}
  Given a multi-player one-round game in which the questions are
  strings of $O(\log n)$ bits and answers are of strings of $O(1)$
  bits, it is \QMIP*-complete, and hence \NEXP-hard, to approximate
  the nonlocal value of the game to inverse polynomial precision.
\end{corollary}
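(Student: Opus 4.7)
The plan is to establish the two directions of the \QMIP*-completeness claim separately and then obtain \NEXP-hardness from the inclusion $\NEXP\subseteq\QMIP*$ due to~\cite{IV12}.

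For containment in \QMIP*, I would give a protocol that, on input a nonlocal game $G$ together with completeness threshold $c$ and soundness threshold $s$ with $c-s\ge 1/\poly$, decides which side $\omega^*(G)$ falls on. The verifier plays $G$ sequentially $N=\poly$ times, each round sampling fresh questions from $G$'s distribution, and accepts iff the fraction of winning rounds exceeds $(c+s)/2$. Completeness is immediate: provers sharing many copies of the optimal entangled state and applying the optimal single-round strategy in each round produce independent wins with mean at least $c$. For soundness, since round $i$'s questions are independent of the past, the provers' joint strategy at round $i$, after averaging over the transcript before that round, is a valid single-round entangled strategy for $G$, and so its winning probability is at most $\omega^*(G)\le s$. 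A martingale concentration bound (Azuma-Hoeffding applied to the indicators of per-round wins minus their conditional expectations) then shows that the empirical win fraction is within $1/\poly$ of its expectation. The verifier is efficient because questions are $O(\log n)$ bits, answers $O(1)$ bits, and each round is checked by the game predicate.

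For \QMIP*-hardness, Theorem~\ref{thm:main} is essentially exactly the reduction needed: given any $A\in\QMIP*$ with an $r$-prover quantum interactive proof ($r\in\poly$), it maps each instance $x$ in polynomial time to an $(r+8)$-player one-round game with $O(\log|x|)$-bit questions, $O(1)$-bit answers, value $1$ in the yes-case, and value at most $1-1/\poly$ in the no-case. This is a Karp reduction from $A$ to the approximation promise problem of the corollary, with inverse polynomial gap. Combining the two directions gives \QMIP*-completeness, and \NEXP-hardness then follows from $\NEXP\subseteq\QMIP*$.

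The main delicate step is the soundness of sequential repetition against entangled provers who remember their own past inputs and outputs: unlike the classical setting, one cannot directly invoke independence across rounds because the provers' shared state evolves and may become correlated with the transcript. The conditioning-then-averaging observation above is the crux; it exploits that sequential play with freshly sampled questions keeps the round-$i$ question independent of the past, so that the single-round value $\omega^*(G)$ still bounds the conditional per-round win probability. This avoids any appeal to the far harder parallel repetition theorems for entangled nonlocal games.
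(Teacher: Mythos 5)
Your proposal matches the paper's own argument: the paper obtains containment in \QMIP* by having the verifier sequentially repeat the game polynomially many times (a step it asserts as ``obvious''), gets hardness directly from Theorem~\ref{thm:main}, and derives \NEXP-hardness from $\NEXP \subseteq \QMIP*$~\cite{IV12}. Your only addition is to flesh out the sequential-repetition soundness with the conditioning-plus-Azuma--Hoeffding argument, which is exactly the standard justification the paper leaves implicit, so this is the same proof in more detail.
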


The same problem for the classical value is obviously in \NP.
This means that the nonlocal value of multi-player one-round games is
provably harder to approximate than the classical value without any
complexity theory assumptions.

Our main theorem has the following consequence for the quantum
multi-prover interactive proofs with inverse exponential completeness
and soundness gap by scaling up the problem size.
Let $\class{NEEXP}$ be the class of nondeterministic
double-exponential time.
Let $\MIP^*(r,m,c,s)$ (and $\MIP(r,m,c,s)$) be the class of languages
that have $r$-prover, $m$-round interactive proofs with a classical
polynomial-time verifier, entangled provers (classical provers
respectively), completeness $c$, and soundness $s$.
We mention that $\MIP(\poly,\poly\,1,s) \subseteq \NEXP$ even for $s =
1 - \Omega(\exp(-p(n)))$ as a nondeterministic exponential time
machine may first guess all the interactions and compute the value for
this interaction to a precision of polynomially many bits.

\begin{corollary}
  \label{cor:MIP*}
  There exists a constant $r_0$ such that for $r \ge r_0$, there exist
  choices of soundness $s = 1 - \Omega(\exp(-p(n)))$ where $p(n)$ is
  some polynomial, such that
  \begin{equation*}
    \class{NEEXP} \subseteq \MIP^{*}(r,1,1,s),
  \end{equation*}
  and therefore, by the nondeterministic hierarchy
  theorem~\cite{Coo73},
  \begin{equation*}
    \MIP(r,1,1,s) \neq \MIP^{*}(r,1,1,s).
  \end{equation*}
\end{corollary}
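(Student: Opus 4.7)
The plan is a scaling (padding) argument: promote $\class{NEXP} \subseteq \MIP^{*}$ to $\class{NEEXP} \subseteq \MIP^{*}(r,1,1,s)$ by using Theorem~\ref{thm:main} to compress the verifier's running time, and then separate from classical $\MIP$ via the nondeterministic time hierarchy. Fix $L \in \class{NEEXP}$, decidable in nondeterministic time $2^{2^{p(n)}}$ for some polynomial $p$, and consider the padded language $L' = \{(x, 1^{2^{p(|x|)}}) : x \in L\}$. On inputs of size $N = \Theta(2^{p(|x|)})$ the language $L'$ lies in $\class{NEXP}$, so by Ito--Vidick~\cite{IV12} it admits an $\MIP^{*}$ protocol with some constant number $r_0'$ of provers, a $\poly(N)$-time verifier, perfect completeness, and constant completeness--soundness gap.

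Next I would apply Theorem~\ref{thm:main} to this protocol on the instance $y = (x, 1^{2^{p(|x|)}})$. The theorem delivers a one-round $(r_0' + 8)$-player nonlocal game with $O(\log|y|) = O(p(|x|))$-bit questions, $O(1)$-bit answers, perfect completeness, and soundness at most $1 - 1/\poly(|y|) = 1 - \Omega(\exp(-q(|x|)))$ for some polynomial $q$. Provided that the verifier of the compressed game is uniformly constructible in $\poly(|x|)$ time from $x$ alone (treating the padding as a known constant string), this game defines an $\MIP^{*}(r,1,1,s)$ protocol for $L$ with $r = r_0' + 8$ and $s = 1 - \Omega(\exp(-q(n)))$, giving $\class{NEEXP} \subseteq \MIP^{*}(r,1,1,s)$. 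To upgrade from exactly $r$ provers to every $r \ge r_0$, one simply attaches dummy provers whose responses are ignored.

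For the separation, the paragraph preceding Corollary~\ref{cor:MIP*} already records that $\MIP(\poly,\poly,1,s) \subseteq \class{NEXP}$ holds even at inverse-exponential soundness: an exponential-time nondeterministic machine can guess a deterministic response function for each of the constantly many provers and evaluate the verifier's acceptance probability exactly by enumerating all question tuples. Composing this containment with the $\class{NEEXP}$ lower bound just derived and with the nondeterministic time hierarchy theorem of Cook~\cite{Coo73}, which gives $\class{NEXP} \subsetneq \class{NEEXP}$, immediately forces $\MIP(r,1,1,s) \neq \MIP^{*}(r,1,1,s)$.

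The main obstacle I anticipate is precisely the uniformity step flagged above. Theorem~\ref{thm:main} is phrased existentially per instance and its statement does not explicitly assert that the compressed verifier's circuit is describable in time polynomial in the \emph{question} length; the padding argument hinges on exactly this stronger uniformity, since otherwise constructing the verifier on input $x$ would require first writing out the exponentially long padded string $y$, collapsing the scaling. This should follow from the explicit nature of the circuit-to-Hamiltonian and parallelization primitives referenced in the introduction, but it is the one point where the corollary must look inside the proof of Theorem~\ref{thm:main} rather than treat it as a black box.
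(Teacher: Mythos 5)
Your proof follows exactly the route the paper intends: the paper offers no separate proof of Corollary~\ref{cor:MIP*} beyond the phrase ``by scaling up the problem size'' together with the preceding remark that $\MIP(\poly,\poly,1,s)\subseteq\NEXP$ even at inverse-exponential soundness, and your padding argument (pad $L\in\class{NEEXP}$ to an \NEXP language, apply \cite{IV12} and then Theorem~\ref{thm:main} to get $O(p(n))$-bit questions and soundness $1-\Omega(\exp(-q(n)))$, add dummy provers, and conclude via the hierarchy theorem) is precisely that argument made explicit. The uniformity caveat you flag---that the compressed referee must be constructible in time polynomial in the question length, i.e.\ polylogarithmic in the padded instance size---is genuine but equally implicit in the paper, which acknowledges the analogous assumption only parenthetically in its gap-amplification discussion.
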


Our result also indicates that the gap amplification for nonlocal
games may not be possible.
For classical multi-player game, one can reduce the inverse polynomial
approximation problem of the game value to the constant approximation
problem of some derived game, a procedure known as gap amplification.
This is an equivalent formulation of the classical \PCP theorem and
the approach of the alternative proof of the \PCP theorem given by
Dinur~\cite{Din07}.
Whether one can also amplify the gap of nonlocal game in a similar way
has been an interesting open problem.
Our result implies that it may not be possible at all.
If gap amplification works for nonlocal games, then one can start with
any nonlocal game, first perform gap amplification, and then scale up
the instance size (assuming that the resulting referee after gap
amplification has $\mathrm{polylog}$ time) and use our protocol to
transform it back into a nonlocal game with eight extra players.
This series of transformations will prove that nonlocal games with a
constantly many more players are exponentially harder, a situation
which does not seem to be plausible.
This provides negative evidence for the strong form of the quantum
\PCP conjecture that asks whether constant approximation of the
nonlocal value is as hard as inverse polynomial approximation.
It may still be possible to prove, and even using the gap
amplification approach for some special nonlocal games with certain
structure, that constant approximation to the nonlocal value is
\QMA-hard, a weaker form of the multi-player variant of quantum \PCP
conjecture.

Historically in the study of classical proof systems, we have started
from \NP, generalized it to \IP and \MIP~\cite{LFKN92,Sha92,BFL90},
motivated the study of \PCP and come back to \NP with the celebrated
\PCP theorem~\cite{AS98,ALM+98,Din07}.
Our result indicates that the landscape of quantum proof systems may be
very different.

Two important open questions are left open by this work. First, it is
an intriguing problem to understand the complexity of constant
approximation of the nonlocal value of a multi-player game. Second,
it is important to provide upper bounds for the class \QMIP*, a
problem that remains widely open.

\subsection{Techniques and Proof Overview}

Our proof is motivated by, and reuses many techniques from, the
previous work in~\cite{FV15,Ji16} but requires several new techniques
that we now discuss.

First, we recall that it is crucial in~\cite{FV15,Ji16} that we encode
the quantum witness state with certain quantum error
correcting/detecting code and distribute the encoded state among the
players so that we can prove rigidity
theorems~\cite{MY98,DMMS00,RUV13,McK14} that are helpful to enforce
the behavior of the players.
This can be thought of as the quantum analog of the oracularization
technique~\cite{FRS94}.
This technique alone, however, does not work anymore when we are
dealing with quantum interactive proofs instead of quantum witness
states as in the case for \QMA for the following reason.
In order to check the correct propagation for the provers' step, we
will ask the players to simulate the provers' actions, applications of
unitary circuits on their private qubits and the message qubits.
This will require that the provers' circuits are \emph{transversal}
over the underlying quantum code.
It is however well known that no quantum error correcting code
supports transversal universal quantum computation~\cite{ZCC11,EK09}.
To this end, we need to find a different way to encode and distribute
the qubits used in the interactive proof.

We introduce extended nonlocal games called propagation games and
constraint propagation games.
Propagation games exploit the idea of propagation checks in the proof
of \QMA-completeness for the local Hamiltonian problem and define a
corresponding game so that the shared state between the referee and
the player, who possess the clock and computation system respectively,
must be approximately close to the history state with respect to the
player's measurement strategies.
The constraint propagation game then adds the constraint checks to the
propagation game.
The constraints can be of any product form and can represent
commutativity and anti-commutativity as special cases.
We then define a variant of the constraint propagation game using a
constraint system satisfied by the Pauli operators on $n$ qubits of
weight $k$.
With this game, we avoid the problem of transversality and obtain
rigidity at the same time.

The use of extended nonlocal games for obtaining rigidity provides
great flexibility and largely simplifies the structure of the game.
This is the reason that constraint propagation games work with single
player, and also the reason that we can check the propagation of the
provers' step in an interactive proof system.
In particular, an extended nonlocal game defined by the stabilizer of
the GHZ state serves as a nonlocal game implementation of the
forward-backward checking technique in quantum interactive proofs
discussed in the introduction.

Our resulting nonlocal game for \QMIP* has perfect completeness.
To achieve this, we modify the stabilizer game introduced
in~\cite{Ji16} so that the new stabilizer game has prefect quantum
strategies.
An eight-qubit code is used to define the stabilizer game and a much
simpler proof of rigidity is provided for it.
We also need a proof technique first used in the construction of
zero-knowledge proofs for \QMA~\cite{BJSW16}, with which we design a
propagation verification procedure for the verifier's circuits so that
it suffices to measure commuting Pauli operators with $X$ and $Z$
factors only.

Our proof has the following overall structure.
First, we generalize Kitaev's circuit\hyp{}to\hyp{}Hamiltonian
construction for \QMA to the interactive setting and turn a quantum
multi-prover interactive proof system into an honest player game.
This honest player game plays the role of the random checking protocol
of the local Hamiltonian problem.
We then use the rigidity of the constraint propagation game based on a
constraint system satisfied by the Pauli operators to remove the
requirement that the players must measure honestly.
This gives rise to an extended nonlocal game for \QMIP*.
Finally, we turn this extended nonlocal game into a nonlocal game by
using eight extra players who encode and simulate the Pauli
measurements on the quantum system of the referee in the extended
nonlocal game.

\section{Preliminaries}

\label{sec:prel}

\subsection{Notions}

In this paper, a \emph{quantum register} refers to a named collection
of qubits that we view as a single unit.
Register names are represented by capital letters in a \emph{sans
  serif} font, such as $\reg{X}$, $\reg{Y}$, and $\reg{Z}$.
The associated Hilbert spaces are denoted by the same letters used in
the register names in a calligraphic font.
For example, $\X,\Y,\Z$ are the associated Hilbert spaces of registers
$\reg{X}$, $\reg{Y}$, and $\reg{Z}$ respectively.
To refer to some specific qubits in a register $\reg{X}$, we use qubit
index followed by the register name.
For example, $\bigl(\reg{X}, i_1, i_2\bigr)$ represents the $i_1$ and
$i_2$-th qubits of register $\reg{X}$ and the parentheses are omitted
when this is used in subscripts.
Hilbert spaces named with letter $\B$ are two-dimensional unless
stated otherwise.

We use $\Density(\X)$, $\Lin(\X)$, $\Herm(\X)$, $\Pos(\X)$ to denote
the set of density operators, bounded linear operators, Hermitian
operators and positive semidefinite operators on $\X$.
The adjoint of matrix $M$ is denoted as $M^*$.
For two Hermitian operators $M,N\in \Herm(\X)$, we write $M
\preccurlyeq N$ to mean $N-M\in \Pos(\X)$.
For matrix $M$, $\abs{M}$ is defined to be $\sqrt{M^* M}$.
The operator norm $\norm{M}$ of matrix $M$ is the largest eigenvalue
of $\abs{M}$.
The trace norm $\norm{M}_1$ of $M$ is the trace of $\abs{M}$.
An operator $R\in \Herm(\X)$ is a reflection if $R^2 = \I$.
An operator $R\in \Lin(\X)$ is a contraction if $\norm{R}\le 1$.
An operator $M\in \Lin(\X)$ is called traceless if $\tr(M) = 0$.

A positive-operator valued measure (POVM) is described by the
collection $\bigl\{ M^a \bigr\}$ for $M^a \in \Pos(\X)$.
Recall that the Naimark's theorem states that, for any POVM $\bigl\{
M^a \bigr\}$, there exists an isometry
\begin{equation*}
  V = \sum_a \sqrt{M^a} \otimes \ket{a},
\end{equation*}
such that
\begin{equation*}
  M^a = V^* (\I \otimes \ket{a}\bra{a}) V.
\end{equation*}
It will be technically convenient to apply the Naimark's theorem and
assume without loss of generality that the measurements considered in
this paper are projective measurements, and that each measurement
operator has the same rank.

For each reflection $R$, there naturally associates a two-outcome
projective quantum measurement $\{R^a\}$ where
\begin{equation*}
  R^a = \frac{\I+(-1)^a R}{2},
\end{equation*}
for $a=0,1$.
Conversely, for any two-outcome projective measurement $\{ R^a \}$,
there associates a reflection $R = R^0 - R^1$.
If the measurement operators have the same rank, the associated
reflection is traceless.

For any projective measurement $M = \bigl\{ M^a \bigr\}$ with $k$-bit
outcome, define reflections
\begin{equation*}
  R_i = \sum_{a\in\{0,1\}^k} (-1)^{a_i} M^a,
\end{equation*}
for $i\in [k]$.
The reflections $R_1, R_2, \ldots, R_k$ will be called the
\emph{derived reflections} of $M$.
It is easy to see that the a projective measurement $M$ with $k$-bit
outcome has a one-to-one correspondence with the tuple of $k$ derived
reflections $\bigl( R_i \bigr)_{i=1}^k$.

For a Hermitian operator $H\in \Herm(\X)$, and a subspace $S\subseteq
\X$, the restriction of $H$ to $S$ is
\begin{equation*}
  H\restriction_S = \Pi_S \,H\, \Pi_S,
\end{equation*}
where $\Pi_S$ is the projection onto the space $S$.

We will refer to the following elementary quantum gates in the paper:

\begin{enumerate}
\item The four single-qubit Pauli operators
  \begin{equation*}
    I =
    \begin{bmatrix}
      1 & 0\\
      0 & 1
    \end{bmatrix}, %
    \quad X =
    \begin{bmatrix}
      0 & 1\\
      1 & 0
    \end{bmatrix}, %
    \quad Y =
    \begin{bmatrix}
      0 & -i\\
      i & \phantom{-}0
    \end{bmatrix}, %
    \quad Z =
    \begin{bmatrix}
      1 & \phantom{-}0\\
      0 & -1
    \end{bmatrix},
  \end{equation*}
  which may also be denoted as $\sigma_0$, $\sigma_1$, $\sigma_2$, and
  $\sigma_3$ respectively sometimes.

\item The Hadamard gate
  \begin{equation*}
    H = \frac{1}{\sqrt{2}} %
    \begin{bmatrix} %
      1 & \phantom{-}1\\
      1 & -1
    \end{bmatrix}.
  \end{equation*}

\item Two-qubit unitary gates $\CNOT$ and $\SWAP$
  \begin{equation*}
    \CNOT \,\ket{j,k} = \ket{j,j\oplus k},\quad
    \SWAP \,\ket{j,k} = \ket{k,j},\quad \text{ for } j,k\in\{0,1\}.
  \end{equation*}

\item The Toffoli gate
  \begin{equation*}
    \TOFFOLI \,\ket{j,k,l} = \ket{j,k,l\oplus jk},\quad \text{ for }
    j,k,l\in\{0,1\}.
  \end{equation*}
\end{enumerate}

For unitary gate $U$, define $\Lambda_c(U)$ to be the controlled gate
\begin{equation*}
  \Lambda_c(U) = \ket{0}\bra{0}_c \otimes \I + \ket{1}\bra{1}_c
  \otimes U.
\end{equation*}

A quantum channel is a physically admissible transformation of quantum
states.
Mathematically, a quantum channel
\begin{equation*}
  \mathfrak{E}:\Lin(\X) \rightarrow \Lin(\Y)
\end{equation*}
is a completely positive, trace-preserving linear map.

The trace distance of two quantum states $\rho_0, \rho_1 \in
\Density(\X)$ is
\begin{equation*}
  \dtr(\rho_0, \rho_1) \defeq \frac{1}{2} \norm{\rho_0 - \rho_1}_1.
\end{equation*}
The monotonicity of the trace distance states that for quantum states
$\rho_0, \rho_1 \in \Density(\X)$ and all quantum channel
$\mathfrak{E} : \Lin(\X) \rightarrow \Lin(\Y)$,
\begin{equation*}
  \dtr \bigl(\mathfrak{E}(\rho_0), \mathfrak{E}(\rho_1) \bigr) \le
  \dtr(\rho_0, \rho_1).
\end{equation*}

For a string $x$, $\abs{x}$ denotes its length.
For a positive integer $k$, $[k]$ is the abbreviation of the set
$\{1,2,\ldots, k\}$.
For a set $A$, $\abs{A}$ denotes the number of elements in $A$.
We use $\poly$ to denote the collection of polynomially bounded
functions of either $n$ or $\abs{x}$ depending on the context.
For two complex numbers $a$, $b$, we use $a \approx_\epsilon b$ as a
shorthand notion for $\abs{a-b}\le O(\epsilon)$.

For quantum state $\rho\in \Density(\X)$ and operators $M,N\in
\Lin(\X)$, introduce the following notions
\begin{subequations}
  \begin{align}
    \tr_\rho (M) & = \tr(M\rho),\\
    \ip{M}{N}_\rho & = \tr_\rho (M^* N),\\
    \norm{M}_\rho & = \sqrt{\ip{M}{M}_\rho}.
  \end{align}
\end{subequations}
It is straightforward to verify that $\ip{\cdot}{\cdot}_\rho$ is a
semi-inner-product, $\norm{\cdot}_\rho$ is a seminorm and they become
an inner product and a norm, respectively, when $\rho$ is a full-rank
state.
By the Cauchy-Schwarz inequality,
\begin{equation*}
  \abs{\ip{M}{N}_\rho} \le \norm{M}_\rho \norm{N}_\rho,
\end{equation*}
or more explicitly,
\begin{equation*}
  \abs{\tr_\rho (M^* N)} \le \left[ \tr_\rho(M^* M) \,
    \tr_\rho(N^* N) \right]^{1/2}.
\end{equation*}

For state $\rho\in \Density(\X \otimes \Y)$, and an operator $M\in
\Lin(\X)$, we may also write $\tr_\rho(M)$ even though the state
$\rho$ and the operator $M$ do not act on the same space.
In this case, it is understood that $\tr_\rho(M) =
\tr_{\rho_{\reg{X}}} (M)$ where $\rho_{\reg{X}}$ is the reduced state
of $\rho$ on register $\reg{X}$.
This is one reason that makes $\tr_\rho(\cdot)$ easy to use as it is
not necessary to specify the correct reduced state explicitly all the
time.

We use the following version of the gentle measurement
lemma~\cite{Win99}:
\begin{lemma}
  \label{lem:gentle}
  Let $\rho \in \Density(\X)$ be a state on $\X$, $\Pi\in \Pos(\X)$ a
  projection and $\epsilon \in [0,1]$ a real number.
  If $\ip{\Pi}{\rho} \ge 1 - \epsilon$, then for $\rho_\Pi = \Pi \rho
  \Pi/\ip{\Pi}{\rho}$,
  \begin{equation*}
    \dtr(\rho,\rho_\Pi) \le O(\sqrt{\epsilon}).
  \end{equation*}
\end{lemma}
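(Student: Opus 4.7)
The plan is to prove the lemma via a purification argument combined with the Fuchs\hyp{}van de Graaf inequalities, which is the standard route to gentle measurement bounds.

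First, I would fix a purification $\ket{\psi}\in\X\otimes\Y$ of $\rho$, living on some auxiliary space $\Y$ of suitable dimension. Applying the projection $\Pi\otimes\I_{\Y}$ to $\ket{\psi}$ produces a subnormalized vector whose squared norm is exactly $\tr\bigl((\Pi\otimes\I)\ket{\psi}\bra{\psi}\bigr)=\tr(\Pi\rho)=\ip{\Pi}{\rho}\ge 1-\epsilon$. Normalizing, set
\begin{equation*}
  \ket{\phi} \;=\; \frac{(\Pi\otimes\I_{\Y})\ket{\psi}}{\sqrt{\ip{\Pi}{\rho}}},
\end{equation*}
and observe that tracing out $\Y$ on $\ket{\phi}\bra{\phi}$ gives precisely $\rho_\Pi=\Pi\rho\Pi/\ip{\Pi}{\rho}$, so $\ket{\phi}$ is a purification of $\rho_\Pi$ on the same spaces as $\ket{\psi}$.

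Next I would compute the inner product between these two purifications:
\begin{equation*}
  \abs{\ip{\psi}{\phi}} \;=\; \frac{\bra{\psi}(\Pi\otimes\I)\ket{\psi}}{\sqrt{\ip{\Pi}{\rho}}} \;=\; \sqrt{\ip{\Pi}{\rho}} \;\ge\; \sqrt{1-\epsilon}.
\end{equation*}
By Uhlmann's theorem, the fidelity satisfies $\fid(\rho,\rho_\Pi)\ge\abs{\ip{\psi}{\phi}}\ge\sqrt{1-\epsilon}$. Invoking the Fuchs\hyp{}van de Graaf inequality $\dtr(\rho,\sigma)\le\sqrt{1-\fid(\rho,\sigma)^2}$ then yields
\begin{equation*}
  \dtr(\rho,\rho_\Pi) \;\le\; \sqrt{1-(1-\epsilon)} \;=\; \sqrt{\epsilon},
\end{equation*}
which is the claimed $O(\sqrt{\epsilon})$ bound.

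There is not really a hard step here: the only mildly delicate point is confirming that $\ket{\phi}$ is a legitimate purification of $\rho_\Pi$ on the same ambient space $\X\otimes\Y$, so that Uhlmann's theorem gives a genuine lower bound on the fidelity between $\rho$ and $\rho_\Pi$ rather than between $\rho$ and some other state. If one prefers to avoid quoting Uhlmann and Fuchs\hyp{}van de Graaf, an alternative is to split $\rho$ directly as $\Pi\rho\Pi+\Pi\rho\Pi^{\perp}+\Pi^{\perp}\rho\Pi+\Pi^{\perp}\rho\Pi^{\perp}$ and bound each off\hyp{}diagonal block using Cauchy\hyp{}Schwarz in the form $\norm{\Pi\rho\Pi^{\perp}}_1\le\sqrt{\tr(\Pi\rho)\tr(\Pi^{\perp}\rho)}\le\sqrt{\epsilon}$, together with $\abs{1-1/\ip{\Pi}{\rho}}\le\epsilon/(1-\epsilon)$; summing the pieces again gives a $O(\sqrt{\epsilon})$ trace\hyp{}distance bound. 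The purification route is shorter and I would present that one.
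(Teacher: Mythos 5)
Your proof is correct. Note, however, that the paper does not prove this lemma at all: it states it as a known result, citing Winter~\cite{Win99}, so there is no in-paper argument to compare against. Your purification route is the standard textbook proof of the gentle measurement lemma: the key facts check out --- $\ket{\phi}$ is indeed a purification of $\rho_\Pi$ since $\tr_{\Y}\bigl((\Pi\otimes\I)\ket{\psi}\bra{\psi}(\Pi\otimes\I)\bigr)=\Pi\rho\Pi$, the overlap $\bra{\psi}(\Pi\otimes\I)\ket{\psi}=\ip{\Pi}{\rho}$ is real and nonnegative so $\abs{\ip{\psi}{\phi}}=\sqrt{\ip{\Pi}{\rho}}$, Uhlmann's theorem correctly gives a \emph{lower} bound on fidelity from any single pair of purifications, and the Fuchs--van de Graaf inequality $\dtr\le\sqrt{1-\fid^2}$ (with the convention $\fid(\rho,\sigma)=\norm{\sqrt{\rho}\sqrt{\sigma}}_1$) closes the argument with the clean constant $\sqrt{\epsilon}$, which is stronger than the stated $O(\sqrt{\epsilon})$. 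Your fallback sketch is also sound: writing $\rho-\rho_\Pi$ in blocks with respect to $\Pi,\Pi^\perp$, the bound $\norm{\Pi\rho\Pi^\perp}_1\le\norm{\Pi\sqrt{\rho}}_2\,\norm{\sqrt{\rho}\,\Pi^\perp}_2\le\sqrt{\epsilon}$ together with $\abs{1-1/\ip{\Pi}{\rho}}\le\epsilon/(1-\epsilon)$ gives $O(\sqrt{\epsilon})$ with slightly worse constants but no appeal to Uhlmann; this elementary version generalizes more readily to the non-projective measurement operators appearing in other forms of the gentle measurement lemma. The only degenerate case is $\ip{\Pi}{\rho}=0$ (possible only at $\epsilon=1$), where $\rho_\Pi$ is undefined but the claimed bound is vacuous anyway.
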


A simple corollary of the gentle measurement lemma is the following.

\begin{lemma}
  \label{lem:gapped}
  Let $H \in \Pos(\X)$ be a Hamiltonian that has a $0$-eigenspace $S$
  and all other eigenvalues are at least $\Delta$.
  Let $\Pi$ be the projection onto the $0$-eigenspace $S$.
  Let $\rho\in \Density(\X)$ be a quantum state.
  If $\tr_\rho (H) \le \epsilon$, then
  \begin{equation*}
    \dtr(\rho, \rho_\Pi) \le (\sqrt{\epsilon/\Delta}),
  \end{equation*}
  where $\rho_\Pi = \Pi \rho \Pi / \ip{\Pi}{\rho}$.
\end{lemma}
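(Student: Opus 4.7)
The plan is to reduce the statement to a direct application of the gentle measurement lemma (Lemma~\ref{lem:gentle}), with the correct $\epsilon$ obtained from the spectral gap $\Delta$.

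First I would exploit the spectral assumption on $H$. Since $H$ is positive semidefinite with $0$-eigenspace $S$ and all other eigenvalues at least $\Delta$, the spectral decomposition gives the operator inequality
\begin{equation*}
  H \succcurlyeq \Delta (\I - \Pi),
\end{equation*}
where $\Pi$ is the projector onto $S$. Taking $\tr_\rho$ of both sides and using the hypothesis $\tr_\rho(H) \le \epsilon$, I obtain
\begin{equation*}
  \epsilon \,\ge\, \tr_\rho(H) \,\ge\, \Delta\, \tr_\rho(\I - \Pi) \,=\, \Delta\, \bigl(1 - \ip{\Pi}{\rho}\bigr),
\end{equation*}
which rearranges to $\ip{\Pi}{\rho} \ge 1 - \epsilon/\Delta$.

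Next I would invoke Lemma~\ref{lem:gentle} with parameter $\epsilon' = \epsilon/\Delta$ applied to the projection $\Pi$ and the state $\rho$. The conclusion of the gentle measurement lemma then yields
\begin{equation*}
  \dtr(\rho, \rho_\Pi) \,\le\, O\bigl(\sqrt{\epsilon/\Delta}\bigr),
\end{equation*}
with $\rho_\Pi = \Pi \rho \Pi / \ip{\Pi}{\rho}$, exactly as stated.

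There is essentially no obstacle here; the only thing to check carefully is the operator inequality $H \succcurlyeq \Delta(\I - \Pi)$, which is immediate once $H$ is diagonalized in an eigenbasis compatible with the decomposition $\X = S \oplus S^\perp$. The rest is bookkeeping to feed the right parameter into Lemma~\ref{lem:gentle}.
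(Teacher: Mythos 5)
Your proposal is correct and matches the paper's intent exactly: the paper presents Lemma~\ref{lem:gapped} as a simple corollary of the gentle measurement lemma, and your argument---deriving $H \succcurlyeq \Delta(\I-\Pi)$ from the spectral gap, concluding $\ip{\Pi}{\rho} \ge 1-\epsilon/\Delta$, and then invoking Lemma~\ref{lem:gentle}---is precisely that reduction.
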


\subsection{Quantum Multi-Player Proof Systems}

An $r$-prover quantum interactive proof system consists of a verifier
$V$ and $r$ provers $P_1, P_2, \ldots, P_r$.
The verifier $V$ possesses a private quantum register $\reg{V}$
consisting of $q_V \in \poly$ qubits, each prover $P_i$ possesses a
private quantum register $\reg{P}_i$.
There are also $r$ message registers $\reg{M}_i$ each of which
contains $q_M \in \poly$ qubits.
Before the interaction starts, all qubits in register $\reg{V}$ are
initialized to $\ket{0}$.
The proves are not allowed to communicate after the interaction
starts.
The interaction consists of $m \in \poly$ alternating turns of the
applications of the verifier and the provers' circuits.
The verifier of an $m$-turn quantum multi-prover interactive proof
system is described by a tuple $V = \bigl( V^i
\bigr)_{i=1}^{\ceil{(m+1)/2}}$, where each $V^i$ is a polynomial-time
uniformly generated quantum circuit from input $x$ acting on registers
$\reg{V}, \reg{M}_1, \ldots, \reg{M}_r$.
For $l\in [r]$, the prover $P_l$ for an $m$-turn quantum multi-prover
interactive proof system is described by a tuple $W^l = \bigl( W^{i,l}
\bigr)_{i=1}^{\ceil{m/2}}$.
For odd $m$, the provers start by choosing a state $\ket{\psi} \in
\V\otimes \M$ then the verifier and the provers applies the circuits
$V^1$, $\bigotimes_{l=1}^r W^{1,l}$, $\ldots$, $\bigotimes_{l=1}^r
W^{(m-1)/2,l}$, $V^{(m+1)/2}$ in order.
For even $m$, the verifier initializes all qubits in $\reg{M}_l$ to
$\ket{0}$ and the verifier and provers apply the circuit $V^1$,
$\bigotimes_{l=1}^r W^{1,l}$, $\ldots$, $\bigotimes_{l=1}^r
W^{m/2,l}$, $V^{(m+2)/2}$.
The verifier then measures the first qubit in $\reg{V}$, accepts if
the outcome is $1$ and rejects otherwise.
The maximum acceptance probability $\MAP(V)$ for a given verifier
circuit $V$ is the maximum of the verifier's acceptance probability
for all possible quantum provers described by $\bigl( W^l
\bigr)_{l=1}^r$ and all correctly initialized state.

A language $A\in \QMIP*(r,m,c,s)$ if and only if there exists an
$r$-prover, $m$-turn quantum interactive proof systems with verifier
$V$ such that the following conditions hold:
\begin{enumerate}
\item (\textit{Completeness}) If $x\in A$, $\MAP(V) \ge c$,
\item (\textit{Soundness}) If $x\not\in A$, $\MAP(V) \le s$.
\end{enumerate}
Define \QMIP* to be $\QMIP*(\poly,\poly,2/3,1/3)$.
If the exchanged messages in a quantum multi-prover interactive proof
system are classical while the provers may still share entanglement
before the interaction starts, the corresponding complexity class will
be denoted as \MIP*.
It is now known that $\QMIP* = \MIP*$~\cite{RUV13}.

\subsection{Nonlocal Games and Extended Nonlocal Games}

Multi-player games have similar structure as multi-prover interactive
proofs with the main difference in the length of the messages.
In multi-prover interactive proofs, messages can consist of
polynomially many bit (or qubits), while in multi-player games, the
messages consist of logarithmic number of bits.
In a multi-player one-round game, a referee communicates with two or
more players classically in one round.
The referee samples questions and sends them out to the players and
expects to receive answers back.
He then accepts or rejects based on the questions and answers.
The players are allowed to agree on a strategy before the game starts,
but cannot communicate with each other during the game.

Let there be $r$ players, $(1), (2), \ldots, (r)$.
Let $\Gamma^{(i)}$ be a finite set of questions for player $(i)$ and
$\Lambda^{(i)}$ be a finite set of possible answers from player $(i)$.
An $r$-player game is defined by a distribution $\pi$ over
$\prod_{i=1}^r \Gamma^{(i)}$ and a function $V: \prod_{i=1}^r
\Lambda^{(i)} \times \prod_{i=1}^r \Gamma^{(i)} \rightarrow [0,1]$,
specifying the acceptance probability.
By a convexity argument, it suffices to consider the strategy of
classical players described by functions $f^{(i)} : \Gamma^{(i)}
\rightarrow \Lambda^{(i)}$.
The value of the strategy is the acceptance probability
\begin{equation*}
  \omega = \E_{q\sim \pi} V (a(q),q),
\end{equation*}
for $q=(q_1,q_2,\ldots,q_r)$ distributed according to $\pi$ and $a(q)
= \bigl( f^{(1)}(q_1), f^{(2)}(q_2), \ldots, f^{(r)}(q_r) \bigr)$.
The classical value of the game is the maximum of the values of all
classical strategies.

In a nonlocal game, the players are allowed to share an arbitrary
entangled state before the game starts.
A quantum strategy $\Strategy$ for the nonlocal game is described by
the shared state $\rho$, the measurements $\bigl\{ M^{(i)}_{q_i}
\bigr\}$ that player $(i)$ performs when the question is $q_i\in
\Gamma^{(i)}$.
The value of the strategy is defined as
\begin{equation*}
  \omega^*(\Strategy) = \E_{q\sim \pi} \sum_a \biggl[ \tr_\rho
  \bigl(\bigotimes_{i=1}^r M^{(i),a_i}_{q_i} \bigr) V(a,q) \biggr],
\end{equation*}
for $a=(a_1,a_2,\ldots,a_r)$ and $q=(q_1,q_2,\ldots,q_r)$.
The nonlocal value of the game is the supremum of the values of all
quantum strategies.

Extended nonlocal games were introduced in~\cite{JMRW16} as an
extension of the nonlocal game.
It is originally defined in a multi-player setting while we found that
it is also interesting to consider extended nonlocal games with a
single player.
An extended nonlocal game generalizes the nonlocal game in the
following sense.
The referee possesses a quantum register $\reg{S}$ but otherwise
samples and sends out questions similarly as in a nonlocal game.
The players can choose an initial state shared between the referee's
register and their private quantum systems.
The referee performs measurement on his quantum register and may
depend his acceptance also on the measurement outcome.
More formally, an $r$-player extended nonlocal game is defined by a
distribution $\pi$ over $\prod_{i=1}^r \Gamma^{(i)}$ and a function
$V: \prod_{i=1}^r \Lambda^{(i)} \times \prod_{i=1}^r \Gamma^{(i)}
\rightarrow [0,\I]$, where $[0,\I]$ is the set
\begin{equation*}
  \{ V\in \Pos(\S) \mid V \le \I \}.
\end{equation*}
A quantum strategy $\Strategy$ for the extended nonlocal game is
described by the shared state $\rho$, the measurements $\bigl\{
M^{(i)}_{q_i} \bigr\}$ that player $(i)$ performs when the question is
$q_i\in \Gamma^{(i)}$.
The value of the strategy is defined as
\begin{equation*}
  \omega^*(\Strategy) = \E_{q\sim \pi} \sum_a \biggl[ \tr_\rho
  \bigl(\bigotimes_{i=1}^r M^{(i),a_i}_{q_i} \otimes V(a,q) \bigr)
  \biggr],
\end{equation*}
for $a=(a_1,a_2,\ldots,a_r)$ and $q=(q_1,q_2,\ldots,q_r)$.
The value of the game is the supremum of the values of all quantum
strategies.

\subsection{Pauli Operators and Stabilizer Codes}

Let $\Pauli_n$ be the group generated by the $n$-fold tensor product
of Pauli operators
\begin{equation*}
  \Pauli_n = \biggl\{ e^{i\phi} \bigotimes_{j=1}^n D_j, \text{ for
  }\phi \in \{0,\pi/2,\pi,3\pi/2\},\; D_j \in \{I,X,Y,Z\} \biggr\}.
\end{equation*}
The weight of a Pauli operator in $\Pauli_n$ is the number of
non-identity tensor factors in it.
An Pauli operator is of $XZ$-form if each tensor factor is one of $I$,
$X$ and $Z$.
For Pauli operator $P$ of the form $(-1)^{\tau} \bigotimes_{j} D_j$,
the bit $\tau\in \{0,1\}$ is called the sign bit of the operator.

We present several relevant definitions and facts about the stabilizer
codes and refer the reader to the thesis of Gottesman~\cite{Got97} for
more details.
A stabilizer $\Stabilizer$ is an abelian subgroup of $\Pauli_n$ not
containing $-\I$.
It provides a succinct description of a corresponding subspace of
$(\complex^2)^{\otimes n}$---the simultaneous $+1$-eigenspace of the
operators in the stabilizer.
Let $C(\Stabilizer)$ be the centralizer of $\Stabilizer$ in
$\Pauli_n$, the set of operators in $\Pauli_n$ that commutes with all
operators in $\Stabilizer$.
The distance of the stabilizer code is $d$ if there is no operator of
weight less than $d$ in $C(\Stabilizer)-\Stabilizer$.
The logical $X$ and $Z$ operators $L_X$ and $L_Z$ are a pair of
anti-commuting operators in $C(\Stabilizer)-\Stabilizer$.

As a simple example, the operators $XXX$, $ZZI$ and $IZZ$ generate a
stabilizer for the $\GHZ$ state.
Operators $XXXX$ and $ZZZZ$ generate the stabilizer for the four-qubit
quantum error detecting code, which has distance two and encodes two
qubits.

\subsection{Distance Measures of Quantum Strategies}

\label{sec:dist}

We review the state-dependent distance measure $\dis_\rho$ and
consistency measure $\con_\rho$ of quantum measurements discussed in
detail in~\cite{Ji16}.
We also introduce the notion that an operator approximately stabilizes
a state and prove several related facts.
These concepts play an important role in analyzing the behavior of the
entangled players and are used extensively in our proofs.

Consider the situation where two players, Alice and Bob, share a
quantum state $\rho\in\Density(\X\otimes \Y)$ and Alice measures
$\reg{X}$ with either $\{M_0^a\}$ or $\{M_1^a\}$.
The post-measurement states are
\begin{equation}
  \label{eq:postms}
  \rho_i = \sum_a \ket{a}\bra{a} \otimes M_i^a \rho (M_i^a)^*,
\end{equation}
for $i = 0,1$, respectively, depending on which measurement is
performed.
By the monotonicity of the trace distance, the difference is bounded
by $\dtr(\rho_0, \rho_1)$.
As a special case, if Bob measures on $\reg{Y}$ and then the referee
makes the decision, the acceptance probabilities will differ by at
most $\dtr(\rho_0, \rho_1)$.
The state-dependent distance defined next provides a bound on the
distance $\dtr(\rho_0, \rho_1)$.
The claim is stated in Lemma~\ref{lem:drho} whose proof can be found
in~\cite{Ji16}.

\begin{definition}
  For two quantum measurements $M_i = \bigl\{ M_i^a \bigr\}$ with
  $i=0,1$ that have the same set of possible outcomes, define
  \begin{equation}
    \label{eq:drho}
    \dis_\rho(M_0,M_1) \defeq \Bigl[ \sum_a \norm{M_0^a-M_1^a}_\rho^2
    \Bigr]^{1/2}.
  \end{equation}
  More explicitly,
  \begin{equation}
    \label{eq:drho2}
    \dis_\rho(M_0,M_1) = \biggl[2-2\Re \sum_a \tr_\rho \bigl(
    (M_0^a)^* M_1^a \bigr) \biggr]^{1/2}.
  \end{equation}
\end{definition}

\begin{lemma}
  \label{lem:drho}
  Let $M_i = \bigl\{ M_i^a \bigr\}$ for $i=0,1$ be two quantum
  measurements with the same set of possible outcomes, and $\rho_i$ be
  the post-measurement states in Eq.~\eqref{eq:postms}.
  Then
  \begin{equation*}
    \dtr(\rho_0, \rho_1) \le \dis_\rho (M_0, M_1).
  \end{equation*}
\end{lemma}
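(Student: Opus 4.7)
My plan is to reduce the mixed-state comparison to a pure-state comparison via a Naimark/Stinespring dilation, so that the inequality comes out of a short direct calculation. By the Naimark reduction noted in the preliminaries I assume without loss of generality that the $M_i^a$ are projectors, so $\sum_a (M_i^a)^* M_i^a = \I$ and
\begin{equation*}
V_i \;=\; \sum_a \ket{a}_{\reg{A}} \otimes M_i^a, \qquad i=0,1,
\end{equation*}
are isometries from $\X$ into $\mathcal{A}\otimes\X$, where $\reg{A}$ is a fresh outcome register (factors of $\I_\Y$ are suppressed throughout). A direct check shows that $\rho_i$ equals $V_i \rho V_i^*$ followed by the dephasing channel on $\reg{A}$ that erases the cross-terms $\ket{a}\bra{b}$ with $a\ne b$.

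From here I would invoke monotonicity of the trace distance twice. Dephasing is a channel, so
\begin{equation*}
\dtr(\rho_0,\rho_1) \;\le\; \dtr(V_0\rho V_0^*,\,V_1\rho V_1^*).
\end{equation*}
Fixing any purification $\ket{\psi}$ of $\rho$ on an extended space makes $V_i\ket{\psi}$ a purification of $V_i\rho V_i^*$, and combining monotonicity under the partial trace with the standard pure-state bound $\dtr(\ket{\phi}\bra{\phi},\ket{\chi}\bra{\chi}) \le \|\ket{\phi}-\ket{\chi}\|$ for unit vectors yields
\begin{equation*}
\dtr(V_0\rho V_0^*,\,V_1\rho V_1^*) \;\le\; \|V_0\ket{\psi}-V_1\ket{\psi}\|.
\end{equation*}

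The last step is a one-line computation: since the $\ket{a}$ are orthonormal,
\begin{equation*}
\|V_0\ket{\psi}-V_1\ket{\psi}\|^2 \;=\; \sum_a \|(M_0^a-M_1^a)\ket{\psi}\|^2 \;=\; \sum_a \|M_0^a-M_1^a\|_\rho^2 \;=\; \dis_\rho(M_0,M_1)^2,
\end{equation*}
matching the definition of $\dis_\rho$ exactly, and chaining the three inequalities gives the lemma. I do not anticipate a genuine obstacle; the only point requiring care is register bookkeeping, since the measurements act on $\X$ while $\rho$ lives on $\X\otimes\Y$, but the convention on $\tr_\rho$ recalled in the preliminaries absorbs this issue and makes the passage from $\|(M_0^a-M_1^a)\ket{\psi}\|^2$ to $\|M_0^a-M_1^a\|_\rho^2$ immediate.
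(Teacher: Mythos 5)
Your proof is correct, and the comparison here is slightly unusual: the paper does not prove Lemma~\ref{lem:drho} at all, deferring to~\cite{Ji16}. The standard direct argument given there proceeds without dilations: writing
\begin{equation*}
  \rho_0 - \rho_1 = \sum_a \ket{a}\bra{a} \otimes \bigl[ (M_0^a - M_1^a)\,\rho\,(M_0^a)^* + M_1^a\,\rho\,(M_0^a - M_1^a)^* \bigr],
\end{equation*}
one bounds each term by the trace-norm Cauchy--Schwarz inequality $\norm{A\rho B^*}_1 \le \norm{A}_\rho \norm{B}_\rho$, then applies Cauchy--Schwarz over $a$ together with $\sum_a \norm{M_i^a}_\rho^2 = 1$ to get $\norm{\rho_0-\rho_1}_1 \le 2\,\dis_\rho(M_0,M_1)$, i.e.\ the lemma after halving. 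Your route replaces this calculation with three monotonicity steps (dephasing the outcome register, tracing out the purifying system, and the pure-state bound $\dtr(\ket{\phi}\bra{\phi},\ket{\chi}\bra{\chi}) \le \norm{\ket{\phi}-\ket{\chi}}$, which indeed holds since $1-\abs{c}^2 \le 2(1-\Re c)$ for $c = \bra{\phi}\chi\rangle$) plus a one-line orthonormality computation. Each inequality in your chain checks out: $V_i\rho V_i^*$ dephased on $\reg{A}$ is exactly $\rho_i$ from Eq.~\eqref{eq:postms}, $V_i\ket{\psi}$ is a unit vector purifying $V_i\rho V_i^*$, and the final norm identity matches Eq.~\eqref{eq:drho} because the $M_i^a$ act on $\X$ tensored with identity. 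Your version is conceptually cleaner and makes visible where the $\ell_2$ structure of $\dis_\rho$ comes from; the direct version avoids purifications and is what generalizes most easily to the POVM variant of $\dis_\rho$ defined via an infimum over Kraus decompositions.

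One blemish should be fixed: the opening appeal to Naimark's theorem is both unnecessary and, as phrased, not actually licensed. Passing to projective $M_i^a$ changes the space on which $\rho_i$ and $\dis_\rho$ are computed, so the claimed ``without loss of generality'' would itself require an argument relating the dilated quantities back to the original ones. Fortunately your proof never uses projectivity: the only property needed is the completeness relation $\sum_a (M_i^a)^* M_i^a = \I$, which holds for the measurement operators of an arbitrary quantum measurement and already makes each $V_i$ an isometry. Delete the Naimark sentence and the proof stands as written, in full generality.
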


A direct corollary of the above lemma is that replacing measurement
$M_0$ with $M_1$ in a strategy for a nonlocal game changes the value
by at most $\dis_\rho (M_0, M_1)$.
As this claim works for the general quantum measurement, it
generalizes to the special cases such as positive-operator valued
measures (POVM), projective measurements and measurements
corresponding to reflections.

In analysis of games, the post-measurement state is not important and
hence POVMs are the suitable formulation for quantum measurements.
It is technically convenient to adopt the following definition for the
state-dependent distance between to POVMs $M_i = \bigl\{ M_i^a
\bigr\}$ with $i=0,1$,
\begin{equation*}
  \dis_\rho (M_0, M_1) \defeq \inf_{N_i = \{ N_i^a \}} \dis_\rho (\{
  N_0^a \} , \{N_1^a \})
\end{equation*}
where the infimum is taken over all possible measurement operators
$N_i^a$ such that $M_i^a = (N_i^a)^* (N_i^a)$ for all $a$ and $i=0,1$.
The freedom to use arbitrary measurement operators, instead of
$\sqrt{M_i^a}$, provides simpler ways to upper bound the distance.

We focus on the case of reflections which will be extensively used in
later sections.

For two reflections $R_0, R_1$, let
\begin{equation*}
  R_i^a = \frac{\I+(-1)^a R_i}{2}
\end{equation*}
be the projective measurement operators correspond to $R_i$.
Define
\begin{equation*}
  \dis_\rho(R_0,R_1) \defeq \dis_\rho(\{R_0^a\}, \{R_1^a\}) = \bigl[1
  - \Re \tr_\rho \bigl( R_0 R_1 \bigr) \bigr]^{1/2}.
\end{equation*}

It is easy to verify that $\dis_\rho$ satisfy the triangle inequality.

\begin{lemma}
  Let $M_0$, $M_1$, $M_2$ be three measurements on state $\rho$.
  Then
  \begin{equation*}
    \dis_\rho(M_0,M_2) \le \dis_\rho(M_0,M_1) + \dis_\rho(M_1,M_2).
  \end{equation*}
\end{lemma}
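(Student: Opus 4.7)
The plan is to reduce the claim to Minkowski's inequality in $\ell^2$. The key observation is that $\norm{\cdot}_\rho$, being a seminorm arising from the semi-inner-product $\ip{\cdot}{\cdot}_\rho$ defined in the preliminaries, already satisfies the pointwise triangle inequality
\begin{equation*}
\norm{A - C}_\rho \le \norm{A - B}_\rho + \norm{B - C}_\rho
\end{equation*}
for any operators $A, B, C \in \Lin(\X)$. This is the only analytic content I need, and everything beyond it is a bookkeeping step.

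First I would apply the pointwise triangle inequality above with $A = M_0^a$, $B = M_1^a$, $C = M_2^a$ for each outcome $a$, obtaining
\begin{equation*}
\norm{M_0^a - M_2^a}_\rho \le \norm{M_0^a - M_1^a}_\rho + \norm{M_1^a - M_2^a}_\rho.
\end{equation*}
Then I would view the collections $\bigl( \norm{M_i^a - M_j^a}_\rho \bigr)_a$ as nonnegative real sequences indexed by outcomes $a$, and apply Minkowski's inequality for the $\ell^2$ norm to the pointwise bound just derived, which yields
\begin{equation*}
\Bigl[ \sum_a \norm{M_0^a - M_2^a}_\rho^2 \Bigr]^{1/2} \le \Bigl[ \sum_a \norm{M_0^a - M_1^a}_\rho^2 \Bigr]^{1/2} + \Bigl[ \sum_a \norm{M_1^a - M_2^a}_\rho^2 \Bigr]^{1/2}.
\end{equation*}
By the defining identity in Eq.~\eqref{eq:drho}, this is exactly $\dis_\rho(M_0, M_2) \le \dis_\rho(M_0, M_1) + \dis_\rho(M_1, M_2)$.

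The only subtlety, and the one place that calls for any care, is the POVM variant of $\dis_\rho$ defined through the infimum over measurement operators $N_i^a$ with $(N_i^a)^* N_i^a = M_i^a$, because the two infima in $\dis_\rho(M_0, M_1)$ and $\dis_\rho(M_1, M_2)$ are taken independently and may select different square roots of the common element $M_1^a$. To handle this cleanly I would invoke the Naimark reduction already built into the preliminaries and assume without loss of generality that every measurement is projective; then each $M_i^a$ admits the canonical choice $N_i^a = M_i^a$, the independent-infimum issue disappears, and the Minkowski argument above applies verbatim. I do not anticipate any real obstacle here: the whole lemma is a direct combination of the seminorm triangle inequality and Minkowski's inequality, in line with the parenthetical remark in the text that the claim is easy to verify.
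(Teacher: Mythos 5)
Your core argument is correct and is surely the verification the paper intends when it states the lemma without proof as ``easy to verify'': for measurements given by measurement operators as in Eq.~\eqref{eq:drho}, the pointwise seminorm triangle inequality $\norm{M_0^a-M_2^a}_\rho \le \norm{M_0^a-M_1^a}_\rho + \norm{M_1^a-M_2^a}_\rho$ followed by Minkowski's inequality in $\ell^2$ over the outcome index $a$ gives the claim immediately, and nothing more is needed in that setting.

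The gap is in your handling of the POVM variant, which you correctly flag as the one delicate point but then resolve incorrectly. Passing to projective measurements via Naimark and taking the canonical operators $N_i^a = M_i^a$ does not make the independent-infimum issue disappear: the POVM version of $\dis_\rho$ is still defined as an infimum over \emph{all} operators with $(N_i^a)^* N_i^a = M_i^a$, and for a projection $M_1^a$ these are exactly the operators $U^a M_1^a$ with $U^a$ unitary, so the infimum remains nontrivial. Restricting to the canonical choice only bounds the left-hand side $\dis_\rho(M_0,M_2)$ from above by the canonical-representative distances, whereas the right-hand side $\dis_\rho(M_0,M_1)+\dis_\rho(M_1,M_2)$ is an infimum that may be attained at non-canonical representations; the inequality chain therefore does not close. (You would also need to check that the Naimark dilation leaves all three infimum distances unchanged, which you do not do.) The correct repair is an alignment argument rather than a WLOG: pick $(N_0,N_1)$ and $(N_1',N_2)$ nearly optimal for the two terms on the right; since $(N_1'^a)^* N_1'^a = (N_1^a)^* N_1^a = M_1^a$, polar decomposition in finite dimension gives unitaries $U^a$ with $N_1'^a = U^a N_1^a$, and since left multiplication by a unitary preserves the seminorm, $\norm{UM}_\rho^2 = \tr_\rho(M^* U^* U M) = \norm{M}_\rho^2$, the operators $(U^a)^* N_2^a$ form an admissible representation of $M_2$ satisfying
\begin{equation*}
  \Bigl[\sum_a \norm{N_1^a - (U^a)^* N_2^a}_\rho^2\Bigr]^{1/2}
  = \Bigl[\sum_a \norm{N_1'^a - N_2^a}_\rho^2\Bigr]^{1/2}.
\end{equation*}
Applying your Minkowski step to the triple $\bigl(N_0^a, N_1^a, (U^a)^* N_2^a\bigr)$, which now shares a common representation of $M_1$, closes the argument and yields the triangle inequality for the infimum-based distance as well.
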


Next, we recall the consistency measure for two quantum measurements
that act on two \emph{different} quantum systems.

\begin{definition}
  Let $\rho \in \Density(\X \otimes \Y)$ be a quantum state on
  $\reg{X}, \reg{Y}$, let $M= \bigl\{ M^a \bigr\}$, $N= \bigl\{ N^a
  \bigr\}$ be two POVMs on registers $\reg{X}, \reg{Y}$ respectively
  having the same set of possible outcomes.
  Define the consistency of $M$, $N$ on state $\rho$ as
  \begin{equation}
    \label{eq:cons}
    \operatorname{C}_\rho(M,N) \defeq \sum_a \tr_\rho (M^a\otimes N^a).
  \end{equation}
  $M$ and $N$ are called $\epsilon$-consistent on state $\rho$ if
  $\con_\rho(M,N) \ge 1 - \epsilon$.
\end{definition}

The consistency of measurements puts strong structural constraints on
the strategies of nonlocal game, which greatly simplify the analysis.
In this paper, we will be mostly interested in the consistency of two
reflections.
For two reflections $R$, $S$, let $\{R^a\}$, $\{S^a\}$ be their
corresponding projective measurements.
Define
\begin{equation}
  \label{eq:consref}
  \con_\rho(R,S) \defeq \con_\rho(\{R^a\}, \{S^a\}) = \frac{1+\tr_\rho(R\otimes
    S)}{2}.
\end{equation}
The condition $\tr_\rho(R\otimes S) \approx_\epsilon 1$, or
equivalently, $R,S$ are $O(\epsilon)$-consistent on $\rho$, can be
thought of as a quantitative way of saying that $\rho$ is
approximately stabilized by $R\otimes S$.

More generally, we introduce the notion of $\epsilon$-stabilizer as
follows.
It is crucial for later applications that we define this concept not
only for reflections but for the more general notion of contractions.

\begin{definition}
  Let $R\in \Lin(\X)$ be a contraction and $\rho \in \Density(\X)$ be
  a quantum state.
  We say that $R$ $\epsilon$-stabilizes $\rho$ if
  \begin{equation*}
    \Re \tr_{\rho} R \ge 1 - \epsilon.
  \end{equation*}
\end{definition}

\begin{lemma}
  \label{lem:approx-stab}
  Let $R_0,R_1 \in \Lin(\X)$ be two contractions such that $\Re
  \tr_\rho R_i = 1 - \epsilon_i$ for $i=0, 1$.
  Then the product $R_0R_1$ satisfies
  \begin{equation*}
    \Re\tr_{\rho} (R_0R_1) \ge 1 - \epsilon,
  \end{equation*}
  for $\epsilon = \bigl( \epsilon_0^{1/2} + \epsilon_1^{1/2}
  \bigr)^2$.
  As a special case, if $\rho$ is $O(\epsilon)$-stabilized by both
  $R_0$ and $R_1$, it is also $O(\epsilon)$-stabilized by $R_0R_1$.
\end{lemma}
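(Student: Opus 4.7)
The plan is to reduce the product statement to a bilinear estimate by algebraically splitting $I - R_0R_1$ into pieces controlled by $I - R_0$ and $I - R_1$, and then bound the cross-term by a Cauchy--Schwarz step in the $\tr_\rho$ inner product recalled in Section~\ref{sec:dist}.

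First I would use the elementary identity
\begin{equation*}
  I - R_0 R_1 \;=\; (I - R_0) + (I - R_1) - (I - R_0)(I - R_1),
\end{equation*}
which is a direct expansion. Taking $\Re\tr_\rho$ of both sides and using the hypotheses $\Re\tr_\rho R_i = 1 - \epsilon_i$, we obtain
\begin{equation*}
  1 - \Re\tr_\rho(R_0 R_1) \;=\; \epsilon_0 + \epsilon_1 \;-\; \Re\tr_\rho\bigl((I-R_0)(I-R_1)\bigr).
\end{equation*}
So it suffices to upper bound $\bigl|\tr_\rho\bigl((I-R_0)(I-R_1)\bigr)\bigr|$ by $2\sqrt{\epsilon_0\epsilon_1}$.

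Next I would apply the Cauchy--Schwarz inequality recalled in the preliminaries. Writing $\tr_\rho\bigl((I-R_0)(I-R_1)\bigr) = \tr\bigl(\rho^{1/2}(I-R_0)(I-R_1)\rho^{1/2}\bigr)$ and setting $X = (I-R_0)^*\rho^{1/2}$, $Y = (I-R_1)\rho^{1/2}$, we get
\begin{equation*}
  \bigl|\tr_\rho\bigl((I-R_0)(I-R_1)\bigr)\bigr|
  \;\le\; \bigl[\tr_\rho\bigl((I-R_0)(I-R_0)^*\bigr)\,
  \tr_\rho\bigl((I-R_1)^*(I-R_1)\bigr)\bigr]^{1/2}.
\end{equation*}

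The final step uses that $R_i$ are contractions, so $R_iR_i^* \preccurlyeq I$ and $R_i^*R_i \preccurlyeq I$. Hence $(I-R_0)(I-R_0)^* \preccurlyeq 2I - R_0 - R_0^*$ and $(I-R_1)^*(I-R_1) \preccurlyeq 2I - R_1 - R_1^*$, which give
\begin{equation*}
  \tr_\rho\bigl((I-R_i)(I-R_i)^*\bigr) \;\le\; 2\,(1 - \Re\tr_\rho R_i) \;=\; 2\epsilon_i.
\end{equation*}
Plugging in yields $|\tr_\rho((I-R_0)(I-R_1))| \le 2\sqrt{\epsilon_0\epsilon_1}$, and combining with the identity above produces
\begin{equation*}
  \Re\tr_\rho(R_0R_1) \;\ge\; 1 - \epsilon_0 - \epsilon_1 - 2\sqrt{\epsilon_0\epsilon_1} \;=\; 1 - \bigl(\sqrt{\epsilon_0}+\sqrt{\epsilon_1}\bigr)^2,
\end{equation*}
as required. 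There is no real obstacle here; the only subtlety is remembering that the $R_i$ are only contractions (not reflections), so the bound $R_iR_i^* \preccurlyeq I$ rather than equality is what allows the $2\epsilon_i$ estimate to go through cleanly.
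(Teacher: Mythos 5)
Your proposal is correct and follows essentially the same route as the paper: the identity $\I - R_0R_1 = (\I-R_0)+(\I-R_1)-(\I-R_0)(\I-R_1)$ is just a rearrangement of the paper's Eq.~\eqref{eq:approx-stab-1}, and both arguments then bound $\abs{\tr_\rho\bigl((\I-R_0)(\I-R_1)\bigr)}$ by Cauchy--Schwarz and use the contraction property to get $\tr_\rho\bigl((\I-R_i)(\I-R_i)^*\bigr) \le 2\epsilon_i$. Your write-up merely makes explicit the operator inequalities $(\I-R_i)(\I-R_i)^* \preccurlyeq 2\I - R_i - R_i^*$ that the paper leaves implicit in the phrase ``using the conditions for contractions.''
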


\begin{proof}
  We first prove that
  \begin{equation}
    \label{eq:approx-stab-1}
    \Re\tr_{\rho} (\I-R_0) (\I-R_1) \ge -2
    \epsilon_0^{1/2}\epsilon_1^{1/2}.
  \end{equation}
  In fact, by Cauchy-Schwarz inequality, the absolute value of the
  left hand side is at most
  \begin{equation*}
    \Bigl[ \tr_{\rho} \bigl((\I-R_0)(\I-R_0)^* \bigr) \tr_\rho
    \bigl((\I-R_1)^*(\I-R_1) \bigr) \Bigr]^{1/2},
  \end{equation*}
  which is bounded by $2\epsilon_0^{1/2}\epsilon_1^{1/2}$ using the
  conditions for contractions $R_0, R_1$.
  By Eq.~\eqref{eq:approx-stab-1},
  \begin{equation*}
    \Re \tr_{\rho} (R_0R_1) \ge \Re \tr_{\rho} R_0 + \Re
    \tr_{\rho} R_1 - 1 - 2\epsilon_0^{1/2}\epsilon_1^{1/2} = 1 -
    \bigl( \epsilon_0^{1/2} + \epsilon_1^{1/2} \bigr)^2.
  \end{equation*}
\end{proof}

\begin{lemma}
  \label{lem:approx-stab-2}
  Let $R\in \Lin(\X)$ be a contraction that $\epsilon$-stabilizes
  state $\rho$, then for any contraction $S$,
  \begin{equation*}
    \Re\tr_\rho (SR) \approx_{\sqrt{\epsilon}} \Re\tr_\rho(S).
  \end{equation*}
\end{lemma}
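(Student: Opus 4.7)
The plan is to write $\Re\tr_\rho(SR) - \Re\tr_\rho(S) = -\Re\tr_\rho\bigl(S(\I-R)\bigr)$ and then bound the right-hand side using the Cauchy–Schwarz inequality with respect to the semi-inner-product $\ip{\cdot}{\cdot}_\rho$ introduced earlier in the preliminaries. Concretely, $\tr_\rho\bigl(S(\I-R)\bigr) = \ip{S^*}{\I-R}_\rho$, so its absolute value is at most $\norm{S^*}_\rho \cdot \norm{\I-R}_\rho$.

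The first factor is easy: since $S$ is a contraction, $SS^*\preccurlyeq \I$, hence $\norm{S^*}_\rho^2 = \tr_\rho(SS^*) \le 1$. The second factor is where the hypothesis on $R$ enters. Expanding,
\begin{equation*}
  \norm{\I-R}_\rho^2 = \tr_\rho\bigl((\I-R)^*(\I-R)\bigr) = \tr_\rho(\I) - \tr_\rho(R) - \tr_\rho(R^*) + \tr_\rho(R^*R).
\end{equation*}
Because $R$ is a contraction, $R^*R\preccurlyeq \I$ and so $\tr_\rho(R^*R)\le 1$. Combining this with $\Re\tr_\rho(R)\ge 1-\epsilon$ (which controls both $\tr_\rho(R)$ and its conjugate $\tr_\rho(R^*)$), I get $\norm{\I-R}_\rho^2 \le 2\epsilon$.

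Putting the two bounds together yields $\bigl|\tr_\rho\bigl(S(\I-R)\bigr)\bigr|\le \sqrt{2\epsilon}$, and in particular its real part is bounded by $\sqrt{2\epsilon} = O(\sqrt{\epsilon})$, which gives the claim. There is no genuine obstacle here; the only thing to be careful about is to use the contraction property of $S$ to bound $\norm{S^*}_\rho$, since one cannot simply invoke $\Re\tr_\rho(S)\approx 1$ (no such hypothesis is assumed for $S$). Once this is recognized, the estimate follows in a couple of lines from Cauchy–Schwarz, exactly in the spirit of Lemma~\ref{lem:approx-stab}.
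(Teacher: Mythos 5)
Your proposal is correct and follows essentially the same route as the paper: the paper's proof is exactly the Cauchy--Schwarz bound $\abs{\tr_\rho\bigl(S(\I-R)\bigr)} \le \bigl[\tr_\rho(SS^*)\cdot\tr_\rho\bigl((\I-R)^*(\I-R)\bigr)\bigr]^{1/2} \le O(\sqrt{\epsilon})$, with the two factors controlled by the contraction property of $S$ and the $\epsilon$-stabilization of $\rho$ by $R$, respectively. Your write-up merely makes explicit the expansion $\norm{\I-R}_\rho^2 \le 2\epsilon$ that the paper leaves implicit, so there is nothing to correct.
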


\begin{proof}
  The absolute value of the difference of the two terms in the
  equation is upper bounded by the Cauchy-Schwarz inequality as
  \begin{equation*}
    \abs{\tr_\rho \bigl( S(\I - R) \bigr)} \le \Bigl[ \tr_\rho
    (SS^*) \cdot \tr_\rho \bigl( (\I-R)^*(\I-R) \bigr) \Bigr]^{1/2}
    \le O(\sqrt{\epsilon}).
  \end{equation*}
\end{proof}

\begin{lemma}
  \label{lem:approx-stab-3}
  Let $R_0, R_1, R_3 \in \Lin(\X)$ be contractions such that
  \begin{equation*}
    \Re\tr_\rho (R_0R_1) \approx_\epsilon 1,\quad \Re\tr_\rho (R_1^* R_2)
    \approx_\epsilon 1,
  \end{equation*}
  then
  \begin{equation*}
    \Re\tr_\rho (R_0R_2) \approx_ \epsilon 1.
  \end{equation*}
\end{lemma}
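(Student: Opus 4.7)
The plan is to recast the three trace quantities as values of the semi-inner product $\ip{\cdot}{\cdot}_\rho$ defined in this subsection. Setting $a = R_0^*$, $b = R_1$, and $c = R_2$, a direct computation shows
\begin{equation*}
\ip{a}{b}_\rho = \tr_\rho(R_0 R_1), \quad \ip{b}{c}_\rho = \tr_\rho(R_1^* R_2), \quad \ip{a}{c}_\rho = \tr_\rho(R_0 R_2).
\end{equation*}
Since each $R_i$ is a contraction, $\norm{a}_\rho, \norm{b}_\rho, \norm{c}_\rho \leq 1$, so the claim reduces to the following geometric statement about three vectors in the pre-Hilbert space $(\Lin(\X), \ip{\cdot}{\cdot}_\rho)$: if two adjacent pairwise inner products have real parts close to one, then so does the third.

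First I would pin down the norms by running Cauchy--Schwarz on each hypothesis in reverse. From $1 - O(\epsilon) \leq \Re\ip{a}{b}_\rho \leq \norm{a}_\rho \norm{b}_\rho \leq 1$, both $\norm{a}_\rho$ and $\norm{b}_\rho$ lie in $[1 - O(\epsilon), 1]$; the second hypothesis similarly forces $\norm{c}_\rho \geq 1 - O(\epsilon)$. Next, expanding $\norm{a - b}_\rho^2 = \norm{a}_\rho^2 + \norm{b}_\rho^2 - 2\Re\ip{a}{b}_\rho$ and using these norm bounds gives $\norm{a - b}_\rho^2 \leq O(\epsilon)$, and analogously $\norm{b - c}_\rho^2 \leq O(\epsilon)$, so the triangle inequality for $\norm{\cdot}_\rho$ produces $\norm{a - c}_\rho^2 \leq O(\epsilon)$. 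Finally, the polarization identity
\begin{equation*}
2\Re\ip{a}{c}_\rho = \norm{a}_\rho^2 + \norm{c}_\rho^2 - \norm{a - c}_\rho^2 \geq 2 - O(\epsilon)
\end{equation*}
yields $\Re\tr_\rho(R_0 R_2) \geq 1 - O(\epsilon)$, which is the claim.

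The main subtlety to overcome is why this chain of steps delivers a linear bound in $\epsilon$ rather than the loose $O(\sqrt{\epsilon})$ that a one-shot Cauchy--Schwarz produces. Writing $\tr_\rho(R_0 R_2) - \tr_\rho(R_1^* R_2) = \tr_\rho((R_0 - R_1^*) R_2)$ and directly applying Cauchy--Schwarz in $\ip{\cdot}{\cdot}_\rho$ only gives $|\tr_\rho((R_0 - R_1^*)R_2)| \leq O(\sqrt{\epsilon})$, because it bounds $\norm{c}_\rho$ by $1$ rather than exploiting that $\norm{c}_\rho^2 \geq 1 - O(\epsilon)$. The gain comes from the preliminary step of establishing near-unit norms for all three vectors; with that information in hand, the polarization identity (rather than a one-shot Cauchy--Schwarz on a single difference term) extracts the full linear dependence on $\epsilon$.
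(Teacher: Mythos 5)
Your proof is correct, and it takes a genuinely different route from the paper's. The paper argues at the operator level: it inserts $R_1R_1^*$ between $R_0$ and $R_2$, uses Lemma~\ref{lem:approx-stab} to get $\Re\tr_\rho(R_0R_1R_1^*R_2)\approx_\epsilon 1$, and then controls the insertion error $\tr_\rho\bigl(R_0(\I-R_1R_1^*)R_2\bigr)$ by a Cauchy--Schwarz in which \emph{both} factors, $\tr_\rho\bigl(R_0(\I-R_1R_1^*)R_0^*\bigr)$ and $\tr_\rho\bigl(R_2^*(\I-R_1R_1^*)R_2\bigr)$, are individually $O(\epsilon)$ --- that is where its linear bound comes from. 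You instead work entirely in the semi-inner-product space $(\Lin(\X),\ip{\cdot}{\cdot}_\rho)$, treating $R_0^*$, $R_1$, $R_2$ as vectors: the hypotheses force near-unit norms, the expansions $\norm{a-b}_\rho^2=\norm{a}_\rho^2+\norm{b}_\rho^2-2\Re\ip{a}{b}_\rho$ give $O(\epsilon)$ \emph{squared} distances, the triangle inequality chains them, and polarization converts $\norm{a-c}_\rho^2\le O(\epsilon)$ back into $\Re\ip{a}{c}_\rho\ge 1-O(\epsilon)$. All steps check out, including the identification $\ip{R_0^*}{R_1}_\rho=\tr_\rho(R_0R_1)$ and the use of $R_0R_0^*\preccurlyeq\I$ for contractions; the key insight that polarization works with squared distances (so the $O(\sqrt{\epsilon})$ distance bound still yields a linear result) is exactly right, though your diagnosis of where a one-shot Cauchy--Schwarz loses is slightly off: the loss there comes from using the first power $\norm{a-b}_\rho=O(\sqrt{\epsilon})$, not from bounding $\norm{c}_\rho$ by $1$. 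What each approach buys: yours is self-contained, avoids Lemma~\ref{lem:approx-stab} entirely, and makes the mechanism behind the linear dependence transparent (it is really a statement about chains of nearly-parallel unit vectors in a pre-Hilbert space); the paper's version stays in the operator-product idiom used throughout Section~\ref{sec:dist}, and its insertion trick $\I\approx R_1R_1^*$ is the same device reused in Lemmas~\ref{lem:overline}, \ref{lem:derived} and \ref{lem:overline2}, so the paper gains uniformity of technique across the later rigidity arguments.
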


\begin{proof}
  By Lemma~\ref{lem:approx-stab}, we have
  \begin{equation*}
    \Re\tr_\rho (R_0 R_1 R_1^* R_2) \approx_\epsilon 1.
  \end{equation*}
  It therefore suffices to prove that
  \begin{equation*}
    \abs{\tr_\rho \bigl( R_0 (\I - R_1R_1^*) R_2 \bigr)} \le O(\epsilon).
  \end{equation*}
  By the Cauchy-Schwarz inequality, the above term is bounded by
  \begin{equation*}
    \Bigl[ \tr_\rho \bigl( R_0R_0^* - R_0R_1R_1^*R_0^* \bigr) \tr_\rho
    \bigl(R_2^*R_2 - R_2^*R_1R_1^*R_2 \bigr) \Bigr]^{1/2}.
  \end{equation*}

  Applying Lemma~\ref{lem:approx-stab} again to the condition for $R_0,
  R_1$, we have
  \begin{equation*}
    \tr_\rho \bigl( R_0R_1R_1^*R_0^* \bigr) \approx_\epsilon 1.
  \end{equation*}
  By the fact that $R_0$ is a contraction, it implies that
  \begin{equation*}
    \tr_\rho \bigl( R_0R_0^* - R_0R_1R_1^*R_0^* \bigr) \le O(\epsilon).
  \end{equation*}
  A similar bound applies to the second term in the square root and
  this completes the proof.
\end{proof}

In the analysis, it is important to have a quantity characterizing the
approximate commutativity and anti-commutativity of two reflections.
Two reflections $R_0, R_1$ is said to be $\epsilon$-commutative on
state $\rho$ if
\begin{equation}
  \Re\tr_\rho \bigl( R_0R_1R_0R_1 \bigr) \ge 1 - \epsilon,
\end{equation}
and $\epsilon$-anti-commutative on $\rho$ if
\begin{equation}
  \Re\tr_\rho \bigl( R_0R_1R_0R_1 \bigr) \le \epsilon - 1.
\end{equation}

We prove the following lemma which roughly says that
$\epsilon$-anti-commutative reflections are close to $X,Z$
respectively up to a change of basis.
It will be used multiple times in later analysis to establish rigidity
theorems.
\begin{lemma}
  \label{lem:XZ}
  Let $\rho\in \Density(\X)$ be a quantum state and $R_0,R_1\in
  \Herm(\X)$ be two traceless reflections such that
  \begin{equation*}
    \Re \tr_\rho (R_0R_1R_0R_1) \approx_\epsilon -1.
  \end{equation*}
  There exists a unitary $V\in \Lin(\X,\B \otimes \X')$ such that $R_1
  = V^* (Z\otimes \I) V$ and
  \begin{equation*}
    \Re \tr_\rho \bigl( R_0 \, V^* (X\otimes \I) V \bigr)
    \approx_\epsilon 1.
  \end{equation*}
  Equivalently,
  \begin{equation*}
    \dis_\rho \bigl( R_0, V^* (X\otimes \I) V \bigr) \le
    O(\sqrt{\epsilon}).
  \end{equation*}
  The choice of $V$ is independent of state $\rho$ and is determined
  solely by the operators $R_0, R_1$.
\end{lemma}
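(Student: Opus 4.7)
The plan is to apply Jordan's lemma to the pair of reflections $R_0,R_1$, then exhibit $V$ explicitly block by block, and finally show that the anti-commutation hypothesis forces the ``wrong'' blocks to carry very little of $\rho$.

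By Jordan's lemma applied to the two Hermitian reflections $R_0,R_1$, there is an orthogonal decomposition
\begin{equation*}
  \X \;=\; \X_{\mathrm{com}} \;\oplus\; \bigoplus_k \X_k,
\end{equation*}
that is invariant under both $R_0$ and $R_1$, such that on each two-dimensional block $\X_k$ the operators act irreducibly (in particular they do not commute), while on $\X_{\mathrm{com}}$ they commute and hence are simultaneously $\pm 1$-diagonal. Inside each $\X_k$ I choose a basis in which $R_1\restriction_{\X_k}=Z$ and $R_0\restriction_{\X_k}=\cos\theta_k\,Z+\sin\theta_k\,X$; by conjugating with $Z$ if necessary I may take $\sin\theta_k\ge0$, and irreducibility forces $\sin\theta_k>0$. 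On $\X_{\mathrm{com}}$ the operator $R_1$ is a traceless diagonal sign pattern, so its $+1$ and $-1$ eigenspaces have equal dimension; I pair them arbitrarily into auxiliary two-dimensional blocks on which $R_1=Z$. This yields a unitary $V\colon\X\to\B\otimes\X'$, determined solely by $R_0,R_1$ (not by $\rho$), such that $VR_1V^*=Z\otimes\I$, which gives $R_1=V^*(Z\otimes\I)V$ as required.

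The second step is to exploit the anti-commutation hypothesis. A block-wise computation gives $R_0R_1R_0R_1\restriction_{\X_{\mathrm{com}}}=\I$ and $R_0R_1R_0R_1\restriction_{\X_k}=\cos(2\theta_k)\,\I-i\sin(2\theta_k)\,Y_k$, whose real part has trace $2\cos(2\theta_k)=2(1-2\sin^2\theta_k)$ on the block. Writing $p=\tr_\rho(\Pi_{\X_{\mathrm{com}}})$ and $w_k=\tr_\rho(\Pi_{\X_k})$ (so $p+\sum_k w_k=1$), the hypothesis $\Re\tr_\rho(R_0R_1R_0R_1)\ge -1-O(\epsilon)$ gives
\begin{equation*}
  p+\sum_k w_k\cos(2\theta_k)\;\le\;-1+O(\epsilon),
\end{equation*}
which rearranges to $p+\sum_k w_k\cos^2\theta_k\le O(\epsilon)$; in particular both $p=O(\epsilon)$ and $\sum_k w_k\cos^2\theta_k=O(\epsilon)$.

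The third step is to compute $\Re\tr_\rho\bigl(R_0\,V^*(X\otimes\I)V\bigr)$ block-wise. On $\X_k$, $(\cos\theta_k Z+\sin\theta_k X)X=\sin\theta_k\,\I+\cos\theta_k\,(iY)$, and since $iY$ is anti-Hermitian its real trace against the (Hermitian) reduced state vanishes, contributing $\sin\theta_k\,w_k$. Using $\sin\theta_k\ge\sqrt{1-\cos^2\theta_k}\ge 1-\cos^2\theta_k$, the total 2D contribution is at least $\sum_k w_k-\sum_k w_k\cos^2\theta_k=(1-p)-O(\epsilon)$. On the paired blocks arising from $\X_{\mathrm{com}}$ the operator $V R_0 V^*(X\otimes\I)$ has norm at most $1$, so its real trace against $\rho$ restricted there is at least $-p$. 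Combining,
\begin{equation*}
  \Re\tr_\rho\bigl(R_0\,V^*(X\otimes\I)V\bigr)\;\ge\;1-2p-\sum_k w_k\cos^2\theta_k\;\ge\;1-O(\epsilon),
\end{equation*}
which by definition of $\dis_\rho$ gives $\dis_\rho(R_0,V^*(X\otimes\I)V)\le O(\sqrt{\epsilon})$.

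The point that requires the most care is the global definition of $V$: on the commuting part the pairing of $\pm1$-eigenvectors of $R_1$ is not canonical, and one needs the tracelessness of $R_1$ even to make such a pairing exist; this is the only place the assumption $\tr R_1=0$ is used. The other delicate point is that we must obtain the tight bound $1-O(\epsilon)$ rather than $1-O(\sqrt{\epsilon})$ in the inner product, which is precisely what the estimate $\sqrt{1-x}\ge 1-x$ for $x\in[0,1]$ delivers once we have reduced to diagonal data $\{\theta_k,w_k\}$ via the Jordan decomposition.
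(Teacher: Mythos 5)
Your proposal is correct and takes essentially the same route as the paper's proof: Jordan's lemma to obtain a simultaneous block decomposition with $R_1 = Z$ and $R_0 = \cos\theta_k\, Z + \sin\theta_k\, X$ on each block, the anticommutation hypothesis forcing $\sum_k w_k \cos^2\theta_k \le O(\epsilon)$, and the elementary estimate $\sin\theta_k \ge \sin^2\theta_k$ to conclude $\Re\tr_\rho\bigl(R_0\, V^*(X\otimes\I)V\bigr) \ge 1 - O(\epsilon)$. The only difference is bookkeeping on the commuting part: the paper uses tracelessness of both reflections to pair the one-dimensional Jordan blocks into $2\times 2$ blocks of the same form (with $\theta_l \in \{0,\pi\}$), so that the single bound $\E_l \sin\theta_l \ge \E_l \sin^2\theta_l$ covers everything, whereas you pair only by the $R_1$-eigenvalues (using tracelessness of $R_1$ alone) and bound the commuting part's weight $p$ by $O(\epsilon)$ directly --- a harmless variation that in fact shows tracelessness of $R_0$ is not needed.
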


The proof of the lemma relies on the Jordan's Lemma.

\begin{lemma}[Jordan's Lemma~\cite{Jor75}]
  For any two reflections $R_0,R_1$ acting on a finite dimensional
  Hilbert space $\H$, there exists a decomposition of $\H$ into
  orthogonal one- and two-dimensional subspaces invariant under both
  $R_0$ and $R_1$.
\end{lemma}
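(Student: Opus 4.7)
The plan is to identify a Hermitian operator that commutes with both reflections, decompose the space into its eigenspaces (each automatically invariant under both $R_0$ and $R_1$), and then refine each piece into one- and two-dimensional invariant subspaces.

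First, I would observe that the symmetrized product $T = R_0 R_1 + R_1 R_0$ is Hermitian and commutes with both $R_0$ and $R_1$. The commutation is a one-line check using $R_0^2 = R_1^2 = \I$: for instance, $R_0 T = R_1 + R_0 R_1 R_0 = T R_0$, and the same computation with indices swapped handles $R_1$. Decomposing $\H$ as the orthogonal direct sum of the eigenspaces of $T$ therefore yields a coarse decomposition into subspaces already invariant under both reflections. Note that the spectrum of $T$ is contained in $[-2,2]$, since $T = U + U^*$ for the unitary $U = R_0 R_1$.

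Next, I would handle the boundary eigenvalues $\pm 2$ separately. On the $+2$-eigenspace of $T$ the identity $(R_0 - R_1)^2 = 2\I - T$ forces $R_0 - R_1 = 0$ (as a Hermitian operator whose square is zero), so $R_0 = R_1$ there; likewise $R_0 = -R_1$ on the $-2$-eigenspace. In both cases the two reflections share an eigendecomposition, and I can split the subspace into the joint one-dimensional eigenspaces of $R_0$.

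For an interior eigenspace with eigenvalue $2t$, $\abs{t} < 1$, the restricted unitary $U = R_0 R_1$ satisfies $U + U^{-1} = 2t\, \I$, so its spectrum is contained in $\{e^{i\theta}, e^{-i\theta}\}$ with $\cos\theta = t$. The identity $R_0 U R_0 = U^{-1}$ shows that $R_0$ maps the $e^{i\theta}$-eigenspace $V_+$ isometrically onto the $e^{-i\theta}$-eigenspace $V_-$. Choosing an orthonormal basis $\{v_j\}$ of $V_+$, I would form the two-dimensional subspaces $W_j = \mathrm{span}\{v_j, R_0 v_j\}$. These are mutually orthogonal (because $R_0$ is an isometry and $V_+ \perp V_-$) and exhaust the $2t$-eigenspace. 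Each $W_j$ is invariant under $R_0$ by construction and under $R_1 = R_0 U$ because $R_1 v_j = e^{i\theta} R_0 v_j \in W_j$ and $R_1 R_0 v_j = R_0 U^{-1} v_j \cdot \text{(wait, recompute)} = e^{-i\theta} v_j \in W_j$, using $R_0 U R_0 = U^{-1}$.

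The main obstacle is more bookkeeping than conceptual: one must carefully verify that the passage from the eigenbasis of $U$ on $V_+$ to the pair $\{v_j, R_0 v_j\}$ produces a globally orthogonal decomposition, and that the boundary eigenspaces of $T$ really do collapse to the case $R_0 = \pm R_1$ so that only the trivial one-dimensional refinement is needed there. Once this is in place, assembling the one-dimensional pieces from the $\pm 2$-eigenspaces of $T$ with the two-dimensional $W_j$'s from the interior eigenspaces produces the claimed decomposition of $\H$.
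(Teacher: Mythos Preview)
The paper does not actually prove Jordan's Lemma; it is merely stated with a citation to Jordan's original 1875 paper and then invoked as a black box in the proof of Lemma~\ref{lem:XZ}. So there is no ``paper's own proof'' to compare against.

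Your argument is a correct and standard proof of the result. The key idea of using the Hermitian operator $T = R_0 R_1 + R_1 R_0$ (equivalently, the real part of the unitary $U = R_0 R_1$) to obtain a simultaneous invariant decomposition is exactly the right one, and your treatment of the boundary eigenvalues $\pm 2$ via $(R_0 \mp R_1)^2 = 2\I \mp T$ is clean. For the interior eigenspaces, the identity $R_0 U R_0 = U^{-1}$ correctly shows that $R_0$ swaps the $e^{\pm i\theta}$-eigenspaces of $U$, and the two-dimensional blocks $W_j = \mathrm{span}\{v_j, R_0 v_j\}$ are then manifestly invariant under $R_0$; for $R_1$-invariance you should simply write $R_1 = R_0 U$, so $R_1 v_j = e^{i\theta} R_0 v_j$ and $R_1(R_0 v_j) = U^{-1} v_j = e^{-i\theta} v_j$ (your parenthetical ``wait, recompute'' is unnecessary noise---the direct route $R_1 R_0 = U^{-1}$ is cleanest). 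The orthogonality of the $W_j$ follows because $R_0$ is a unitary carrying an orthonormal basis of $V_+$ to one of $V_-$, and $V_+ \perp V_-$ since $e^{i\theta} \neq e^{-i\theta}$ for $\abs{t}<1$. With that tidied up, the proof is complete.
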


\begin{proof}[Proof of Lemma~\ref{lem:XZ}]
  Using Jordan's Lemma and the condition that the reflections are
  traceless, one get simultaneous $2$-by-$2$ block diagonalizations of
  $R_0$ and $R_1$ such that each $2$-by-$2$ block is a reflection
  having both $\pm 1$ eigenvalues.
  Hence, there is a unitary operator $V\in \Lin(\X, \B\otimes \X')$
  such that\footnote{The traceless condition simplifies the discussion
    here.
    Otherwise, the dimension of $\X$ may not be even and one has to
    take $V$ to be an isometry instead of a unitary operator, which
    will in turn make later discussions more complicated.}
  \begin{equation*}
    R_1 = V^* (Z\otimes I) V,
  \end{equation*}
  and
  \begin{equation*}
    R_0 = V^* \sum_l \left[
      \begin{pmatrix}
        \cos\theta_l & \phantom{-}\sin\theta_l\\
        \sin\theta_l & -\cos\theta_l
      \end{pmatrix}
      \otimes \ket{l}\bra{l}\right] V,
  \end{equation*}
  where $\theta_l \in [0,\pi]$ and $l$ is the index of the
  two-dimensional invariant subspaces obtained by Jordan's lemma.

  By a direct calculation, we have
  \begin{equation*}
    R_0 R_1 R_0 R_1 = V^* \sum_l \left[
      \begin{pmatrix}
        1 - 2\sin\theta_l^2 & -2\cos\theta_l\sin\theta_l\\
        2\cos\theta_l\sin\theta_l & 1 - 2\sin\theta_l^2
      \end{pmatrix}
      \otimes \ket{l}\bra{l}\right] V.
  \end{equation*}

  The condition, $\Re\tr_\rho \bigl( R_0R_1R_0R_1 \bigr)
  \approx_\epsilon -1$, then simplifies to
  \begin{equation}
    \label{eq:XZ-1}
    \E_{l} \sin\theta_l^2 \approx_\epsilon 1,
  \end{equation}
  where the expectation $\E_l$ is over the probability distribution
  \begin{equation*}
    \Pr(l) = \tr_{\rho} \bigl[ V^* (\I\otimes \ket{l}\bra{l}) V\bigr].
  \end{equation*}

  We then have,
  \begin{equation*}
    \Re\tr_\rho \bigl( R_0 \, V^*(X\otimes \I)V \bigr) = \E_l \sin\theta_l
    \ge \E_l \sin\theta_l^2 \approx_\epsilon 1,
  \end{equation*}
  which completes the proof by the definition of $\dis_\rho$ for two
  reflections.
\end{proof}

\subsection{Rigidity Using Extended Nonlocal Games}

Rigidity of a nonlocal game states that if the players win the game
with probability that is close to optimal, then they have to
approximately follow the optimal strategy up to an isometry, including
the initialization of a shared state and the application of the
quantum measurement for each question.
It has found a wide range of applications in self-testing of quantum
apparatus~\cite{MY98,DMMS00,MS12,MYS12} and quantum multi-player
interactive proofs~\cite{RUV13,Ji16,NV15}.

We demonstrate that it is easier both to construct extended nonlocal
games that have rigidity properties and to establish rigidity for
them.
Extended nonlocal games are in some sense a variant of the nonlocal
games with one honest player, who honestly follows a prescribed
measurement strategy.
This may explain the reason behind the advantages of the use of the
extended nonlocal games.

In~\cite{Ji16}, the CHSH game was revisited in the framework of
stabilizers starting from the fact that the EPR state is stabilized by
$XX$ and $ZZ$.
There, one need to rotate the basis for one of players by $45$ degree,
a mysterious twist that one has to perform in the case for nonlocal
games.
Consider the following extended nonlocal game for the EPR stabilizer
between the referee and one player that literally translates the
generators of stabilizer into random questions.
The referee possesses a single qubit register $\reg{B}$ and samples a
random bit $q\in \{0,1\}$, and send it to the player.
He then measures $X$ or $Z$ on his qubit for $q=0$ or $q=1$
respectively and accepts if and only if the measurement outcome equals
the answer bit $a$ from the player.
The game value for this simple extended nonlocal game is one, which
can be achieved by a player who shares the EPR state and measures
$X,Z$ for question $0,1$ respectively.
The strategy of the player can be described by the tuple $(\rho,
\hat{X}, \hat{Z})$ where $\rho\in \Density(\B\otimes \R)$ is the state
the player chooses, and $\hat{X}$, $\hat{Z} \in \Herm(\R)$ are
traceless reflections that describe the player's two-outcome
measurements for question $0$ and $1$ respectively.
The value of the strategy is
\begin{equation*}
  \frac{1}{2}\sum_{D\in \{X,Z\}} \tr_\rho \frac{\I + D \otimes
    \hat{D}}{2}.
\end{equation*}
If the value is at least $1 - \epsilon$, it then follows that
\begin{equation*}
  \tr_\rho \bigl( D \otimes \hat{D} \bigr) \approx_\epsilon 1,
\end{equation*}
for $D = X, Z$.
By Lemma~\ref{lem:approx-stab}, we have
\begin{equation*}
  \Re \tr_\rho \bigl( XZXZ\otimes \hat{X} \hat{Z} \hat{X} \hat{Z}
  \bigr) \approx_\epsilon 1,
\end{equation*}
which simplifies to
\begin{equation*}
  \Re \tr\rho \bigl( \hat{X} \hat{Z} \hat{X} \hat{Z} \bigr)
  \approx_\epsilon -1.
\end{equation*}
Lemma~\ref{lem:XZ} then establishes the rigidity for this game.

A similar construction based on the stabilizer for the GHZ state is
used in Sec.~\ref{sec:proof} to check the correct propagation of the
provers' actions.

\section{Stabilizer Games, Redefined}
\label{sec:stab}

\subsection{Stabilizer Games}

Stabilizer games were first defined in~\cite{Ji16} as an extension of
the CHSH game\cite{CHSH69}.
In this section, we introduce yet another type of stabilizer games.
Their advantage over the stabilizer games defined in~\cite{Ji16} is
that they have perfect quantum strategies, a property that is crucial
to obtain perfect completeness.
The analysis of this new stabilizer game is also arguably simpler.

Consider the stabilizer in Fig.~\ref{fig:8code}.
It is an eight-qubit code encoding two logical qubits and has distance
two, and we will refer to it as the eight-qubit code in this paper.
The operators $g_1, g_2, \ldots, g_6$ are the generators for the
stabilizer.
The operators $L_X$ and $L_Z$ are the logical $X,Z$ operators for one
of the logical qubits.
One can derive this code by concatenating the $[4,2,2]$ code
stabilized by $X^{\otimes 4}, Z^{\otimes 4}$ and the $[2,1,1]$ code
stabilized by $Y^{\otimes 2}$.
We note that the construction of the stabilizer game generalizes to
other stabilizer codes with generators of $XZ$-form.
It suffices for our purpose to consider the game defined by the
eight-qubit code only.

The stabilizer of the eight-qubit code consists of $64$ operators,
$\prod_{i=1}^6 g_i^{\mu_i}$, for $\mu_i \in \{0,1\}$.
There are $32$ of them that have $XZ$-form.
These are the operators when one and only one of $\mu_1, \mu_2$ is
$1$.
Let $\Xi$ be the set of these $XZ$-form operators.
Examples of operators in $\Xi$ contain $g_1, g_2$ and $g_{1,3}=g_1g_3,
g_{2,3}=g_2g_3$ in Fig.~\ref{fig:8code-XZ}.
Note that we have listed $g_2$ before $g_1$ in Fig.~\ref{fig:8code-XZ}
on purpose for reasons to be clear later.

\begin{figure}[!htb]
  \centering
  \begin{tabular}[c]{c|c@{}c@{}c@{}c@{}c@{}c@{}c@{}c}
    \hline\hline
    Name & \multicolumn{8}{c}{Operator}\\
    \hline
    $g_1$ & $X$ & $X$ & $X$ & $X$ & $X$ & $X$ & $X$ & $X$\\
    $g_2$ & $X$ & $Z$ & $X$ & $Z$ & $X$ & $Z$ & $X$ & $Z$\\
    $g_3$ & $Y$ & $Y$ & $I$ & $I$ & $I$ & $I$ & $I$ & $I$\\
    $g_4$ & $I$ & $I$ & $Y$ & $Y$ & $I$ & $I$ & $I$ & $I$\\
    $g_5$ & $I$ & $I$ & $I$ & $I$ & $Y$ & $Y$ & $I$ & $I$\\
    $g_6$ & $I$ & $I$ & $I$ & $I$ & $I$ & $I$ & $Y$ & $Y$\\
    \hline
    $L_X$ & $X$ & $X$ & $X$ & $X$ & $I$ & $I$ & $I$ & $I$\\
    $L_Z$ & $X$ & $Z$ & $I$ & $I$ & $X$ & $Z$ & $I$ & $I$\\
    \hline\hline

  \end{tabular}
  \caption{An eight-qubit stabilizer code used in the stabilizer game
    in Fig.~\ref{fig:stab-game}.}
  \label{fig:8code}
\end{figure}

\begin{figure}[!htb]
  \centering
  \begin{tabular}[c]{c|r@{}c@{}c@{}c@{}c@{}c@{}c@{}c@{}c}
    \hline\hline
    Name & & \multicolumn{8}{c}{Operator}\\
    \hline
    $g_2$ & & $X$ & $Z$ & $X$ & $Z$ & $X$ & $Z$ & $X$ & $Z$\\
    $g_1$ & & $X$ & $X$ & $X$ & $X$ & $X$ & $X$ & $X$ & $X$\\
    \hline
    $g_{1,3}$ & ${-}$ & $Z$ & $Z$ & $X$ & $X$ & $X$ & $X$ & $X$ & $X$\\
    $g_{2,3}$ & & $Z$ & $X$ & $X$ & $Z$ & $X$ & $Z$ & $X$ & $Z$\\
    \hline\hline

  \end{tabular}
  \caption{Four examples of $XZ$-form operators in the stabilizer of
    the eight-qubit code}
  \label{fig:8code-XZ}
\end{figure}

The stabilizer game considered in this paper is an eight-player XOR
game as specified in Fig.~\ref{fig:stab-game}.
We abuse the notion and use the Pauli operators $X,Z$ as labels for
the questions.
Intuitively, an $X$ question requests that the player measures Pauli
$X$ operator on his system and reply with the outcome and similarly
for $Z$ questions.

\begin{figure}[!htb]
  \begin{shaded}
    \ul{Stabilizer Game}\\[1em]
    Let $\Xi$ be the subset of stabilizer operators of $XZ$-form for
    the eight-qubit code.
    The stabilizer game for the eight-qubit code is the eight-player
    nonlocal game defined as follows.
    \begin{enumerate}
    \item The referee selects one of the $32$ operators from $\Xi$
      uniformly at random.
      Let $D^{(i)} \in \{X,Z\}$, $s\in \{0,1\}$ be the $i$-th tensor
      factor and the sign of the chosen operator respectively.
    \item For $i\in [8]$, the referee sends $D^{(i)}$ to player $(i)$
      and receive a bit $a^{(i)}$ back;
    \item Accepts if $\bigoplus_{i=1}^8 a^{(i)} = s$ and rejects
      otherwise.
    \end{enumerate}
  \end{shaded}
  \caption{Stabilizer game defined by the eight-qubit code in
    Fig.~\ref{fig:8code}.}
  \label{fig:stab-game}
\end{figure}

A strategy of the stabilizer is specified by the state $\rho$ shared
between the eight players and the measurement the players perform for
question indexed by $X,Z$.
We will use $\hat{D}^{(i)}$ to denote the reflections that correspond
to the measurement player $(i)$ performs for question $D\in\{X,Z\}$.
Without loss of generality, we assume that $\hat{D}^{(i)}$ are
traceless reflections.
When there is no ambiguity, the player index in the superscript may be
omitted.

We prove the following rigidity theorem for the stabilizer game.

\begin{theorem}
  \label{thm:stab-game}
  The nonlocal value of stabilizer game in Fig.~\ref{fig:stab-game} is
  $1$.
  Furthermore, the game has the following rigidity property.
  Let $\Strategy = \bigl(\rho, \bigl\{ \hat{D}^{(i)} \bigr\} \bigr)$
  be a strategy for the stabilizer game where $\rho$ is the state
  shared between the players before the game starts and
  $\hat{D}^{(i)}\in \Herm(\R_i)$ is the traceless reflection
  corresponding to the measurements the player $(i)$ performs for
  question $D \in \{X,Z\}$.
  If the value of strategy $\Strategy$ is at least $1 - \epsilon$,
  then there are unitary operators $V_i\in \Lin \bigl(\R_i, \B_i
  \otimes \R_i' \bigr)$ for $i\in [8]$, such that the following
  properties hold
  \begin{itemize}
  \item For all $i\in [8]$, $\hat{Z}^{(i)} = \check{Z}^{(i)}$ and
    \begin{equation*}
      \dis_\rho \bigl( \hat{X}^{(i)}, \check{X}^{(i)} \bigr) \le
      O(\sqrt{\epsilon}),
    \end{equation*}
    where $\check{D}^{(i)} = V_i^*(D\otimes \I) V_i$ for $D\in
    \{X,Z\}$.

  \item Let $\Pi$ be the projection to the code space of the eight
    qubit code, and let $V$ be the unitary operator
    $\bigotimes_{i=1}^8 V_i$, then
    \begin{equation*}
      \ip{\Pi\otimes \I}{V\rho V^*} \ge 1-O(\epsilon),
    \end{equation*}
    where $\Pi$ acts on the eight qubits in registers $\bigl(
    \reg{B}_i \bigr)_{i=1}^8$.
  \end{itemize}
\end{theorem}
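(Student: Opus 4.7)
The plan is to establish completeness via an explicit honest strategy based on the eight-qubit code and then prove rigidity in two stages. In the first stage I derive, for every player $i$, approximate $O(\epsilon)$-anti-commutativity of $\hat{X}^{(i)}$ and $\hat{Z}^{(i)}$ on $\rho$ from carefully chosen four-fold products of the $\hat{g}$'s for $g\in\Xi$; in the second stage I use the unitaries produced by Lemma~\ref{lem:XZ} together with Lemma~\ref{lem:approx-stab} to lift the game-induced stabilization relations on $\rho$ into approximate stabilization of $V\rho V^{*}$ by the entire $64$-element Pauli stabilizer $S$ of the code, and then average. Completeness is immediate: the eight players share any codeword of the eight-qubit code with player~$(i)$ holding qubit $i$ and measuring in the $X$ or $Z$ basis as the question dictates, and since every $g\in\Xi$ is a $+1$-stabilizer, the XOR of the outcomes is deterministically $s_{g}$, giving value $1$.

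For the anti-commutation stage, a standard averaging step starting from value $\ge 1-\epsilon$ yields that $(-1)^{s_{g}}\hat{g}$ is an $O(\epsilon)$-stabilizer of $\rho$ for every $g\in\Xi$, where $\hat{g}=\bigotimes_{i=1}^{8}\hat{D}^{(i)}$ and $\hat{D}^{(i)}\in\{\hat{X}^{(i)},\hat{Z}^{(i)}\}$ matches the Pauli factor of $g$ at position $i$. To isolate anti-commutation at a target player $i$, I pick four operators in $\Xi$ whose Pauli patterns agree outside position $i$ and realize the sequence $X,Z,X,Z$ at position $i$. For players $1$ and $2$ the quadruple $\{g_{1},g_{1,3},g_{2},g_{2,3}\}$ works: the ordered product $-\hat{g}_{1}\hat{g}_{1,3}\hat{g}_{2}\hat{g}_{2,3}$ telescopes via the reflection identities $\hat{X}^{2}=\hat{Z}^{2}=\I$ to the single-player operator $-(\hat{X}\hat{Z}\hat{X}\hat{Z})^{(1)}$, while the reordering $-\hat{g}_{1}\hat{g}_{2}\hat{g}_{2,3}\hat{g}_{1,3}$ telescopes to $-(\hat{X}\hat{Z}\hat{X}\hat{Z})^{(2)}$; iterating Lemma~\ref{lem:approx-stab} across the four input stabilization statements (absorbing the sign contributed by $g_{1,3}$) transfers $O(\epsilon)$-stabilization to each collapsed expression, giving $\Re\tr_\rho(\hat{X}^{(i)}\hat{Z}^{(i)}\hat{X}^{(i)}\hat{Z}^{(i)})\le -1+O(\epsilon)$ for $i=1,2$. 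The analogous quadruples $\{g_{1},g_{1,k},g_{2},g_{2,k}\}$ for $k=4,5,6$ handle the remaining pairs $(3,4)$, $(5,6)$, $(7,8)$ by the pair symmetry of $g_{3},\ldots,g_{6}$. Lemma~\ref{lem:XZ} then furnishes unitaries $V_{i}\in\Lin(\R_{i},\B_{i}\otimes\R_{i}')$ with $\hat{Z}^{(i)}=\check{Z}^{(i)}$ exactly and $\dis_\rho(\hat{X}^{(i)},\check{X}^{(i)})\le O(\sqrt{\epsilon})$, giving the first bullet of the theorem.

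For the code-space bullet, the crucial algebraic observation is that $\hat{g}\check{g}=\bigotimes_{i=1}^{8}\hat{D}^{(i)}\check{D}^{(i)}$ equals $\I$ at positions where $D_{i}=Z$ (since $\hat{Z}^{(i)}=\check{Z}^{(i)}$ is an involution) and equals $\hat{X}^{(i)}\check{X}^{(i)}$ at positions where $D_{i}=X$. The rigidity bound $\Re\tr_\rho(\hat{X}^{(i)}\check{X}^{(i)})\ge 1-O(\epsilon)$ makes each such nontrivial factor an $O(\epsilon)$-stabilizer of $\rho$, and iterating Lemma~\ref{lem:approx-stab} across at most eight factors shows that $\hat{g}\check{g}$ is itself an $O(\epsilon)$-stabilizer. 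Combining with the fact that $(-1)^{s_{g}}\hat{g}$ is an $O(\epsilon)$-stabilizer, one further application of Lemma~\ref{lem:approx-stab} gives $\Re\tr_{V\rho V^{*}}(g\otimes\I)\ge 1-O(\epsilon)$ for every $g\in\Xi$. For $h\notin\Xi$, which occurs exactly when $\mu_{1}=\mu_{2}$, the identity $h=g_{1}\cdot(g_{1}h)$ realizes $h$ as a product of two elements of $\Xi$, so Lemma~\ref{lem:approx-stab} extends the conclusion to every $h\in S$; since $\Pi=\tfrac{1}{64}\sum_{h\in S}h$, averaging the $64$ resulting estimates yields $\ip{\Pi\otimes\I}{V\rho V^{*}}\ge 1-O(\epsilon)$.

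The main technical obstacle is the combinatorial verification underlying the anti-commutation step: odd and even positions play asymmetric roles in $g_{2}$, and the weight-two $Y$-generators $g_{3},\ldots,g_{6}$ introduce signs through $XY=iZ$ and $ZY=-iX$, so for each of the eight players one must, with careful sign tracking, confirm that some ordered four-fold product of $\hat g$'s indeed collapses to $-(\hat{X}\hat{Z}\hat{X}\hat{Z})^{(i)}$ with the identity on the remaining seven players. Once this bookkeeping is complete, the rest of the argument is a structural chain of applications of Lemma~\ref{lem:approx-stab} together with Lemma~\ref{lem:XZ} and the averaging identity for the code-space projector.
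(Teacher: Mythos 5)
Your proof is correct, and its first half is essentially the paper's own argument: the paper derives $\Re\tr_\rho\bigl(\hat{Z}^{(2)}\hat{X}^{(2)}\hat{Z}^{(2)}\hat{X}^{(2)}\bigr)\approx_\epsilon -1$ from the ordered product $\hat{g_2}\,\hat{g_1}\,\hat{g_{1,3}}\,\hat{g_{2,3}}$ and a reordering for player $1$, then invokes ``the symmetry of the game'' for the remaining players; your quadruples $\{g_1,g_{1,k},g_2,g_{2,k}\}$ for $k=4,5,6$ simply make that symmetry step explicit, and your sign tracking through the $Y$-factors is exactly what the minus sign on $g_{1,3}$ in Fig.~\ref{fig:8code-XZ} records. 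Where you genuinely deviate is the code-space bullet. The paper passes to the strategy $\check{\Strategy}$ with $\hat{D}^{(i)}$ replaced by $\check{D}^{(i)}$, concludes its value is $1-O(\epsilon)$, so that $\frac{1}{32}\tr_{\rho'}\sum_{P\in\Xi}P\ge 1-O(\epsilon)$ for $\rho'=V\rho V^*$, and then uses the spectral fact that $\sum_{P\in\Xi}P$ has the code space as its $32$-eigenspace with all other eigenvalues at most $0$, i.e.\ $\sum_{P\in\Xi}P\preccurlyeq 32\,\Pi$ --- an argument that never leaves the $32$ measured $XZ$-form operators. You instead show each $\check{P}=\hat{P}\,\bigl(\hat{P}\check{P}\bigr)$ is an $O(\epsilon)$-stabilizer of $\rho$ via Lemma~\ref{lem:approx-stab}, extend from $\Xi$ to the full $64$-element stabilizer $\Stabilizer$ by writing $h=g_1\,(g_1h)$ with both factors in $\Xi$ (correct, since $h\notin\Xi$ exactly when $\mu_1=\mu_2$ and multiplication by $g_1$ flips $\mu_1$, and the group product preserves signs exactly), and then average using the exact identity $\Pi=\frac{1}{64}\sum_{h\in\Stabilizer}h$. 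Both routes are valid and give the same $1-O(\epsilon)$ bound with constant overhead. Yours is more modular --- it needs only Lemma~\ref{lem:approx-stab} plus the standard group-average projector formula, with no eigenvalue analysis, and it generalizes verbatim to any stabilizer code whose $XZ$-form elements generate the full stabilizer --- whereas the paper's spectral inequality is shorter and avoids ever manipulating stabilizer elements containing $Y$ factors.
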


\begin{proof}
  It is obvious that if the players share an encoded state of the
  eight qubit code and perform the $X,Z$ measurements to obtain the
  answers for questions $X,Z$ respectively, the referee accepts with
  certainty.

  For each operator Pauli $P \in \Xi$ of the form
  \begin{equation*}
    P = (-1)^{\nu} \bigotimes_{i=1}^8 D^{(i)},
  \end{equation*}
  where $D^{(i)} \in \{X,Z\}$, define reflection
  \begin{equation*}
    \hat{P} = (-1)^{\nu} \bigotimes_{i=1}^8 \hat{D}^{(i)},
  \end{equation*}
  by replacing the $X$ and $Z$'s with $\hat{X}^{(i)}$ and
  $\hat{Z}^{(i)}$ from the strategy respectively.
  For a strategy $\Strategy = \bigl(\rho, \bigl\{ \hat{D}^{(i)}
  \bigr\} \bigr)$, its value of the game can be expressed as
  \begin{equation}
    \label{eq:stab-game-value}
    \frac{1}{32} \sum_{P\in \Xi} \tr_\rho \frac{\I + \hat{P}}{2}.
  \end{equation}
  If strategy $\Strategy$ has value $1-\epsilon$, we have
  \begin{equation*}
    \tr_\rho \hat{P} \approx_\epsilon 1,
  \end{equation*}
  for each operator $P\in \Xi$.

  These conditions for the four operators in Fig.~\ref{fig:8code-XZ}
  and a repeated application of Lemma~\ref{lem:approx-stab} conclude
  that
  \begin{equation*}
    \Re\tr_\rho \bigl( \, \hat{g_2} \, \hat{g_1} \, \hat{g_{1,3}} \,
    \hat{g_{2,3}} \bigr) \ge 1 - O(\epsilon).
  \end{equation*}
  Observing that, as in Fig.~\ref{fig:8code-XZ}, all the reflections
  cancel out for all players except those for player $(2)$, we have
  \begin{equation*}
    \Re\tr_\rho \bigl( \hat{Z}^{(2)} \hat{X}^{(2)} \hat{Z}^{(2)}
    \hat{X}^{(2)} \bigr) \approx_\epsilon -1.
  \end{equation*}
  Lemma~\ref{lem:XZ} then proves the first item for $i=2$.
  A similar argument proves the case for $i=1$ by considering
  \begin{equation*}
    \Re\tr_{\rho} \bigl( \, \hat{g_2} \, \hat{g_{1,3}} \, \hat{g_1} \,
    \hat{g_{2,3}} \bigr).
  \end{equation*}
  The symmetry of the game then completes the proof of the first item
  for all $i\in [8]$.

  To prove the second item, consider strategy $\check{\Strategy} =
  \bigl( \rho, \bigl\{ \check{D}^{(i)} \bigr\} \bigr)$ where
  $\check{D}^{(i)}$ are reflections as defined in the first item of
  the theorem for $D\in \{X,Z\}$.
  By the claim in the first item, Lemma~\ref{lem:approx-stab} and the
  expression for the game value in Eq.~\eqref{eq:stab-game-value}, it
  is easy to see that the value of strategy $\check{\Strategy}$ is at
  least $1 - O(\epsilon)$.
  That is
  \begin{equation*}
    \frac{1}{32} \tr_{\rho'} \sum_{P\in \Xi} P \ge 1 - O(\epsilon)
  \end{equation*}
  where $\rho' = V\rho V^*$.
  It is easy to see that the operator $\sum_{P\in \Xi} P$ has the code
  space as its eigenspace of eigenvalue $32$, and all other
  eigenvalues are at most $0$.
  It then follows that $\sum_{P\in \Xi} P \le 32\Pi$, where $\Pi$ is
  the projection to the code space.
  A direct calculation then proves the second item of the theorem.
\end{proof}

\subsection{Multi-Qubit Stabilizer Game}

In this section, we consider a multi-qubit variant of the stabilizer
game called the $(n,k)$-stabilizer game.
It is again an eight-player game and the referee sends questions in
the form of measurement instructions to the players.
The players are expected to hold a quantum register of $n$ qubits and
follow the measurement instructions that encode what quantum
measurements the honest players are supposed to perform.

A $(n,k)$-stabilizer game was defined in~\cite{Ji16} where the players
receive instructions of at most $k$ single-qubit measurement.
The $(n,k)$-stabilizer game considered here is more general in the
sense that the measurement instructions to the player may include a
set of $k$ pairwise commuting $XZ$-form Pauli operators of weight at
most $k$.
For example, in the case of $k=2$, a possible question may be
$\{X_1X_2, Z_1Z_2\}$, asking the player to measure both $X_1X_2$ and
$Z_1Z_2$ simultaneously, and reply with the two measurement outcome
bits.
In this section, we will follow the convention that the subscripts for
$X,Z$ are the index for the qubits these operators act on.

Let $\Pauli_{n,k}$ be the set of $XZ$-form Pauli operators on $n$
qubits of weight at most $k$.
Let $\Power_{n,k}$ be the collection of size-$k$ subsets of
$\Pauli_{n,k}$ of pairwise commuting operators, each of which acts on
the same set of $k$ qubits.
The sizes of $\Pauli_{n,k}$ and $\Power_{n,k}$ are at most polynomial
in $n$ for constant $k$.
The measurement specification that the players receive will be either
a single Pauli operator $P \in \Pauli_{n,k}$ or a set $Q \in
\Power_{n,k}$ .
In the first case, the player is supposed to measure $P$ and respond
with a single bit, while in the latter case, the player is supposed to
measure all operators in $Q$ and reply with $k$ outcome bits.
Pauli operators of weight one in $\Pauli_{n,1}$ is usually denoted as
$D_u$ for qubit index $u\in [n]$ and $D_u\in \{X_u, Z_u\}$.

\begin{figure}[!htb]
  \begin{shaded}
    \ul{Multi-Qubit Stabilizer Game}\\[1em]
    Let $[n]$ be the index of $n$ qubits and let $k\ge 2$ be a
    constant.
    The $(n,k)$-stabilizer game is an eight-player nonlocal game where
    the referee does the following with equal probability:
    \begin{enumerate}
    \item \game{Stabilizer Check}.
      The referee plays the stabilizer game on a randomly selected
      qubit.
      That is, he
      \begin{enumerate}
      \item Samples a qubit $u\in [n]$ uniformly at random; samples
        $D^{(i)}_u \in \{X_u,Z_u\}$ as in the stabilizer game.
      \item Sends $D^{(i)}_u$ to player $(i)$ and receive an answer
        bit $a^{(i)}$.
      \item Accepts if the referee for the stabilizer game accepts on
        questions $D^{(i)}_u$ and answers $a^{(i)}$.
      \end{enumerate}

    \item \game{Confusion Check}.
      The referee plays the stabilizer game but confuses one of the
      players by hiding the index of the qubit the stabilizer game
      checks against in a set of qubits.
      \begin{enumerate}
      \item The referee selects a subset $J\subset [n]$ of size $k$,
        an index $u\in J$, and a player $t\in [8]$, all uniformly at
        random.
        For each qubit $v\in J$, indecently samples questions
        $D_v^{(i)}\in \{X_v,Z_v\}$ as in the stabilizer game.
      \item Sends $D^{(i)}_u$ to player $(i)$ and receives an answer
        bit $a^{(i)}$ if $i\ne t$; sends $Q = \bigl\{ D^{(t)}_v
        \bigr\}_{v\in J}$ to player $(t)$, and receives a $k$-bit
        string $b=(b_v)_{v\in J}$.
        Define $a^{(t)} = b_u$ and $a = \bigl( a^{(1)}, a^{(2)},
        \ldots, a^{(r)} \bigr)$.
      \item Accepts if and only if the referee for the stabilizer game
        accepts when the questions are $D_u$ and answer bits are $a$.
      \end{enumerate}

    \item \game{Parity Check}.
      The referee tests the consistency of a multi-qubit $XZ$-form
      Pauli measurement and individual $X$ and $Z$ measurements.
      \begin{enumerate}
      \item Samples $P\in \Pauli_{n,k}$ and $t\in [8]$ uniformly at
        random.
        Let $J$ be the support of $P$ and $P=\prod_{v\in J} D_v$ for
        $D_v\in \{X_v,Z_v\}$.
        If $\abs{J} < k$, randomly chooses $J'\supset J$ of size $k$
        and randomly choose $D_v\in \{X_v,Z_v\}$ for $v\in J'\setminus
        J$.
      \item Sends $Q=\{D_v\}_{v\in J'}$ to player $(i)$ and receives
        $k$ bits $\bigl( a_v^{(i)} \bigr)_{v\in J'}$ if $i\ne t$ ;
        sends $P$ to player $(t)$ and receive a bit $a^{(t)}$.
        Define $a^{(i)} = \bigoplus_{v\in J} a_v^{(i)}$ for $i\ne t$.
      \item Accepts if $\bigoplus_{i=1}^8 a^{(i)} = 0$; rejects
        otherwise.
      \end{enumerate}

    \item \game{Pauli Check}.
      The referee tests the consistency of answers between multiple
      and single Pauli questions.
      \begin{enumerate}
      \item Samples $Q \in \Power_{n,k}$, $P\in Q$ and $t\in [8]$
        uniformly at random.
      \item Sends $P$ to player $(i)$ and receives a bit $a^{(i)}$ if
        $i\ne t$ ; sends $Q$ to player $(t)$ and receive a $k$-bit
        answer $\bigl( a_{P'}^{(t)} \bigr)_{P' \in Q}$.
        Define $a^{(t)} = a_P^{(t)}$.
      \item Accepts if $\bigoplus_{i=1}^8 a^{(i)} = 0$; rejects
        otherwise.
      \end{enumerate}
    \end{enumerate}
  \end{shaded}
  \caption{Multi-qubit stabilizer game.}
  \label{fig:ms-game}
\end{figure}

The $(n,k)$-stabilizer game is given in Fig.~\ref{fig:ms-game}.
It is easy to see that the nonlocal value of the game equals one,
which can be achieved by players who share an correctly encoded state
and follow measurement specifications honestly.
Let $\R_i$ be the state space of player $(i)$.
A strategy for the $k$-qubit stabilizer game,
\begin{equation*}
  \Strategy = \bigl(\rho, \bigl\{ R_P^{(i)} \bigr\}, \bigl\{
  M^{(i)}_Q \bigr\} \bigr),
\end{equation*}
consists of a state $\rho \in \Density\bigl( \bigotimes_{i=1}^r \R_i
\bigr)$, reflections $R_P^{(i)}$ the players measure for question $P$
and measurements $M^{(i)}_Q$ with $k$-bit outcomes for question $Q$.
The superscripts of the measurements indexing the players are
sometimes omitted if there will be no ambiguity.

The reflection $R_P$ and measurement $M_Q$ in the strategy will
sometimes be denoted as $\hat{P}$ and $\hat{Q}$ respectively.
Without loss of generality, it is assumed that the measurements
$\hat{Q}$ are projective measurements and each measurement operator
has the same rank.
For $P\in Q$, define the derived reflections
\begin{equation*}
  \hat{P|Q} = \sum_{b\in \{0,1\}^Q} (-1)^{b_P} \hat{Q}^{\,b}.
\end{equation*}
By the assumption on measurement $\hat{Q}$, the derived reflections
are traceless.

We prove the following rigidity property of the $(n,k)$-stabilizer
game.

\begin{theorem}
  \label{thm:ms-game}
  For any constant integer $k \ge 2$, there exists a constant
  $\kappa>0$ that depends only on $k$ such that the $(n,k)$-stabilizer
  game in Fig.~\ref{fig:ms-game} has the following rigidity property.
  For any quantum strategy $\Strategy = \bigl(\rho, \bigl\{
  \hat{P}^{(i)} \bigr\}, \bigl\{ \hat{Q}^{(i)} \bigr\} \bigr)$ that
  has value at least $1 - \epsilon$, there are isometries $V_i \in
  \Lin(\R_i,\B_i^{\otimes n} \otimes \R_i')$, such that the following
  properties hold
  \begin{itemize}
  \item For all $i\in[8]$, $P \in \Pauli_{n,k}$, and $Q \in
    \Power_{n,k}$,
    \begin{subequations}
      \begin{align}
        \dis_\rho \bigl( \hat{P}^{(i)}, \check{P}^{(i)} \bigr)
        & \le O(n^\kappa \epsilon^{1/\kappa}),\label{eq:ms-1}\\
        \dis_\rho \bigl( \hat{Q}^{(i)}, \check{Q}^{(i)} \bigr)
        & \le O(n^\kappa \epsilon^{1/\kappa}),\label{eq:ms-2}
      \end{align}
    \end{subequations}
    where $\check{P}^{(i)} = V_i^* (P\otimes \I) V_i$ and
    $\check{Q}^{(i)}$ is the measurement that first performs isometry
    $V_i$ and then measures the $k$ Pauli operators in $Q$.
  \item Let $\Pi$ be the projection to the code space of the
    stabilizer code, $V$ be the isometry $\bigotimes_{i=1}^r V_i$,
    then
    \begin{equation}
      \label{eq:ms-3}
      \ip{\Pi^{\otimes n}\otimes I}{V\rho V^*} \ge
      1-O(n^\kappa \epsilon^{1/\kappa}),
    \end{equation}
    where the $t$-th tensor factor of $\; \Pi^{\otimes n}$ acts on
    eight qubits, each of which is the $t$-th qubit of each player's
    system after the application of $V$.
  \end{itemize}
\end{theorem}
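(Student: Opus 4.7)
The plan is to derive the theorem from the four sub-protocols of Figure~\ref{fig:ms-game} in sequence, leveraging Theorem~\ref{thm:stab-game} as the base case. Since the overall strategy has value $\ge 1-\epsilon$, each of the four sub-checks succeeds with probability at least $1-O(\epsilon)$. From the \game{Stabilizer Check}, applying Theorem~\ref{thm:stab-game} to each qubit $u\in[n]$ separately (averaging over $u$ gives per-qubit loss $O(n\epsilon)$), I obtain for every player $i$ and every $u$ a unitary $V_{i,u}\in \Lin(\R_i,\B_i\otimes \R_i^{(u)})$ satisfying $\hat{Z}_u^{(i)}=V_{i,u}^*(Z\otimes \I)V_{i,u}$ and $\dis_\rho(\hat{X}_u^{(i)},V_{i,u}^*(X\otimes \I)V_{i,u})\le O(\sqrt{n\epsilon})$, together with an $O(n\epsilon)$-stabilization of the shared state by the code projector on each qubit $u$.

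The technical core is the pasting step: welding the per-qubit local isometries $V_{i,u}$ into a single joint isometry $V_i\colon \R_i\to \B_i^{\otimes n}\otimes \R_i'$ compatible with every qubit simultaneously. Here the \game{Confusion Check} and \game{Pauli Check} are the crucial tools. The \game{Pauli Check} on a single $P\in Q$ of weight one shows that each derived reflection $\hat{D_u|Q}^{(t)}$ is consistent with $\hat{D}_u^{(i)}$ for the other players, so by Lemma~\ref{lem:approx-stab-3} it is close to $\hat{D}_u^{(t)}$ itself on state $\rho$. Since for fixed $Q$ the reflections $\hat{D_u|Q}^{(t)}$ for distinct $u\in J$ are exactly commuting (they come from one projective measurement), this transfers to approximate commutativity of the single-qubit reflections $\hat{X}_u^{(t)},\hat{Z}_u^{(t)}$ across different qubits; the \game{Confusion Check} reinforces this by directly linking the joint measurement outcome on qubit $u$ to the other players' single-qubit answers. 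Using these approximate commutation relations I then iterate the Jordan-lemma construction of Lemma~\ref{lem:XZ}: simultaneously block-diagonalize $\hat{Z}_1^{(i)},\hat{Z}_2^{(i)},\ldots,\hat{Z}_n^{(i)}$ qubit by qubit, peeling off one tensor factor $\B_i$ at each step. Each peel costs a polynomial-in-$n$ factor and a root loss in $\epsilon$, which accumulates to the stated $O(n^\kappa \epsilon^{1/\kappa})$ bound.

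With $V_i$ in hand, Eq.~\eqref{eq:ms-1} for weight-one $P=D_u$ is immediate from the first paragraph, and for general $P\in \Pauli_{n,k}$ it follows from the \game{Parity Check}: that check pins $\hat{P}^{(t)}$ to the XOR $\bigoplus_{v\in J}\hat{D_v|Q}^{(i)}$ across other players, and via Lemma~\ref{lem:approx-stab-3} and repeated application of Lemma~\ref{lem:approx-stab} this yields $\dis_\rho(\hat{P}^{(i)},V_i^*(P\otimes \I)V_i)\le O(n^\kappa\epsilon^{1/\kappa})$, since $k$ is constant. Equation~\eqref{eq:ms-2} follows from Eq.~\eqref{eq:ms-1} together with the \game{Pauli Check}, which forces each derived reflection $\hat{P|Q}^{(i)}$ to be $\sqrt{\epsilon}$-close to $\hat{P}^{(i)}$ on $\rho$, hence close to $V_i^*(P\otimes \I)V_i$; by the correspondence between projective measurements and their derived reflection tuples this converts into $\dis_\rho(\hat{Q}^{(i)},\check{Q}^{(i)})\le O(n^\kappa\epsilon^{1/\kappa})$. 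Finally, Eq.~\eqref{eq:ms-3} is obtained by applying the second item of Theorem~\ref{thm:stab-game} on each of the $n$ qubits under the joint isometry $V=\bigotimes_i V_i$ and taking a union bound.

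The main obstacle is clearly the welding step in the middle paragraph. Approximate commutativity of reflections does not directly give an exact simultaneous block-diagonalization, so the iterative Jordan construction must be done carefully, tracking how many of the previously fixed $\B_i$ factors are perturbed when extracting the next one. Controlling this error propagation so that the degradation is only polynomial in $n$ (rather than exponential) is the heart of the argument and dictates the choice of the exponent $\kappa$; all remaining estimates are routine applications of Lemmas~\ref{lem:approx-stab},~\ref{lem:approx-stab-2}, and~\ref{lem:approx-stab-3}.
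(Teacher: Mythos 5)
There is a genuine gap at exactly the step you yourself flag as the heart of the argument: the ``welding'' of the per-qubit unitaries $V_{i,u}$ into a single isometry $V_i$. You propose to iterate the Jordan-lemma construction of Lemma~\ref{lem:XZ}, simultaneously block-diagonalizing $\hat{Z}_1^{(i)},\ldots,\hat{Z}_n^{(i)}$ and peeling off one $\B_i$ factor at a time, but you never say how to carry this out, and this route does not work as stated: Jordan's lemma block-diagonalizes \emph{two} reflections exactly, and approximate commutativity on a state does not yield an approximate simultaneous block structure for $n$ reflections with controllable (polynomial) error --- after peeling off the first qubit, the remaining reflections no longer even approximately preserve the extracted tensor factor in any operator sense, only in the state-dependent seminorm, so the iteration has no well-defined next step. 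The paper avoids this entirely by \emph{defining} the joint isometry exactly, with no approximation needed in its construction: $W_u = (\I\otimes V_u^*)\SWAP_u(\ket{\Phi}_u\otimes V_u)$ using a fresh EPR pair per qubit, and $V = W_n\cdots W_1$. Lemma~\ref{lem:W} shows that conjugation by $W_u$ acts as the depolarizing channel $\mathfrak{T}_u$ with respect to the qubit defined by the anti-commuting pair $\tilde{X}_u,\tilde{Z}_u$, so that $\check{D}_u = \mathfrak{T}_1\circ\cdots\circ\mathfrak{T}_{u-1}(\tilde{D}_u)$. The approximate commutativity (obtained, as you correctly identify, from the \game{Confusion Check} and \game{Pauli Check} via Lemma~\ref{lem:commute}) then enters only inside state-dependent inner products, in an induction showing $\Re\tr_\rho(\tilde{D}_u\,\check{D}_u) \approx_{n^2\epsilon_2} 1$, where errors accumulate additively. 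Your proposal is missing this idea; without it the middle paragraph is a restatement of the difficulty rather than a proof.

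A secondary, smaller gap: you assert that closeness of each derived reflection $\hat{P|Q}^{(i)}$ to $V_i^*(P\otimes\I)V_i$ converts into closeness of the $k$-bit measurements $\hat{Q}^{(i)}$ and $\check{Q}^{(i)}$ ``by the correspondence between projective measurements and their derived reflection tuples.'' The correspondence is bijective, but $\dis_\rho$-closeness of the $k$ derived reflections does not formally transfer to the $2^k$-outcome measurements: the measurement operators are products $\prod_j \frac{\I+(-1)^{a_j}S_j}{2}$, and bounding $\sum_a \tr_\rho(M^a\check{N}^a)$ requires controlling mixed products $\bigl(\prod_i R_i^{x_i}\bigr)V^*\bigl(\prod_i S_i^{x_i}\bigr)V$ for all $x\in\{0,1\}^k$. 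The paper handles this via Lemmas~\ref{lem:overline} and~\ref{lem:derived}, whose inductions need the extra hypothesis that the reflections have $\epsilon$-consistent partners on the other players' systems (supplied by the \game{Stabilizer Check} through $X^{\otimes 8}, Z^{\otimes 8}\in\Xi$), plus Cauchy--Schwarz manipulations as in Eq.~\eqref{eq:VV*}. Your high-level routing through the four checks --- \game{Stabilizer Check} for the base case, \game{Confusion Check}/\game{Pauli Check} for commutativity, \game{Parity Check} for weight-$k$ operators, and the per-qubit second item of Theorem~\ref{thm:stab-game} for Eq.~\eqref{eq:ms-3} --- does match the paper's, but the two technical engines (the SWAP/EPR isometry with depolarization, and the derived-reflection-to-measurement transfer lemmas) are absent, and the first of these cannot be replaced by the iterated Jordan scheme you sketch.
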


The proof of Theorem~\ref{thm:ms-game} relies on the following lemmas.

\begin{lemma}
  \label{lem:commute}
  Let $\rho \in \Density(\X\otimes \Y)$ be a quantum state, $R_1, R_2,
  S_1, S_2 \in \Herm(\X)$ be four reflections on $\reg{X}$, and $U_1,
  U_2\in \Herm(\Y)$ be two reflections on $\reg{Y}$.
  If $S_1, S_2$ commute, both $R_1, S_1$ are $\epsilon$-consistent
  with $U_1$, and both $R_2, S_2$ are $\epsilon$-consistent with
  $U_2$, then
  \begin{equation}
    \Re\tr_\rho \bigl( R_1R_2R_1R_2 \bigr) \approx_\epsilon 1.
  \end{equation}
\end{lemma}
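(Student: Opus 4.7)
The plan is to lift every operator to act on $\X \otimes \Y$ (writing $\tilde{R}_i = R_i \otimes \I$, $\tilde{S}_i = S_i \otimes \I$, $\tilde{U}_i = \I \otimes U_i$) and translate the consistency hypotheses into approximate-stabilizer statements: by Eq.~\eqref{eq:consref}, the conditions $\con_\rho(R_i, U_i), \con_\rho(S_i, U_i) \ge 1 - \epsilon$ say precisely that each of $\tilde{R}_i \tilde{U}_i$ and $\tilde{S}_i \tilde{U}_i$ ($O(\epsilon)$)-stabilizes $\rho$ for $i = 1, 2$. The goal $\Re\tr_\rho(R_1 R_2 R_1 R_2) \approx_\epsilon 1$ is the same as $\Re\tr_\rho(\tilde{R}_1 \tilde{R}_2 \tilde{R}_1 \tilde{R}_2) \approx_\epsilon 1$ via the convention for $\tr_\rho$ on subsystems.

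Next I would build composite approximate stabilizers by repeatedly invoking Lemma~\ref{lem:approx-stab}, exploiting two kinds of commutations: any $\tilde{U}_j$ commutes with any $\tilde{R}_i$ or $\tilde{S}_i$ because they act on distinct tensor factors, and $\tilde{S}_1 \tilde{S}_2 = \tilde{S}_2 \tilde{S}_1$ by hypothesis. Concretely, multiplying $(\tilde{R}_1 \tilde{U}_1)(\tilde{R}_2 \tilde{U}_2)$ produces the stabilizer $\tilde{R}_1 \tilde{R}_2 \tilde{U}_1 \tilde{U}_2$, while multiplying $(\tilde{S}_2 \tilde{U}_2)(\tilde{S}_1 \tilde{U}_1)$ produces $\tilde{S}_2 \tilde{S}_1 \tilde{U}_2 \tilde{U}_1$. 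Composing these two and sliding $\tilde{U}_1 \tilde{U}_2$ past $\tilde{S}_2 \tilde{S}_1$ (different factors) collapses the $\tilde{U}$'s pairwise via $\tilde{U}_i^2 = \I$, leaving $\tilde{R}_1 \tilde{R}_2 \tilde{S}_2 \tilde{S}_1$; commutativity of $S_1, S_2$ then rewrites this as $\tilde{R}_1 \tilde{R}_2 \tilde{S}_1 \tilde{S}_2$. The symmetric choice $(\tilde{S}_2 \tilde{U}_2)(\tilde{S}_1 \tilde{U}_1)(\tilde{R}_1 \tilde{U}_1)(\tilde{R}_2 \tilde{U}_2)$ yields $\tilde{S}_1 \tilde{S}_2 \tilde{R}_1 \tilde{R}_2$ as an $O(\epsilon)$-stabilizer. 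Both facts can also be obtained directly from adjoints, since $\tr_\rho(M^*) = \overline{\tr_\rho(M)}$, so $\Re$-part-wise stabilization is closed under taking adjoints.

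Finally, I would apply Lemma~\ref{lem:approx-stab-3} (transitivity) with the contractions $R_0 = R_2 = \tilde{R}_1 \tilde{R}_2$ and $R_1 = \tilde{S}_1 \tilde{S}_2$ (a unitary, hence a contraction). The two hypotheses of that lemma are exactly the two stabilizations from the previous step: $\Re\tr_\rho(R_0 R_1) = \Re\tr_\rho(\tilde{R}_1 \tilde{R}_2 \tilde{S}_1 \tilde{S}_2) \approx_\epsilon 1$, and $\Re\tr_\rho(R_1^* R_2) = \Re\tr_\rho(\tilde{S}_2 \tilde{S}_1 \tilde{R}_1 \tilde{R}_2) = \Re\tr_\rho(\tilde{S}_1 \tilde{S}_2 \tilde{R}_1 \tilde{R}_2) \approx_\epsilon 1$, using $\tilde{S}_i^* = \tilde{S}_i$ and $S_1 S_2 = S_2 S_1$. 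The conclusion gives $\Re\tr_\rho(R_0 R_2) = \Re\tr_\rho(\tilde{R}_1 \tilde{R}_2 \tilde{R}_1 \tilde{R}_2) \approx_\epsilon 1$, which is the claim.

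The main obstacle is the combinatorial bookkeeping in the second step, where one must choose the ordering of the factor-stabilizers so that all four $\tilde{U}_i$ appear in adjacent like pairs (not in a pattern such as $\tilde{U}_1 \tilde{U}_2 \tilde{U}_1 \tilde{U}_2$, whose square one cannot simplify without assuming $U_1, U_2$ commute). The commutativity of $S_1$ and $S_2$ is precisely what enables reshuffling of the $S$-block so that the $U$'s flank each other and cancel; dropping this hypothesis would block the construction. A subtle but routine point is verifying that the constants in $\approx_\epsilon$ remain $O(\epsilon)$ after a bounded number of applications of Lemma~\ref{lem:approx-stab} and one application of Lemma~\ref{lem:approx-stab-3}, which is immediate since the number of combinations is an absolute constant.
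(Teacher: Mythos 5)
Your proposal is correct, but it routes through a different decomposition than the paper and uses a key lemma the paper's proof does not touch. The paper starts from the \emph{exact} identity $\Re\tr_\rho(S_1S_2S_1S_2)=1$ (commuting reflections give $(S_1S_2)^2=\I$) and then performs four sequential single-letter swaps: multiplying on the left by $S_i\otimes U_i$ and on the right by $R_i\otimes U_i$ and invoking Lemma~\ref{lem:approx-stab} moves one $S_i$ from the front of the word to the back as an $R_i$, so after four rounds $S_1S_2S_1S_2$ has been rewritten as $R_1R_2R_1R_2$ with only an $O(\epsilon)$ loss; Lemma~\ref{lem:approx-stab} is the sole tool. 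You instead build the two block stabilizers $\tilde{R}_1\tilde{R}_2\tilde{S}_1\tilde{S}_2$ and $\tilde{S}_1\tilde{S}_2\tilde{R}_1\tilde{R}_2$ as four-fold products of the elementary tensor stabilizers (choosing the orders so the $\tilde{U}_i$ cancel in adjacent like pairs) and then cancel the middle $\tilde{S}_1\tilde{S}_2$ block via the transitivity Lemma~\ref{lem:approx-stab-3}. The commutativity hypothesis also enters differently: the paper spends it once, at the start, to get the exact value $1$; you spend it to reorder $\tilde{S}_2\tilde{S}_1$ into $\tilde{S}_1\tilde{S}_2$ so the two blocks align for transitivity. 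Your version is symmetric and requires fewer verifications (two products plus one transitivity step versus four swap rounds), and it makes visible that the $S$-block is merely a removable ``bridge''; the paper's sequential-swap template is more flexible, since it rewrites arbitrary words one letter at a time and is the pattern reused implicitly elsewhere in the paper (e.g., in the proof of Theorem~\ref{thm:ms-game}). Both give $O(\epsilon)$ with constant-factor losses, as you note.

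One aside in your write-up is wrong, though harmlessly so: the claim that both block facts ``can also be obtained directly from adjoints.'' The adjoint of $\tilde{R}_1\tilde{R}_2\tilde{S}_1\tilde{S}_2$ is $\tilde{S}_1\tilde{S}_2\tilde{R}_2\tilde{R}_1$, with the $R$'s in \emph{reversed} order --- and $R_1,R_2$ are not assumed to commute. Feeding that into Lemma~\ref{lem:approx-stab-3} with $R_0=\tilde{R}_1\tilde{R}_2$ would only produce the trivial statement $\Re\tr_\rho(\tilde{R}_1\tilde{R}_2\tilde{R}_2\tilde{R}_1)=1$. Since you independently construct the second block as an explicit product of the four elementary stabilizers, the main argument is unaffected, but the adjoint remark should be deleted.
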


\begin{proof}
  First, by the commutativity of $S_1, S_2$, we have
  \begin{equation*}
    \Re \tr_\rho (S_1 S_2 S_1 S_2) = 1.
  \end{equation*}
  Lemma~\ref{lem:approx-stab} and the $\epsilon$-consistency of
  $R_1,S_1$ with $U_1$ then imply
  \begin{equation*}
    \Re\tr_\rho \Bigl[ \bigl( S_1\otimes U_1\bigr)
    \bigl((S_1S_2S_1S_2) \otimes \I_{\reg{Y}} \bigr) \bigl(R_1\otimes
    U_1\bigr) \Bigr] \approx_\epsilon 1,
  \end{equation*}
  which simplifies to
  \begin{equation*}
    \Re\tr_\rho \bigl( S_2S_1S_2R_1 \bigr) \approx_\epsilon 1.
  \end{equation*}
  That is, we can move $S_1$ in the front to the end and replace it
  with $R_1$ without causing too much error in the expression.
  Repeating similar arguments three more times, we have
  \begin{equation*}
    \Re\tr_\rho \bigl( R_1R_2R_1R_2 \bigr) \approx_\epsilon 1.
  \end{equation*}
\end{proof}

\begin{lemma}
  \label{lem:W}
  Let $\X$, $\Y$, $\B$ be two-dimensional Hilbert spaces.
  Let $V\in \Lin(\R, \B \otimes \R')$ be a unitary operator, $R \in
  \Lin(\R)$ be any operator and $\ket{\Phi}$ be the EPR state on
  $\X\otimes \Y$.
  Define isometry $W\in \Lin(\R, \X\otimes \Y\otimes \R)$ as
  \begin{equation*}
    W = (\I \otimes V^*) \SWAP (\ket{\Phi} \otimes V),
  \end{equation*}
  where the $\SWAP$ acts on $\X$ and $\B$.
  Then
  \begin{equation*}
    W^* R W = \frac{1}{4}\sum_{i=0}^3 \bigl( V^*(\sigma_i \otimes \I)V
    \bigr) \,R\, \bigl( V^*(\sigma_i\otimes \I)V \bigr),
  \end{equation*}
  where $\sigma_0, \sigma_1, \sigma_2, \sigma_3$ are the Pauli
  operators.
\end{lemma}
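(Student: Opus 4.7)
The plan is to prove the identity by direct computation, exploiting the Pauli expansion of the SWAP gate. First I would recall two standard identities for two-qubit spaces: the operator identity
\begin{equation*}
  \SWAP \;=\; \frac{1}{2}\sum_{i=0}^{3}\sigma_i\otimes\sigma_i,
\end{equation*}
and the EPR identity $\bra{\Phi}(A\otimes\I)\ket{\Phi} = \frac{1}{2}\tr(A)$ for every $A\in\Lin(\X)$. Combined with the trace-orthogonality of the Pauli basis, $\tr(\sigma_j\sigma_i)=2\delta_{ij}$, these give the key scalar relation
\begin{equation*}
  \bra{\Phi}(\sigma_j\sigma_i\otimes\I)\ket{\Phi}_{\X\Y} \;=\; \delta_{ij}.
\end{equation*}

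Next I would substitute the SWAP expansion into the definition of $W$. Since in $\SWAP_{\X,\B}$ one Pauli factor acts on $\X$ (where the EPR state lives) and the other acts on $\B$ (the first tensor factor of the range of $V$), the action on $\ket{\Phi}_{\X\Y}\otimes V$ factorizes cleanly, and after absorbing the $\B$-side Pauli via $(\I_{\X\Y}\otimes V^*)(\I_\X\otimes(\sigma_i)_\B\otimes\I_{\R'})\,V = V^*(\sigma_i\otimes\I)V$, one obtains the decomposition
\begin{equation*}
  W \;=\; \frac{1}{2}\sum_{i=0}^{3}\bigl((\sigma_i)_\X\ket{\Phi}_{\X\Y}\bigr)\otimes\bigl(V^*(\sigma_i\otimes\I)V\bigr),
\end{equation*}
where the second factor acts on $\R$ only.

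Finally, I would compute $W^{*}RW$ by plugging in this expression for $W$ (and its adjoint, using Hermiticity of the $\sigma_i$), producing a double sum over indices $i,j\in\{0,1,2,3\}$. The $\X\otimes\Y$ part of each summand detaches as the scalar $\bra{\Phi}\sigma_j\sigma_i\ket{\Phi}=\delta_{ij}$, which collapses the double sum to the single sum in the stated formula; the operator $R$ sits between the two factors $V^*(\sigma_j\otimes\I)V$ and $V^*(\sigma_i\otimes\I)V$ because all three operators act on the same space $\R$. No step is a genuine obstacle; the only delicate point is bookkeeping to keep straight which tensor factor each operator acts on, since $\R$ appears both as the range of $V^{*}$ and as the domain of $R$. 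As a consistency check, specializing to $R=\I$ gives $W^{*}W = \frac{1}{4}\sum_{i}V^{*}(\sigma_i^{2}\otimes\I)V = \I_{\R}$, confirming that $W$ is an isometry.
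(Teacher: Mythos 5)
Your proof is correct, and since the paper states Lemma~\ref{lem:W} without proof, your direct computation is exactly the argument being left implicit: the expansion $\SWAP = \frac{1}{2}\sum_{i=0}^{3}\sigma_i\otimes\sigma_i$, the identity $\bra{\Phi}(A\otimes \I)\ket{\Phi} = \frac{1}{2}\tr(A)$, and Pauli trace-orthogonality collapse the double sum to the stated formula. All the bookkeeping in your decomposition $W = \frac{1}{2}\sum_i \bigl((\sigma_i)_{\X}\ket{\Phi}\bigr)\otimes V^*(\sigma_i\otimes\I)V$ checks out, including the use of Hermiticity of $\sigma_i$ and unitarity of $V$ when taking adjoints, and your $R=\I$ sanity check correctly confirms that $W$ is an isometry.
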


\begin{lemma}
  \label{lem:overline}
  Let $R_1, R_2, \ldots, R_k\in \Herm(\Y)$ be $k$ pairwise commuting
  reflections, $V\in \Lin(\X,\Y)$ an isometry, $\rho \in
  \Density(\X\otimes \Z)$ a quantum state.
  Define operators $\check{R}_i = V^* R_i V \in \Herm(\X)$.
  If $\check{R}_i$ has $\epsilon$-consistent reflections on $\Z$ for
  $i\in [k]$, then
  \begin{equation*}
    \Re\tr_\rho \biggl[ V^* \Bigl( \prod_{i=1}^k R_i \Bigr) V \,
    \prod_{i=1}^k \check{R}_i \biggr]
    \approx_\epsilon 1.
  \end{equation*}
\end{lemma}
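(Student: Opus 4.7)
I would reduce the inequality to repeated applications of Lemma~\ref{lem:approx-stab} by transferring the product structure from $\X$ across to $\Z$ through the consistency witnesses, and then cancelling the $\Z$-side factors. For each $i\in[k]$, let $U_i\in \Herm(\Z)$ denote the reflection witnessing the $\epsilon$-consistency of $\check{R}_i$, so that $\check{R}_i\otimes U_i$ is a Hermitian contraction which $O(\epsilon)$-stabilizes $\rho$. Write $R=\prod_{i=1}^k R_i$, $U=\prod_{i=1}^k U_i$, $A=V^* R V$, and $B=\prod_{i=1}^k\check{R}_i$, with the products for $U$ and $B$ taken in the same left-to-right order as the one for $R$. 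The aim is to show that both $A\otimes U$ and $B\otimes U$ are $O(\epsilon)$-stabilizers of $\rho$, and then combine them so that the $\Z$ factor cancels to $AB\otimes \I$.

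For $B\otimes U$, iterating Lemma~\ref{lem:approx-stab} on the product $(\check{R}_1\otimes U_1)(\check{R}_2\otimes U_2)\cdots(\check{R}_k\otimes U_k)=B\otimes U$ shows that $B\otimes U$ is an $O(\epsilon)$-stabilizer of $\rho$, with a hidden constant that depends polynomially on $k$. For $A\otimes U$, first pass to the auxiliary state $\tilde\rho=(V\otimes\I)\rho(V^*\otimes\I)\in\Density(\Y\otimes\Z)$. Since $\tr_{\tilde\rho}(R_i\otimes U_i)=\tr_\rho(\check{R}_i\otimes U_i)$, each $R_i\otimes U_i$ is an $O(\epsilon)$-stabilizer of $\tilde\rho$; iterating Lemma~\ref{lem:approx-stab} on $\tilde\rho$ then yields $R\otimes U$ as an $O(\epsilon)$-stabilizer of $\tilde\rho$, which translates back to $A\otimes U$ being an $O(\epsilon)$-stabilizer of $\rho$.

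Because the $R_i$ commute, $R$, and hence $A$, are Hermitian, so $A\otimes U^*=(A\otimes U)^*$ is also an $O(\epsilon)$-stabilizer of $\rho$. A final application of Lemma~\ref{lem:approx-stab} to $A\otimes U^*$ and $B\otimes U$ gives that their product $AB\otimes(U^*U)=AB\otimes\I$ is an $O(\epsilon)$-stabilizer of $\rho$, which is exactly the statement $\Re\tr_\rho(AB)\ge 1-O(\epsilon)$, i.e., $\approx_\epsilon 1$, claimed in the lemma.

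The main obstacle is the $\Z$-side bookkeeping required to make the cancellation $U^* U=\I$ work: one must (a) combine the two sequences of approximate stabilizers in matching order so that the same $U$ appears on both sides, and (b) use the commutativity hypothesis in an essential way to obtain the Hermiticity of $A$, which is what lets $A\otimes U$ be swapped for $A\otimes U^*$ without loss. Everything else is a mechanical accumulation of the $O(\sqrt{\cdot})$ losses in Lemma~\ref{lem:approx-stab}, which stay $O(\epsilon)$ because $k$ is treated as a fixed constant in the intended application to Theorem~\ref{thm:ms-game}.
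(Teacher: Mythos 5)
Your proof is correct, but it takes a genuinely different route from the paper's. The paper proves the lemma by induction on $k$: at each step it conjugates the induction-hypothesis expression by $\check{R}_k$ (implicitly using the consistent reflection on $\Z$ together with Lemma~\ref{lem:approx-stab} to absorb the conjugation), and then the whole difficulty concentrates in the insertion error $V^* \bigl( R_k (\I-VV^*) \prod_{i<k} R_i \bigr) V \prod_i \check{R}_i$, which is killed by the Cauchy--Schwarz estimate of Eq.~\eqref{eq:VV*}, with both square-root factors bounded by $O(\epsilon)$ via the induction hypothesis; commutativity enters when merging $R_k$ back into $\prod_{i=1}^k R_i$. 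You avoid the $VV^*$ insertion problem altogether by lifting to $\tilde\rho = (V\otimes\I)\rho(V^*\otimes\I)$, where the identity $\tr_{\tilde\rho}(R_i\otimes U_i) = \tr_\rho(\check{R}_i\otimes U_i)$ lets you conjugate the entire product $\prod_i R_i$ in one shot, and then you cancel the $\Z$-side factor through $U^*U=\I$; your use of commutativity is different and well isolated---it gives Hermiticity of $R$ and hence of $A=V^*RV$, which (combined with the general fact $\Re\tr_\rho(M^*)=\Re\tr_\rho(M)$) is exactly what licenses trading $A\otimes U$ for $A\otimes U^*$. All steps check out: $\tilde\rho$ is a valid state since $V$ is an isometry, every factor involved is a contraction as Lemma~\ref{lem:approx-stab} requires, the products $B\otimes U=(\check{R}_1\otimes U_1)\cdots(\check{R}_k\otimes U_k)$ and $R\otimes U=(R_1\otimes U_1)\cdots(R_k\otimes U_k)$ factor correctly across the tensor product, and the accumulated loss is $O(k^2\epsilon)=O(\epsilon)$ for constant $k$, matching the paper. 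What each approach buys: the paper's induction stays on the original state and its error-accounting template is reused almost verbatim in the proof of Lemma~\ref{lem:derived}, whereas your argument is structurally cleaner (no insertion terms, no induction), makes the consistency witnesses $U_i$ explicit and symmetric between the $\Y$-side and $\X$-side products, and pinpoints the single place where the commutativity hypothesis is indispensable.
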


\begin{proof}
  We prove by induction on $k$.
  For $k=1$, the claim follows from the fact that $\check{R}_1$ has
  $\epsilon$-consistent reflections and Lemma~\ref{lem:approx-stab}.
  We now assume that the claim holds for $k-1$ and prove it for $k$.
  By the induction hypothesis and Lemma~\ref{lem:approx-stab},
  \begin{equation*}
    \Re\tr_\rho \biggl[ \check{R}_k \, V^* \Bigl(\prod_{i=1}^{k-1} R_i
    \Bigr) V \, \Bigl( \prod_{i=1}^{k-1} \check{R}_i \Bigr) \,
    \check{R}_k \biggr] \approx_\epsilon 1.
  \end{equation*}
  It therefore suffices to prove that the difference on the left hand
  sides of the above equation and the equation in the lemma is at most
  $O(\epsilon)$.
  The absolute value of the difference can be bounded by
  Cauchy-Schwarz inequality as follows
  \begin{equation}
    \label{eq:VV*}
    \begin{split}
      & \abs{\tr_\rho \biggl[ V^* \Bigl( R_k (\I-VV^*)
        \prod_{i=1}^{k-1} R_i \Bigr) V \, \Bigl( \prod_{i=1}^k
        \check{R}_i \Bigr) \biggr]}\\
      & \qquad \le \Bigl( 1 - \tr_\rho \check{R}_k^2 \Bigr)^{1/2}
      \biggl[ 1 - \tr_\rho \Bigl\lvert V^* \Bigl( \prod_{i=1}^{k-1}
      R_i \Bigr) V \Bigl( \prod_{i=1}^k \check{R}_i \Bigr)
      \Bigr\rvert^2 \biggr]^{1/2}.
    \end{split}
  \end{equation}
  It follows from the induction hypothesis and
  Lemma~\ref{lem:approx-stab} that both terms in the sequare roots on
  the right hand side are at most $O(\epsilon)$.
\end{proof}

\begin{lemma}
  \label{lem:derived}
  Let $M = \bigl\{ M^a \bigr\}$ be a projective measurement of $k$-bit
  outcome on quantum register $\reg{X}$ and $R_1, R_2, \ldots, R_k$ be
  its derived reflections.
  Let $N = \bigl\{ N^a \bigr\}$ be a projective measurement of $k$-bit
  outcome on quantum register $\reg{Y}$ and $S_1, S_2, \ldots, S_k$ be
  its derived reflections.
  Let $V\in \Lin(\X,\Y)$ be an isometry and $\rho\in
  \Density(\X\otimes \Z)$ a quantum state.
  For $i\in [k]$, define $\check{S}_i = V^* S_i V \in \Herm(\X)$.
  Let $\check{N}$ be the quantum measurement that measures $N$ after
  the application of isometry $V$.

  If $\Re \tr_\rho \bigl( R_i \check{S_i} \bigr) \approx_\epsilon 1$
  and $R_i$ has $\epsilon$-consistent reflections on $\Z$ for $i\in
  [k]$, then
  \begin{equation*}
    \dis_\rho (M, \check{N}) \le O(\epsilon^{1/2}).
  \end{equation*}
\end{lemma}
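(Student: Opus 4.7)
The plan is to expand both projective measurements via their derived reflections, reduce the goal to a finite collection of subset-indexed trace estimates, and then bridge across using the transitivity tools from Section~\ref{sec:dist}.

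I would first take $V M^a$ and $N^a V$ as measurement operators for $M$ and $\check N$ respectively (both are maps $\X \to \Y$), and use $V^* V = I$ together with $(M^a)^2 = M^a$ and $(N^a)^2 = N^a$ to reduce
\begin{equation*}
\sum_a \norm{V M^a - N^a V}_\rho^2 = 2 - 2 \Re \sum_a \tr_\rho(M^a \cdot V^* N^a V).
\end{equation*}
Expanding $M^a$ and $N^a$ in their commuting derived reflections and using the orthogonality of the characters $\bigl\{ (-1)^{a|_T} \bigr\}_T$ to sum out $a$ collapses the right-hand side to $2^{-k} \sum_{T \subseteq [k]} \Re \tr_\rho\bigl[V^*\bigl(\prod_{i \in T} S_i\bigr) V \cdot \prod_{i \in T} R_i\bigr]$. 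Since the $T = \emptyset$ term already equals $1$, it suffices to prove, for every nonempty $T$,
\begin{equation*}
\Re \tr_\rho\Bigl[V^* \Bigl(\prod_{i \in T} S_i\Bigr) V \cdot \prod_{i \in T} R_i\Bigr] \ge 1 - O(\epsilon).
\end{equation*}

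For each such $T$ I would obtain the estimate by a two-leg application of Lemma~\ref{lem:approx-stab-3}. In the first leg, I verify the hypotheses of Lemma~\ref{lem:overline} for the commuting family $\{S_i\}_{i \in T}$: applying Lemma~\ref{lem:approx-stab-3} along the chain $\check S_i \sim R_i \sim I \otimes T_i$, where $T_i$ is the reflection witnessing the $\epsilon$-consistency of $R_i$ on $\Z$, propagates the $\Z$-consistency from $R_i$ to $\check S_i$, after which Lemma~\ref{lem:overline} delivers
\begin{equation*}
\Re \tr_\rho\Bigl[V^*\Bigl(\prod_{i \in T} S_i\Bigr) V \cdot \prod_{i \in T} \check S_i\Bigr] \ge 1 - O(\epsilon).
\end{equation*}
In the second leg I bridge from $\prod_{i \in T} \check S_i$ to $\prod_{i \in T} R_i$ by showing
\begin{equation*}
\Re \tr_\rho\Bigl[\Bigl(\prod_{i \in T} \check S_i\Bigr)^* \prod_{i \in T} R_i\Bigr] \ge 1 - O(\epsilon)
\end{equation*}
by induction on $|T|$. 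At the inductive step I isolate the outermost factors $\check S_{i_l}$ and $R_{i_l}$ and simultaneously insert the $\Z$-companions $\check S_{i_l} \otimes T_{i_l}$ and $R_{i_l} \otimes T_{i_l}$, each of which $O(\epsilon)$-stabilizes $\rho$: the $\Z$-side collapses via $T_{i_l}^2 = I$ while the $\X$-side exactly matches the target, so iterating Lemma~\ref{lem:approx-stab} on these three contractions (the inductive expression together with the two $\Z$-companions) controls the per-step loss by $O(\epsilon)$. Lemma~\ref{lem:approx-stab-3} then glues the two legs into the per-$T$ bound, and summing over the $2^k$ subsets (with $k$ constant) completes the proof.

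The main obstacle I foresee is the bridging induction, whose difficulty stems from the non-commutativity of the $\check S_i$'s. A direct telescoping of $\prod_{i \in T} \check S_i - \prod_{i \in T} R_i$ in the seminorm $\norm{\cdot}_\rho$ would pay a square-root cost per step, because rotating $\rho$ by a non-commuting contraction is only controlled up to $O(\sqrt\epsilon)$ via the gentle measurement lemma. Passing to the stabilizer form $\check S_i \otimes T_i$ and $R_i \otimes T_i$ converts these multiplicative rotation-type estimates into additive stabilization estimates handled cleanly by Lemma~\ref{lem:approx-stab}, keeping the accumulated error linear in $\epsilon$, which is precisely what the final $\dis_\rho(M, \check N) \le O(\epsilon^{1/2})$ bound demands.
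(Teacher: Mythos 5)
Your proposal is correct, and its first half—choosing $VM^a$ and $N^aV$ as measurement operators, expanding in derived reflections, and using character orthogonality to reduce $\dis_\rho(M,\check N)^2$ to the $2^k$ subset-indexed estimates $\Re\tr_\rho\bigl[\prod_{i\in T}R_i\cdot V^*\bigl(\prod_{i\in T}S_i\bigr)V\bigr]\ge 1-O(\epsilon)$—is exactly the paper's reduction. Where you diverge is in proving the per-subset estimates. The paper runs a single interleaved induction on $k$: it sandwiches the induction-hypothesis expression with $R_k^{x_k}$ (via its $\Z$-companion, Lemma~\ref{lem:approx-stab}), swaps the trailing $R_k^{x_k}$ for $\check S_k^{x_k}$ using the hypothesis $\Re\tr_\rho(R_k\check S_k)\approx_\epsilon 1$, and then drops the inserted $VV^*$ by the Cauchy--Schwarz step of Eq.~\eqref{eq:VV*}, never invoking Lemma~\ref{lem:overline}. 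You instead factor the argument through the intermediate operator $\prod_{i\in T}\check S_i$: you transfer the $\Z$-consistency from $R_i$ to $\check S_i$ via Lemma~\ref{lem:approx-stab-3}, invoke Lemma~\ref{lem:overline} as a black box for the leg connecting $V^*(\prod S_i)V$ to $\prod\check S_i$ (which is where the paper's Cauchy--Schwarz $VV^*$-bookkeeping now lives), run a separate palindromic sandwich induction for the leg connecting $\prod\check S_i$ to $\prod R_i$, and glue with Lemma~\ref{lem:approx-stab-3}. Your bridging induction is sound: writing the target as $(\check S_{i_l}\otimes T_{i_l})(C\otimes\I)(R_{i_l}\otimes T_{i_l})$ with $T_{i_l}^2=\I$ and applying Lemma~\ref{lem:approx-stab} twice per step does keep the error additive and linear in $\epsilon$ (with a constant blowing up only with $k$, which is constant), precisely as the final $O(\epsilon^{1/2})$ bound requires—and your closing remark correctly identifies why naive telescoping in $\norm{\cdot}_\rho$ would lose a square root per step; this is also the mechanism hidden in the paper's terse citation of Lemma~\ref{lem:approx-stab} for its first approximation. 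The trade-off: the paper's one-pass induction is shorter, while your two-leg factorization is more modular, reuses Lemma~\ref{lem:overline} rather than re-deriving its content inline, and makes explicit that the $\Z$-consistency hypothesis is exactly what licenses sandwiching without square-root loss.
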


\begin{proof}
  By the definition of measurement $\check{N}$, we have
  \begin{equation*}
    \begin{split}
      \check{N}^a & = V^* \biggl[ \prod_{i=1}^k \frac{\I +
        (-1)^{a_i}S_i}{2} \biggr] V\\
      & = \frac{1}{2^k} \sum_{x\in \{0,1\}^k} (-1)^{\ip{a}{x}} V^*
      \Bigl( \prod_{i=1}^k S_i^{x_i} \Bigr) V.
    \end{split}
  \end{equation*}
  This implies that
  \begin{equation*}
    \begin{split}
      \sum_a \Re \tr_\rho (M^a \check{N}^a) & = \frac{1}{2^{2k}}
      \sum_{a,x,y} (-1)^{\ip{a}{x\oplus y}} \Re \tr_\rho \biggl[
      \Bigl(\prod_{i=1}^k R_i^{x_i} \Bigr) \Bigl( V^* \prod_{i=1}^k
      S_i^{y_i} V\Bigr) \biggr]\\
      & = \frac{1}{2^k} \sum_x \Re \tr_\rho \biggl[ \Bigl(
      \prod_{i=1}^k R_i^{x_i} \Bigr) V^* \Bigl( \prod_{i=1}^k
      S_i^{x_i} \Bigr) V \biggr].
    \end{split}
  \end{equation*}
  Note that in the this proof, the superscript of an operator
  represents the corresponding power of the operator and is not the
  index for measurement outcome as in the other parts of the paper.

  We claim that for all $x\in \{0,1\}^k$, the term in the summand
  \begin{equation*}
    \Re \tr_\rho \biggl[ \Bigl(\prod_{i=1}^k R_i^{x_i} \Bigr) V^*
    \Bigl( \prod_{i=1}^k S_i^{x_i} \Bigr) V \biggr] \approx_\epsilon 1.
  \end{equation*}
  This will concludes the proof by the definition of $\dis_\rho$ for
  POVMs by choosing $\{VM^a\}$ and
  \begin{equation*}
    \biggl\{ \biggl[ \prod_{i=1}^k \frac{\I + (-1)^{a_i}S_i}{2}
    \biggr] V \biggr\}
  \end{equation*}
  as the measurement operators for the two POVMs $M$ and $\check{N}$
  respectively.
  We prove this claim by an induction on $k$.
  For $k=1$, the claim is exactly the condition $\Re \tr_\rho \bigl(
  R_1 \check{S}_1 \bigr) \approx_\epsilon 1$.
  Assume now the claim holds for $k-1$, and we prove the case for $k$.

  We have
  \begin{subequations}
    \begin{align}
      \Re \tr_\rho \biggl[ \Bigl(\prod_{i=1}^k R_i^{x_i} \Bigr) V^*
      \Bigl(\prod_{i=1}^{k-1} S_i^{x_i} \Bigr) V R_k^{x_k} \biggr]
      & \approx_\epsilon 1,\\
      \Re \tr_\rho \biggl[ \Bigl(\prod_{i=1}^k R_i^{x_i} \Bigr) V^*
      \Bigl(\prod_{i=1}^{k-1} S_i^{x_i} \Bigr) V \check{S}_k^{x_k}
      \biggr] & \approx_\epsilon 1,
    \end{align}
  \end{subequations}
  where the first approximation follows from
  Lemma~\ref{lem:approx-stab} and the induction hypothesis, the second
  approximation follows from the condition $\Re \tr_\rho \bigl( R_k
  \check{S}_k \bigr) \approx_\epsilon 1$.
  Then, by the use of Cauchy-Schwarz inequality as in
  Eq.~\eqref{eq:VV*}, it follows that
  \begin{equation*}
    \Re \tr_\rho \biggl[ \Bigl(\prod_{i=1}^k R_i^{x_i} \Bigr) V^*
    \Bigl(\prod_{i=1}^{k} S_i^{x_i} \Bigr) V \biggr]
    \approx_\epsilon 1.
  \end{equation*}
\end{proof}

\begin{proof}[Proof of Theorem~\ref{thm:ms-game}]
  We first prove the claims in Eqs.~\eqref{eq:ms-1}
  and~\eqref{eq:ms-2} of the theorem for Pauli operators $D_u\in
  \Pauli_{n,k}$ of weight one and $Q\in \Power_{n,k}$ containing $k$
  such operators acting on $k$ different qubits.
  For this, we only need to consider the first two tests of the game
  and the analysis is almost the same as the proof for the multi-qubit
  stabilize game defined in~\cite{Ji16}.
  We prove the claim for player $(t)$, which suffices to conclude the
  claims for all players by the symmetry of the game.

  In the proof, we will refer to operators with different accents.
  Reflections with a hat are from the measurement strategy of the
  players, reflections with a tilde represent intermediate operators,
  and operators with a check correspond honest players' measurements
  up to an isometry.
  Our goal is therefore bound the distance between operators with a
  hat to those with a check.
  Keeping this convention in mind may help understanding the proof.

  As the strategy $\Strategy$ has value at least $1-\epsilon$, the
  referee rejects in the \game{Stabilizer Check} with probability at
  most $4\epsilon$.
  It follows that, for $u \in [n]$, strategy $\Strategy_u =
  \bigl(\rho, \bigl\{ \hat{D}_u^{(i)} \bigr\} \bigr)$ has value at
  least $1-\epsilon_1$ in the eight-qubit stabilizer game in
  Fig.~\ref{fig:stab-game} for $\epsilon_1 = 4n\epsilon$.
  By the rigidity of the stabilizer game
  (Theorem~\ref{thm:stab-game}), there exist unitary operators $V_u\in
  \Lin(\R_t,\B_u\otimes \tilde{\R}_t)$ for $u\in [n]$ such that
  \begin{equation}
    \label{eq:ms-tilde-approx}
    \hat{Z}_u = \tilde{Z}_u, \quad \Re \tr_{\rho} \bigl( \hat{X}_u
    \tilde{X}_u \bigr) \approx_{\epsilon_1} 1,
  \end{equation}
  where $\tilde{D}_u = V_u^*(D\otimes \I)V_u$ for $D\in \{X,Z\}$ and
  $u\in [n]$.
  Note that we have omitted the superscript for the player index for
  simplicity.

  Define a new strategy $\Strategy'_u$ which is the same as
  $\Strategy_u$ except that reflections $\hat{D}_u$ are replaced with
  $\tilde{D}_u$ for player $(t)$.
  It then follows by Lemma~\ref{lem:approx-stab} and
  Eq.~\eqref{eq:ms-tilde-approx} that strategy $\Strategy'_u$ has
  value at least $1-O(\epsilon_1)$ in the eight-qubit stabilizer game.
  As $X^{\otimes 8}$ and $Z^{\otimes 8}$ are both in the set $\Xi$, it
  follows that $\tilde{D}_u$ and $\bigotimes_{i\ne t}\hat{D}_u^{(i)}$
  are $O(\epsilon_1)$-consistent.

  Similarly, for all $D_u\in Q\in \Power_{n,k}$ where $Q$ contains $k$
  Pauli operators of weight one, consider the state $\rho$,
  reflections $\hat{D}_u^{(i)}$ for $i\ne t$ and reflection
  $\hat{D_u|Q}$ for player $(t)$.
  By the test in \game{Confusion Check}, they form a strategy for the
  stabilizer game with value at least $1-O(\epsilon_2)$ where
  $\epsilon_2 = n^k \epsilon$.
  It follows that $\hat{D_u|Q}$ and $\bigotimes_{i\ne
    t}\hat{D}_u^{(i)}$ are $O(\epsilon_2)$-consistent.

  For any $Q$ that contains $D_u, D_v$, applying
  Lemma~\ref{lem:commute} with $\tilde{D}_u$, $\tilde{D}_v$ as $R_1$,
  $R_2$, $\hat{D_u|Q}$, $\hat{D_v|Q}$ as $S_1$, $S_2$, and
  $\bigotimes_{i\ne t} \hat{D}_u^{(i)}$, $\bigotimes_{i\ne t}
  \hat{D}_v^{(i)}$ as $U_1$ and $U_2$, we have for all $u\ne v \in
  [n]$,
  \begin{equation}
    \label{eq:ms-commute}
    \Re\tr_\rho \bigl( \tilde{D}_u \tilde{D}_v \tilde{D}_u \tilde{D}_v
    \bigr) \approx_{\epsilon_2} 1.
  \end{equation}

  Let $\reg{X}$, $\reg{Y}$ be quantum registers with $n$ qubits each.
  For $u\in [n]$, define $\ket{\Phi}_u$ to be the EPR state between
  the qubits $\bigl( \reg{X}, u \bigr)$ and $\bigl( \reg{Y}, u
  \bigr)$, and define isometry $W_u \in \Lin(\R_t,
  \X_u\otimes\Y_u\otimes \R_t)$ as
  \begin{equation*}
    W_u = (\I \otimes V_u^*) \SWAP_u (\ket{\Phi}_u \otimes V_u),
  \end{equation*}
  where $\SWAP_u$ is the $\SWAP$ gate acting on qubits $\bigl(
  \reg{X}, u \bigr)$ and the first output qubit $\reg{B}_u$ of $V_u$.

  Define isometry $V\in \Lin \bigl(\R_t, \X \otimes \Y \otimes \R_t
  \bigr)$ as the sequential application of $W_1, W_2, \ldots, W_n$,
  \begin{equation}
    \label{eq:ms-V}
    V = W_n W_{n-1} \cdots W_1.
  \end{equation}
  We claim that this choice of $V$ works for the stated claims by
  taking $\R'_t$ to be $\Y \otimes \R_t$.
  Define operators
  \begin{equation}
    \label{eq:ms-check}
    \check{D}_u = V^* (D_u\otimes \I) V\; \text{ for } D_u \in
    \{X_u, Z_u\}.
  \end{equation}

  As $D_u$ and $W_v$ commute for all $v>u$, we have
  \begin{equation*}
    \check{D}_u = W_1^* W_2^* \cdots W_{u-1}^* \tilde{D}_u W_{u-1}
    W_{u-2} \cdots W_1.
  \end{equation*}

  For each $u\in [n]$, define a quantum channel
  \begin{equation}
    \label{eq:ms-T}
    \mathfrak{T}_u(\rho) = \frac{\rho + \tilde{X}_u \rho \tilde{X}_u +
      \tilde{Z}_u \rho \tilde{Z}_u + \tilde{X}_u
      \tilde{Z}_u \rho \tilde{Z}_u\tilde{X}_u}{4}.
  \end{equation}
  By a series of applications of Lemma~\ref{lem:W},
  \begin{equation}
    \label{eq:ms-depolarize}
    \check{D}_u = \mathfrak{T}_1 \circ
    \mathfrak{T}_2 \circ \cdots \circ \mathfrak{T}_{u-1} (\tilde{D}_u).
  \end{equation}
  We remark that the channel $\mathfrak{T}_u$ is exactly the
  depolarizing channel with respect to the qubit defined by the
  anti-commuting pair of reflections $\tilde{X}_u$ and $\tilde{Z}_u$.
  The expression in Eq.~\eqref{eq:ms-depolarize} states that the
  operators $\check{D}_u$ defined in Eq.~\eqref{eq:ms-check} using EPR
  states and $\SWAP$ can be constructed by sequentially depolarizing the
  qubits defined by reflections $\tilde{X}_v$ and $\tilde{Z}_v$ for
  $v=u-1, u-2, \ldots, 1$.

  We then claim that for all $v \le u\in [n]$
  \begin{equation}
    \label{eq:ms-D-vu}
    \Re \tr_\rho \bigl( \tilde{D}_u \, \mathfrak{T}_v \circ
    \mathfrak{T}_{v+1} \circ \cdots \circ \mathfrak{T}_{u-1}
    (\tilde{D}_u) \bigr) \approx_{(u-v)^2 \epsilon_2} 1.
  \end{equation}
  We prove the claim by induction on $v$.
  For $v = u$, the claim follows from the fact that $\tilde{D}_u$ is a
  reflection.
  Assume that the statement holds for some $v>1$ and we prove the case
  for $v-1$.
  Define for simplicity
  \begin{equation*}
    R_v = \mathfrak{T}_v \circ \mathfrak{T}_{v-1} \circ \cdots \circ
    \mathfrak{T}_{u-1} (\tilde{D}_u),
  \end{equation*}
  and the induction hypothesis becomes
  \begin{equation*}
    \Re \tr_\rho \bigl( \tilde{D}_u \, R_v \bigr) \approx_{(u-v)^2
      \epsilon_2} 1.
  \end{equation*}

  For $v-1$, we have by the induction hypothesis, the existence of
  consistency reflections for $\tilde{X}_{v-1}$,
  Eq.~\eqref{eq:ms-commute} and Lemma~\ref{lem:approx-stab},
  \begin{equation*}
    \Re \tr_\rho \bigl(\tilde{D}_u \tilde{X}_{v-1} R_v \tilde{X}_{v-1}
    \bigr) \approx_{(u-v+1)^2\epsilon_2} 1.
  \end{equation*}
  By a similar argument that adds $\tilde{Z}_{v-1}$ and
  $\tilde{X}_{v-1}\tilde{Z}_{v=1}$ in the expression, we have
  \begin{equation*}
    \Re \tr_\rho \bigl( \tilde{D}_u \, \mathfrak{T}_{v-1} (R_v) \bigr)
    \approx_{(u-v+1)^2 \epsilon_2} 1,
  \end{equation*}
  which proves the claim in Eq.~\eqref{eq:ms-D-vu} for $v-1$.

  Taking $v=1$ in Eq.~\eqref{eq:ms-D-vu}, we have for all $u\in [n]$,
  \begin{equation*}
    \Re \tr_\rho \bigl( \tilde{D}_u \check{D}_u \bigr) \approx_{n^2
      \epsilon_2} 1.
  \end{equation*}
  Lemma~\ref{lem:approx-stab} and Eq.~\eqref{eq:ms-tilde-approx} then
  imply that, for $\epsilon_3 = n^2 \epsilon_2$,
  \begin{equation}
    \label{eq:ms-1-1}
    \Re \tr_\rho \bigl( \hat{D}_u \check{D}_u \bigr) \approx_{
      \epsilon_3} 1,
  \end{equation}
  which proves the approximation in Eq.~\eqref{eq:ms-1} for Pauli
  operator of weight one by choosing $\kappa$ large enough.

  We then prove Eq.~\eqref{eq:ms-2} for the case where $Q \subset
  \Pauli_{n,1}$, containing commuting Pauli operators of weight one
  only.
  First recall that
  \begin{equation*}
    \Re\tr_\rho \Bigl( \hat{D}_u \otimes \Bigl( \bigotimes_{i\ne
      t}\hat{D}_u^{(i)} \Bigr) \Bigr) \approx_{\epsilon_1} 1.
  \end{equation*}
  By Lemma~\ref{lem:approx-stab} and the consistency analysis in the
  beginning of the proof, we have
  \begin{equation*}
    \Re\tr_\rho \bigl( \hat{D}_u \hat{D_u|Q} \bigr)
    \approx_{\epsilon_2} 1.
  \end{equation*}
  By Eq.~\eqref{eq:ms-1-1},
  \begin{equation*}
    \Re\tr_\rho \bigl( \check{D}_u \hat{D_u|Q} \bigr)
    \approx_{\epsilon_3} 1.
  \end{equation*}
  With Lemma~\ref{lem:derived}, this proves the approximation in
  Eq.~\eqref{eq:ms-2} for $Q \subseteq \Pauli_{n,1}$.

  For the proof of Eq.~\eqref{eq:ms-1}, we analyze the \game{Parity
    Check} part of the game.
  Consider a strategy $\Strategy'$ that is the same as $\Strategy$ but
  with reflections $\hat{D}_u$ replaced by $\check{D}_u$ and
  measurements $\hat{Q}$ replaced by $\check{Q}$ for $Q\subseteq
  \Pauli_{n,1}$ for all players.
  It then follows by the claims proved above and the monotonicity of
  trace distance that the value of strategy $\Strategy'$ is at least
  $1-O(\epsilon_4)$ for $\epsilon_4 = n^k\epsilon_3^{1/2}$.
  For $i\in [8]$, and $P = \prod_{u\in J} D_u$, define operators
  \begin{equation*}
    \tilde{P}^{(i)} \defeq \prod_{u\in J}\, \check{D}_u^{(i)},\quad
    \check{P}^{(i)} \defeq V_i^* (P\otimes \I) V_i,
  \end{equation*}
  and write $\tilde{P}$ for $\tilde{P}^{(t)}$, $\check{P}$ for
  $\check{P}^{(t)}$.
  When strategy $\Strategy'$ is used, the bit $a^{(i)}$ defined in the
  game corresponds to the measurement defined by $\check{P}^{(i)}$ for
  $i\ne t$.

  As strategy $\Strategy'$ has value at least $1-O(\epsilon_4)$, the
  test in \game{Parity Check} implies that for $P\in \Pauli_{n,k}$
  \begin{equation}
    \label{eq:ms-parity}
    \Re\tr_\rho \Bigl( \hat{P} \otimes \Bigl( \bigotimes_{i\ne t} \check{P}^{(i)}
    \Bigr) \Bigr) \approx_{\epsilon_5} 1,
  \end{equation}
  for $\epsilon_5 = \abs{\Pauli_{n,k}}\epsilon_4$.
  Similarly, the test in \game{Stabilizer Check} implies that
  \begin{equation*}
    \Re\tr_\rho \Bigl( \check{D}_u \otimes \Bigl( \bigotimes_{i\ne t}
    \check{D}_u^{(i)} \Bigl) \Bigr) \approx_{\epsilon_5} 1.
  \end{equation*}
  By Lemma~\ref{lem:approx-stab}, this implies that
  \begin{equation*}
    \Re\tr_\rho \Bigl( \tilde{P} \otimes
    \Bigl( \bigotimes_{i\ne t} \tilde{P}^{(i)} \Bigl) \Bigr)
    \approx_{\epsilon_5} 1.
  \end{equation*}
  By Lemma~\ref{lem:overline},
  \begin{equation*}
    \Re\tr_\rho \bigl( \check{P}^{(i)} \cdot \tilde{P}^{(i)} \bigr)
    \approx_{\epsilon_3} 1,
  \end{equation*}
  for all $i\in [r]$ and therefore, by Lemma~\ref{lem:approx-stab-3},
  \begin{equation}
    \label{eq:ms-all-check}
    \Re\tr_\rho \Bigl( \check{P} \otimes
    \Bigl( \bigotimes_{i\ne t} \check{P}^{(i)} \Bigl) \Bigr)
    \approx_{r \epsilon_5} 1.
  \end{equation}
  Together with Eq.~\eqref{eq:ms-parity} and this equation,
  Lemma~\ref{lem:approx-stab-3} implies
  \begin{equation*}
    \Re\tr_\rho \bigl( \hat{P} \cdot \check{P} \bigr)
    \approx_{r \epsilon_5} 1,
  \end{equation*}
  which completes the proof for Eq.~\eqref{eq:ms-1}.

  We now prove the statement in~\eqref{eq:ms-2}.
  Consider a strategy $\check{\Strategy}$ that is the same as
  $\Strategy$ but replaces all reflection $\hat{P}$ with $\check{P}$.
  It then follows that the value of the strategy $\check{\Strategy}$
  is at least $1-O(n^kr^{1/2}\epsilon_5^{1/2})$.
  In strategy $\check{\Strategy}$, the players measures $\check{P}$
  for question $P$.
  The test in \game{Pauli Check} implies that for $\epsilon_6 =
  \abs{\Power_{n,k}} n^k\epsilon_5^{1/2}$
  \begin{equation*}
    \Re\tr_\rho \Bigl( \hat{P|Q} \otimes \Bigl( \bigotimes_{i\ne t}
    \check{P}^{(i)} \Bigr) \Bigr) \approx_{\epsilon_6} 1.
  \end{equation*}
  Applying Lemma~\ref{lem:approx-stab-3} again to this equation and
  Eq.~\eqref{eq:ms-all-check}, we have
  \begin{equation*}
    \Re\tr_\rho \bigl( \hat{P|Q} \cdot \check{P} \bigr)
    \approx_{\epsilon_6} 1,
  \end{equation*}
  which completes the proof for Eq.~\eqref{eq:ms-2} using
  Lemma~\ref{lem:derived}.

  The second part of the theorem follows from a similar argument used
  to prove the second part of Theorem~\ref{thm:stab-game}.
\end{proof}

We note that we haven't tried to optimize the dependence of the
approximation error on $n$ and $\epsilon$ as it is not important for
our work. With a more careful analysis, it should be possible to have
a much better dependence on these parameters.

\section{Propagation Games, Constraint Propagation Games and Rigidity}
\label{sec:prop}

\subsection{Propagation Games}

In this section, we define the propagation game, an extended nonlocal
game that checks the propagation of a sequences of reflections.
Let $R_1, R_2, \ldots, R_n$ be $n$ reflections.
Let $\mathfrak{R} = \bigl( R_{\zeta_i} \bigr)_{i=1}^N$ be a sequence
of reflections with indices $\zeta_i \in [n]$.
The propagation game is an extended nonlocal game that checks the
propagation of the reflection sequence $\mathfrak{R}$ on the player's
system.

Let $\bigl( v_i \bigr)_{i=0}^N$ be an increasing sequence of integers
of length $N+1$.
The propagation graph $G=(V,E)$ of the reflections $\mathfrak{R} =
\bigl( R_{\zeta_i} \bigr)_{i=1}^N$ over the vertex sequence $\bigl(
v_i \bigr)_{i=0}^N$ is the labeled graph with vertex set $V = \{ v_i
\mid i=0, 1, \ldots, N \}$, edge set $E = \{ (v_{i-1}, v_i) \}$ and
edge labels $\Gamma_e = R_{\zeta_i}$ for $e=(v_{i-1},v_i)$.

For each propagation graph $G$, we can define a propagation game as in
Fig.~\ref{fig:prop-game}.
In the game, the referee possesses a clock register $\reg{S}$ with
associated Hilbert space $\complex^V$.
The question is sampled from the set $[n]$, the index set for the
reflections.
The player is expected to perform the two-outcome measurement
corresponding to the reflection $R_j$ for question $j$ and reply with
the measurement outcome.
For each edge $e=(u,v)\in E$, define a projective measurement $\Pi_e =
\bigl\{ \Pi_e^0, \Pi_e^1, \Pi_e^2 \bigr\}$ on $\reg{S}$ where
\begin{equation}
  \label{eq:Pi-e}
  \begin{split}
    \Pi_e^0 & = \frac{(\ket{u}+\ket{v})(\bra{u}+\bra{v})}{2},\\
    \Pi_e^1 & = \frac{(\ket{u}-\ket{v})(\bra{u}-\bra{v})}{2},\\
    \Pi_e^2 & = \I - \Pi_e^0 - \Pi_e^1.
  \end{split}
\end{equation}

\begin{figure}[!htb]
  \begin{shaded}
    \ul{Propagation Game}\\[1em]
    Let $G=(V,E)$ be the propagation graph defined above.
    The referee selects an edge $e \in E$ uniformly at random and
    sends the index $j \in [n]$ for edge label $\Gamma_e = R_j$ to the
    player and receives an answer bit $a$; performs the projective
    measurement $\Pi_e$ on register $\reg{S}$ and accepts if the
    outcome is $2$ or is equal to $a$; rejects otherwise.
  \end{shaded}
  \caption{The propagation game between a referee and a player.}
  \label{fig:prop-game}
\end{figure}

We prove the following lemma for propagation games.
Roughly, it says that the shared state has to be close to a history
state of the computation defined by the sequence of the reflections.

\begin{lemma}
  \label{lem:prop-game}
  The propagation game in Fig.~\ref{fig:prop-game} has nonlocal value
  $1$.
  Furthermore, the following property holds.
  Let $\Strategy = (\rho, \{\hat{R}_j\})$ be a strategy with $\rho \in
  \Density(\S \otimes \R)$ and $\hat{R}_j \in \Herm(\R)$.
  Define $\hat{U} \in \Unitary(\S\otimes \R)$,
  \begin{equation}
    \label{eq:prop-game-u}
    \hat{U} = \sum_{i=0}^N
    \ket{v_i}\bra{v_i} \otimes \hat{R}_{\zeta_{i}} \hat{R}_{\zeta_{i-1}}
    \cdots \hat{R}_{\zeta_1},
  \end{equation}
  and state $\ket{V} \in \S$
  \begin{equation}
    \label{eq:prop-game-V}
    \ket{V} = \frac{1}{\sqrt{\abs{V}}} \sum_{v\in V} \ket{v}.
  \end{equation}
  If strategy $\Strategy$ has value at least $1-\epsilon$, then there
  exists a state $\rho'_{\reg{R}}\in \Density(\R)$ such that
  \begin{equation*}
    \dtr \Bigl(\rho, \hat{U} \bigl( \ket{V}\bra{V} \otimes \rho'_{\reg{R}}
    \bigr) \hat{U}^* \Bigr) \le O(N^{3/2}\epsilon^{1/2}).
  \end{equation*}
\end{lemma}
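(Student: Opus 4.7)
The plan is to view the propagation game as a nonlocal-game realization of Kitaev's circuit-to-Hamiltonian construction, with the referee's clock register $\reg{S}$ playing the role of the clock and the player's register $\reg{R}$ playing the role of the computation register. Perfect completeness is immediate from the honest strategy $\rho = \hat{U}\bigl(\ket{V}\bra{V}\otimes \sigma\bigr)\hat{U}^*$ with $\sigma\in\Density(\R)$ arbitrary and the player measuring $\hat{R}_j$ on question $j$; using $\hat{R}_{\zeta_i}^2=\I$ one checks that, conditional on the clock outcome lying in $\{0,1\}$, the referee's bit always equals the player's answer.

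For the soundness direction, I would first rewrite the rejection probability as a Hamiltonian energy. Expanding $\hat{R}_j^a = (\I+(-1)^a\hat{R}_j)/2$ together with $\Pi_e^0 = \tfrac{1}{2}(P_{uv}+X_{uv})$ and $\Pi_e^1 = \tfrac{1}{2}(P_{uv}-X_{uv})$, where $P_{uv}=\ket{u}\bra{u}+\ket{v}\bra{v}$ and $X_{uv}=\ket{u}\bra{v}+\ket{v}\bra{u}$, a direct computation yields, for each edge $e=(v_{i-1},v_i)$,
\begin{equation*}
  H_e \defeq \Pi_e^0\otimes \hat{R}_{\zeta_i}^1 + \Pi_e^1\otimes \hat{R}_{\zeta_i}^0 = \tfrac{1}{2}\bigl(P_{uv}\otimes \I - X_{uv}\otimes \hat{R}_{\zeta_i}\bigr).
\end{equation*}
Averaging over the uniformly chosen edge gives $\omega^*(\Strategy) = 1 - \tr_\rho(H)/N$ for $H = \sum_e H_e \in \Pos(\S\otimes\R)$, so the hypothesis implies $\tr_\rho(H) \le \epsilon N$. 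The crucial observation is that the propagation unitary $\hat{U}$ from Eq.~\eqref{eq:prop-game-u} diagonalizes $H$: since $U_{i-1}^*\hat{R}_{\zeta_i} U_i = U_{i-1}^*\hat{R}_{\zeta_i}^2 U_{i-1} = \I$, one obtains
\begin{equation*}
  \hat{U}^*H\hat{U} = L\otimes \I_{\reg{R}}, \qquad L = \sum_{i=1}^{N}\ket{-}_{v_{i-1}v_i}\bra{-}_{v_{i-1}v_i},
\end{equation*}
where $L$ is (a rescaling of) the combinatorial Laplacian of the length-$N$ path on $V = \{v_0,\ldots,v_N\}$.

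Standard spectral analysis of the path Laplacian identifies its kernel as the span of $\ket{V}$ from Eq.~\eqref{eq:prop-game-V} and gives a spectral gap $\Delta = 2\sin^2\bigl(\pi/(2(N+1))\bigr) = \Omega(1/N^2)$. Hence $H$ has kernel $\hat{U}\bigl(\ket{V}\otimes \R\bigr)$ and is gapped above it by the same amount. Applying Lemma~\ref{lem:gapped} to the bound $\tr_\rho(H)\le \epsilon N$ yields
\begin{equation*}
  \dtr(\rho,\rho_\Pi) \le O\bigl(\sqrt{\epsilon N / \Delta}\bigr) = O\bigl(N^{3/2}\epsilon^{1/2}\bigr),
\end{equation*}
where $\rho_\Pi$ is the renormalized projection of $\rho$ onto $\ker H$. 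Every state supported on $\ker H$ factorizes as $\hat{U}\bigl(\ket{V}\bra{V}\otimes \rho'_{\reg{R}}\bigr)\hat{U}^*$ for some $\rho'_{\reg{R}}\in\Density(\R)$, which delivers the claimed bound.

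The only nontrivial ingredient is the $\Omega(1/N^2)$ spectral gap of the path Laplacian, which is a standard calculation; no subtle rigidity argument is needed. The conceptual content is that the three-outcome clock measurement $\Pi_e$ is engineered precisely so that, combined with the player's two-outcome reflection measurement, the per-edge expected rejection realizes Kitaev's propagation term $\tfrac{1}{2}(P_{uv}\otimes\I - X_{uv}\otimes\hat{R}_{\zeta_i})$. The quantitative loss $N^{3/2}$ is the product of the factor $N$ from summing edge terms and the factor $N$ from the $1/N^2$ gap, combined under the square root of Lemma~\ref{lem:gapped}.
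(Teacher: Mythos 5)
Your proposal is correct and follows essentially the same route as the paper's proof: it rewrites the rejection probability as the energy of the Kitaev-style propagation Hamiltonian $\tfrac{1}{2}\bigl(P_{uv}\otimes\I - X_{uv}\otimes\hat{R}_{\zeta_i}\bigr)$, conjugates by $\hat{U}$ to reduce to the path-graph Laplacian with its $\Omega(1/N^2)$ spectral gap and kernel spanned by $\ket{V}$, and concludes via the gentle-measurement machinery. The only cosmetic difference is that you apply Lemma~\ref{lem:gapped} to $H$ in the original frame, while the paper conjugates the state to $\rho' = \hat{U}^*\rho\hat{U}$ and invokes Lemma~\ref{lem:gentle} directly; the two are equivalent and yield the same $O(N^{3/2}\epsilon^{1/2})$ bound.
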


\begin{proof}
  The player chooses an arbitrary state $\ket{\psi} \in \R$ and
  initialize the registers $\reg{S}$ and $\reg{R}$ in the state $U
  (\ket{V} \otimes \ket{\psi})$ for
  \begin{equation*}
    U = \sum_{i=0}^N \ket{v_i}\bra{v_i} \otimes R_{\zeta_{i}}
    R_{\zeta_{i-1}} \cdots R_{\zeta_1},
  \end{equation*}
  He replies each question $j\in [n]$ with the outcome of the
  measurement defined by reflection $R_j$.
  A direct calculation then concludes the first part of the lemma.

  We now proves the second part of the lemma.
  For edge $e$, let $\zeta_e \in[n]$ be the index of reflection for
  the label $\Gamma_e = R_{\zeta_e}$.
  For a strategy $\Strategy = (\rho, \{\hat{R}_j\})$, the referee
  rejects with probability
  \begin{equation}
    \label{eq:prop-game-pr}
    \begin{split}
      & \E_{e \in E(G)} \tr_\rho \biggl( \Pi_e^0 \otimes
      \frac{\I-\hat{R}_{\zeta_e}}{2} + \Pi_e^1 \otimes
      \frac{\I+\hat{R}_{\zeta_e}}{2} \biggr)\\
      = & \frac{1}{2N} \tr_\rho \sum_{i=1}^N
      \bigl[(\ket{v_{i-1}}\bra{v_{i-1}} + \ket{v_i}\bra{v_i}) \otimes
      \I - (\ket{v_{i-1}}\bra{v_i} + \ket{v_i}\bra{v_{i-1}}) \otimes
      \hat{R}_{\zeta_i} \bigr]\\
      = & \frac{1}{2N} \tr_{\rho'} \sum_{i=1}^N
      (\ket{v_{i-1}}\bra{v_{i-1}} + \ket{v_i}\bra{v_i} -
      \ket{v_{i-1}}\bra{v_i} - \ket{v_i}\bra{v_{i-1}})\\
      = & \frac{1}{2N} \tr_{\rho'} L(G),
    \end{split}
  \end{equation}
  where $\rho' = \hat{U}^*\rho \hat{U}$ and the matrix $L(G)$ is the
  Laplacian matrix of graph $G$.
  As the strategy has value at least $1-\epsilon$, we have
  \begin{equation}
    \label{eq:prop-game-1}
    \tr_{\rho'} L(G) \le 2N\epsilon.
  \end{equation}

  By standard techniques~\cite{Sin93,Chu96}, the matrix $L(G)$ has a
  zero eigenvalue with eigenvector $\ket{V}$ and all other eigenvalues
  are at least $1/(N+1)^2$.
  That is, for projection $\Pi_V = \ket{V}\bra{V}$, we have
  \begin{equation*}
    L(G) \ge \frac{\I - \Pi_V}{(N+1)^2}.
  \end{equation*}
  Together with Eq.~\eqref{eq:prop-game-1}, this proves that
  \begin{equation*}
    \tr_{\rho'} (\Pi_V) \ge 1 - 2N(N+1)^2\epsilon.
  \end{equation*}
  By Lemma~\ref{lem:gentle}, it follows that
  \begin{equation*}
    \dtr(\rho',\ket{V}\bra{V}\otimes \rho'_{\reg{R}}) \le O(N^{3/2}\epsilon^{1/2}),
  \end{equation*}
  where $\rho'_{\reg{R}} \in \Density(\R)$ is defined as
  \begin{equation*}
    \rho'_{\reg{R}} = \frac{\bra{V}\rho'\ket{V}}{\tr_{\rho'}\Pi_V}.
  \end{equation*}
  It then follows that
  \begin{equation}
    \label{eq:prop-game-dtr}
    \dtr \bigl(\rho, \hat{U} (\ket{V}\bra{V} \otimes \rho'_{\reg{R}}) \hat{U}^*
    \bigr) \le O(N^{3/2}\epsilon^{1/2}),
  \end{equation}
  which completes the proof.
\end{proof}

We introduce two types of extensions to the propagation game.
The first type allows controlled reflections of the form
$\Lambda_c(R_j)$ in the sequence of reflections for $j\in [n]$ and $c$
the control qubit index of one of the referee's registers.
In the second extension, the referee may confuse the player by asking
him to perform $k$ pairwise commuting reflections and answer $k$
output bits, even though the referee is interested in the measurement
outcome of one of the reflections.
For this type of extension, the corresponding reflection in the
reflection sequence will have the form $R_{j|q}$, where $j\in q$ and
$q$ a subset of $[n]$ of size $k$.
The reflection sequence now becomes $\mathfrak{U} = \bigl( U_i
\bigr)_{i=1}^N$ where each $U_i$ has one of the following three
possible forms:
\begin{enumerate}
\item $U_i = R_j$ for some reflection index $j\in [n]$,
\item $U_i = \Lambda_c(R_j)$ for $j\in [n]$ and $c$ the control qubit
  index,
\item $U_i = R_{j|q}$ for $j\in q$ and $q$ a subset of $[n]$ of size
  $k$.
\end{enumerate}
In the propagation graph for the sequence $\mathfrak{U} = \bigl( U_i
\bigr)_{i=1}^N$ over vertices $\bigl( v_i \bigr)_{i=0}^N$, the label
for edge $e=(v_{i-1},v_i)$ is $\Gamma_e = U_i$.

The propagation game is updated to accommodate the changes accordingly
in Fig.~\ref{fig:prop-game-ext}.
The referee possesses two registers, the clock register $\reg{S}$ as
before and a register $\reg{X}$ containing the control qubits.
The question to the player is either a $j\in [n]$ or a set $q\subset
[n]$ of size $k$.
The strategy of the player can be described by $\Strategy = \bigl(
\rho, \{\hat{R}_j\}, \{M_q\} \bigr)$ where $\hat{R}_j$ is the
reflection corresponding to the measurement the player performs for
question $j$.
Measurement $\{M_q\}$ is a projective measurement with $k$ outcome
bits for question $q$.
For each $j\in q$, define derived reflections
\begin{equation*}
  \hat{R}_{j|q} = \sum_{b:q\rightarrow \{0,1\}} (-1)^{b(j)} M_q^b.
\end{equation*}
These $k$ reflections then equivalently characterize the measurement
$\{M_q\}$.
The strategy can then be equivalently described by
\begin{equation*}
  \Strategy = \bigl(\rho, \{\hat{R}_j\}, \{\hat{R}_{j|q}\} \bigr).
\end{equation*}

We will refer to this generalization of the propagation game also as
the propagation game as there will be no confusion.
As Lemma~\ref{lem:prop-game}, the following lemma holds for this
extended version of propagation games.

\begin{figure}[!htb]
  \begin{shaded}
    \ul{Propagation Game (Extended)}\\[1em]
    Let $G=(V,E)$ be the propagation graph.
    The referee selects an edge $e \in E$ uniformly at random.
    \begin{enumerate}
    \item If $\Gamma_e = R_j$ for some $j\in [n]$, he sends $j$ to the
      player and receives an answer bit $a$; performs the projective
      measurement $\Pi_e$ on register $\reg{S}$ and accepts if the
      outcome is $2$ or is equal to $a$; rejects otherwise.

    \item If $\Gamma_e = \Lambda_c(R_j)$, sends $j$ to the player and
      receives an answer bit $a$; performs the projective measurement
      $\Pi_e$ on register $\reg{S}$ and accepts if the outcome $t=2$
      and continue otherwise; measures $Z_{\reg{X},c}$ with outcome
      bit $b$ and rejects if $b=0,t=1$ or $b=1, a\oplus t=1$; accepts
      otherwise.

    \item Otherwise, for $\Gamma_e = R_{j|q}$, sends $q$ to the player
      and receives answer $b:q\rightarrow \{0,1\}$; performs the
      projective measurement $\Pi_e$ on register $\reg{V}$ and accepts
      if the outcome is $2$ or equals to $b(j)$; rejects otherwise.
    \end{enumerate}
  \end{shaded}
  \caption{The propagation game between a referee and a player.}
  \label{fig:prop-game-ext}
\end{figure}

\begin{lemma}
  \label{lem:prop-game-ext}
  The propagation game with controlled reflections in
  Fig.~\ref{fig:prop-game-ext} has nonlocal value $1$.
  Furthermore, the following property holds.
  Let $\Strategy = (\rho, \{\hat{R}_j\}, \{\hat{R}_{j|q}\})$ be a
  strategy with $\rho \in \Density(\S \otimes \X \otimes \R)$,
  $\hat{R}_j \in \Herm(\R)$ and $\hat{R}_{j|q}$ the derived
  reflections from the strategy.
  Define $\hat{U}\in \Unitary(\S\otimes \X \otimes \R)$
  \begin{equation}
    \hat{U} = \sum_{i=0}^N
    \ket{v_i}\bra{v_i} \otimes \hat{U}_{i} \hat{U}_{i-1}
    \cdots \hat{U}_{1},
  \end{equation}
  where $\hat{U}_i = \I_{\reg{X}} \otimes \hat{R}_j$ for $U_i = R_j$,
  $\hat{U}_i = \Lambda_{\reg{X},c} (\hat{R}_j)$ for $U_i =
  \Lambda_{c}(R_j)$ and $\hat{U}_i = \I_{\reg{X}}\otimes
  \hat{R}_{j|q}$ for $U_i = R_{j|q}$.
  Define state $\ket{V} \in \S$
  \begin{equation*}
    \ket{V} = \frac{1}{\sqrt{\abs{V}}} \sum_{v\in V} \ket{v}.
  \end{equation*}
  If strategy $\Strategy$ has value at least $1-\epsilon$, then there
  exists a state $\rho'_{\reg{XR}}\in \Density(\X\otimes \R)$ such
  that
  \begin{equation*}
    \dtr \Bigl(\rho, \hat{U} \bigl( \ket{V}\bra{V} \otimes \rho'_{\reg{XR}}
    \bigr) \hat{U}^* \Bigr) \le O(N^{3/2}\epsilon^{1/2}).
  \end{equation*}
\end{lemma}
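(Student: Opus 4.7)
The plan is to mimic the proof of Lemma~\ref{lem:prop-game} with only a per-edge case analysis to account for the three allowed edge types. The target is the same: show that in the frame rotated by $\hat{U}^{*}$, each edge contributes exactly a Laplacian term on the clock register with no residual operator on $\reg{X}\otimes\reg{R}$, then invoke the spectral gap of the path Laplacian and the gentle measurement lemma.

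First I would write the rejection probability of the game as a sum of local Hamiltonian terms $H_e$, one per edge $e=(v_{i-1},v_i)$. For $\Gamma_e = R_j$ this is the same term as in Lemma~\ref{lem:prop-game}. For $\Gamma_e = R_{j|q}$, the POVM elements of $M_q$ grouped by the value $b(j)\in\{0,1\}$ equal $(\I\pm \hat{R}_{j|q})/2$, so the term is identical in form to the plain reflection case with $\hat{R}_j$ replaced by $\hat{R}_{j|q}$. For $\Gamma_e = \Lambda_c(R_j)$ I would enumerate the seven non-rejecting $(t,b,a)$ combinations prescribed by Fig.~\ref{fig:prop-game-ext} and collect the rejecting ones into
\begin{equation*}
  H_e \;=\; \ket{0}\bra{0}_c\otimes \Pi_e^1\otimes \I \;+\; \ket{1}\bra{1}_c\otimes \bigl(\Pi_e^0\otimes \tfrac{\I-\hat{R}_j}{2} + \Pi_e^1\otimes \tfrac{\I+\hat{R}_j}{2}\bigr).
\end{equation*}
Using $\Pi_e^0\pm\Pi_e^1$ to re-expand the clock projections, this rearranges into the uniform shape
\begin{equation*}
  H_e \;=\; \tfrac{1}{2}\bigl[(\ket{v_{i-1}}\bra{v_{i-1}}+\ket{v_i}\bra{v_i})\otimes \I - (\ket{v_{i-1}}\bra{v_i}+\ket{v_i}\bra{v_{i-1}})\otimes \hat{U}_i\bigr],
\end{equation*}
where $\hat{U}_i$ is precisely the factor defined in the lemma statement for each of the three types. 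This uniform form is exactly what is needed.

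Next I would conjugate by $\hat{U}$. Writing $\hat{U}=\sum_i \ket{v_i}\bra{v_i}\otimes A_i$ with $A_i=\hat{U}_i\hat{U}_{i-1}\cdots \hat{U}_1$ and $A_0=\I$, the telescoping identity $A_i^{*}\hat{U}_i A_{i-1}=A_{i-1}^{*}\hat{U}_i^{*}\hat{U}_i A_{i-1}=\I$ shows that
\begin{equation*}
  \hat{U}^{*}\bigl(\ket{v_i}\bra{v_{i-1}}\otimes \hat{U}_i\bigr)\hat{U} \;=\; \ket{v_i}\bra{v_{i-1}}\otimes \I,
\end{equation*}
and symmetrically for the adjoint term, while the diagonal terms are untouched. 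Hence, for $\rho'=\hat{U}^{*}\rho\hat{U}$, the total rejection probability is exactly
\begin{equation*}
  \tfrac{1}{2N}\tr_{\rho'}\bigl(L(G)\otimes \I_{\reg{XR}}\bigr)\le \epsilon,
\end{equation*}
where $L(G)$ is the Laplacian of the length-$N$ path on $\{v_0,\dots,v_N\}$.

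Finally I would finish exactly as in Lemma~\ref{lem:prop-game}: $L(G)$ has kernel spanned by $\ket{V}$ and spectral gap at least $1/(N+1)^2$, so $\tr_{\rho'}(\ket{V}\bra{V}\otimes \I)\ge 1-2N(N+1)^2\epsilon$, and Lemma~\ref{lem:gentle} yields the desired bound $\dtr(\rho,\hat{U}(\ket{V}\bra{V}\otimes \rho'_{\reg{XR}})\hat{U}^{*})\le O(N^{3/2}\epsilon^{1/2})$ with $\rho'_{\reg{XR}}=\bra{V}\rho'\ket{V}/\tr_{\rho'}(\ket{V}\bra{V}\otimes \I)$. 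The completeness claim is witnessed by the honest strategy $\rho = \hat{U}(\ket{V}\bra{V}\otimes \ket{\psi}\bra{\psi})\hat{U}^{*}$ for any $\ket{\psi}\in \X\otimes \R$. The only non-routine step is the controlled-reflection case analysis in the first paragraph above; once that delivers the uniform form of $H_e$, everything else is a mechanical transcription of the original proof.
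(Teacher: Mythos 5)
Your proposal matches the paper's own proof essentially step for step: the paper likewise computes the rejection probability for the $\Lambda_c(R_j)$ and $R_{j|q}$ edges, shows each reduces to the uniform form $\frac{1}{2}\bigl[(\ket{v_{i-1}}\bra{v_{i-1}}+\ket{v_i}\bra{v_i})\otimes \I - (\ket{v_{i-1}}\bra{v_i}+\ket{v_i}\bra{v_{i-1}})\otimes \hat{U}_i\bigr]$ with $\hat{U}_i = \Lambda_{\reg{X},c}(\hat{R}_j)$ or $\I\otimes\hat{R}_{j|q}$, and then invokes the Laplacian-gap and gentle-measurement argument of Lemma~\ref{lem:prop-game} verbatim. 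Your case analysis, the conjugation by $\hat{U}$, and the completeness witness are all correct as stated, so this is a faithful reconstruction of the intended proof.
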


\begin{proof}
  If $\Gamma_e = \Lambda_c(R_j)$, the referee rejects with probability
  \begin{equation*}
    \tr_\rho \Bigl[ \ket{0}\bra{0}_{\reg{X},c} \otimes \Pi_e^1 \otimes
    \I_{\reg{R}} + \ket{1}\bra{1}_{\reg{X},c} \otimes \Bigl( \Pi_e^1
    \otimes \frac{\I + \hat{R}_j}{2} + \Pi_e^0 \otimes \frac{\I -
      \hat{R}_j}{2} \Bigr)\Bigr].
  \end{equation*}
  If $e=(v_{i-1}, v_i)$, a direct calculation simplifies the above
  expression to
  \begin{equation*}
    \frac{1}{2}\tr_\rho \bigl[(\ket{v_i}\bra{v_i} +
    \ket{v_{i-1}}\bra{v_{i-1}}) \otimes \I_{\reg{XR}} -
    (\ket{v_i}\bra{v_{i-1}} + \ket{v_{i-1}}\bra{v_i})\otimes
    \Lambda_{\reg{X},c}(\hat{R}_j) \bigr].
  \end{equation*}

  Similarly, for $\Gamma_e = R_{j|q}$, the referee rejects with
  probability
  \begin{equation*}
    \frac{1}{2}\tr_\rho \bigl[ (\ket{v_i}\bra{v_i} +
    \ket{v_{i-1}}\bra{v_{i-1}}) \otimes \I_{\reg{R}} -
    (\ket{v_i}\bra{v_{i-1}} + \ket{v_{i-1}}\bra{v_i})\otimes
    \hat{R}_{j|q} \bigr].
  \end{equation*}
  Note that even though the player does not know $j$ when the referee
  sampled $\Gamma_e = R_{j|q}$, he can nevertheless prepare the
  initial state that depends on $R_{j|q}$ as the propagation graph of
  the game is known to the player.

  The proof now follows along similar lines as in the proof for
  Lemma~\ref{lem:prop-game}.
\end{proof}

\subsection{Constraint Propagation Game}

We define an extended nonlocal game, called the constraint propagation
game, for any system of product-form reflection constraints.
A system of product-form reflection constraints is described by a set
of $n$ reflections $R_1, R_2, \ldots, R_n$ and a set of $m$ operator
identities $C_1, C_2, \ldots, C_m$ where each $C_i$ is of the form
\begin{equation*}
  R_{j_{i,1}} R_{j_{i,2}} \cdots R_{j_{i,n_i}} = (-1)^{\tau_i} \I,
\end{equation*}
for $j_{i,1}, j_{i,2}, \ldots, j_{i,n_i} \in [n]$ and $\tau_i\in
\{0,1\}$.
We will sometimes abuse the notion and also use $C_i$ to denote the
product
\begin{equation*}
  R_{j_{i,1}} R_{j_{i,2}} \cdots R_{j_{i,n_i}},
\end{equation*}
so the constraints become $C_i = (-1)^{\tau_i}\I$.
Let $N_i=\sum_{j=1}^i n_j$ be the total number of occurrences of the
reflections in $C_1, C_2, \ldots, C_i$.
Define $N=N_m$ and $N_0=0$ for notational convenience.
The number $N$ is referred to as the size of the constraint system.

We say that the constraint system has a \textit{quantum satisfying
  assignment} if there exists an Hilbert space and $n$ reflections
acting on the space such that all constraints are satisfied.
These definitions are closely related to but different from the
concept of binary constraint system games introduced in~\cite{CM12}.
The formulation here allows more flexible ways of specifying relations
between reflections and do not enforce commutativity between
reflections occurring in the same constraint.
For example, we allow the constraints of the form
\begin{equation*}
  R_1R_2R_1R_2 = \I,\quad R_1R_2R_1R_2 = -\I,
\end{equation*}
representing the commutativity and anti-commutativity of two
reflections $R_1$ and $R_2$ respectively, while they are not
explicitly available in binary constraint system games.

For later convenience, we allow the derived reflections in the
constraints.
In this case, each constraint is now of the form
\begin{equation*}
  U_{i,1} U_{i,2} \cdots U_{i,n_i} = (-1)^{\tau_i} \I,
\end{equation*}
where $U_{i,j}$ is either one of the $n$ reflections $R_j$ or of the
form $R_{j|q}$ for $j\in q$ and $q\subset [n]$ of constant size.
In an satisfying assignment of the constraint system, we require
additionally that the reflections assigned to $R_{j|q}$ and $R_{j'|q}$
for $j,j'\in q$ commute.
But we do not require that $R_{j|q}$ is related to $R_j$ in any way at
this moment.

Let $\mathfrak{U} = \bigl( U_i \bigr)_{i=1}^N$ be the sequence of all
the reflections in $C_1, C_2, \ldots, C_m$ in the order that they
occur.
More specifically, for $1\le i \le m$ and $N_{i-1} < v \le N_i$, $U_v
= U_{i,v-N_{i-1}}$.
Let graph $G_{\prop} = (V_{\prop},E_{\prop})$ be the propagation graph
of the sequence $\mathfrak{U}$ over the vertices $\bigl( i
\bigr)_{i=0}^N$, and graph $G_{\cons} = (V_{\cons}, E_{\cons})$ be the
propagation graph of the sequence $\bigl( (-1)^{\tau_i}\I
\bigr)_{i=1}^m$ over the vertices $\bigl( N_i \bigr)_{i=0}^m$.
Define the constraint graph of the constraint system as $G = (V,E)$
with vertex set $V = V_{\prop}$ and $E = E_{\prop} \cup E_{\cons}$.
More explicitly, the vertex set is $V=\{0, 1, \ldots, N\}$, edge sets
$E_{\prop}$ and $E_{\prop}$ are
\begin{equation*}
  \begin{split}
    E_{\prop} & = \bigl\{ (i-1,i) \mid 1\le i \le N \bigr\},\\
    E_{\cons} & = \bigl\{ (N_{i-1},N_i) \mid 1\le i\le m \bigr\}.
  \end{split}
\end{equation*}
Edge $e=(i-1,i)\in E_{\prop}$ is labeled by $\Gamma_e = U_i$.
Edge $e=(N_{i-1},N_i)\in E_{\cons}$ is labeled by the operator
$\Gamma_e = (-1)^{\tau_i}\I$ on the right hand side of the constraint
$C_i$.

The constraint propagation game defined by the system of reflection
constraints is an extended nonlocal game between a referee and a
player.
The referee holds a quantum register $\reg{S}$ associated with Hilbert
space $\S = \complex^{V}$ for $V$ being the vertex set of graph $G$.
The referee will either sample and send $j\in [n]$ and expect the
player to perform the measurement represented by the corresponding
reflection $R_j$ and respond with the measurement outcome, or send
$q\subset [n]$ and expect the player to perform all the measurement
represented by the corresponding reflection with index in $q$ and send
back all the measurement outcomes.
The constraint propagation game is given in
Fig.~\ref{fig:cons-prop-game}.
Intuitively, in the \game{Propagation Check} part, the referee checks
the propagation of the reflections $U_i$ for $i=1,2,\ldots, N$ on the
state of the player.
In the \game{Constraint Check} part, the referee tests whether the
constraints are satisfied assuming correct propagations enforced by
the \game{Propagation Check}.
Note that in this part, the referee does not ask any question to the
player as the reflections are proportional to identities.

\begin{figure}[!htb]
  \begin{shaded}
    \ul{Constraint Propagation Game}\\[1em]
    Let $G, G_{\prop}, G_{\cons}$ be the constraint graph and two
    corresponding propagation graphs defined above.
    The referee does the following with equal probability:
    \begin{enumerate}
    \item \game{Propagation Check}.
      Plays the \game{Propagation Game} specified by the propagation
      graph $G_{\prop}$.

    \item \game{Constraint Check}.
      Selects an edge $e \in E_{\cons}$ uniformly at random; measures
      $\Pi_e$ on register $\reg{V}$ and accepts if the outcome is $2$
      or is equal to the sign $\tau$ in $\Gamma_e = (-1)^{\tau} \I$;
      rejects otherwise.
    \end{enumerate}
  \end{shaded}
  \caption{The constraint propagation game defined by a system of
    reflection constraints.}
  \label{fig:cons-prop-game}
\end{figure}

We prove the following lemma about constraint propagation games.

\begin{lemma}
  \label{lem:cons-prop-game}
  For any system of reflection constraints that has a quantum
  assignment, the nonlocal value of the corresponding propagation game
  is $1$.
  Moreover, for any strategy $\Strategy = \bigl( \rho, \{\hat{R}_j\},
  \{\hat{R}_{j|q}\} \bigr)$ that has value at least $1-\epsilon$, the
  constraints $C_i$'s are approximately satisfied in the following
  sense.
  Let $N$ be the size of the constraint system.
  Let $p_0$ and $\rho_0 \in \Density(\R)$ be the probability and the
  post-measurement state if a computational basis measurement on $\S$
  is performed on $\rho$ and the outcome is $0$.
  Define
  \begin{equation*}
    \hat{C}_i = \hat{U}_{i,1} \hat{U}_{i,2} \cdots \hat{U}_{i,n_i}.
  \end{equation*}
  Then, there exists a constant $\kappa$ such that,
  \begin{equation}
    \label{eq:cp-game-c}
    \Re \tr_{\rho_0} \hat{C}_i \approx_{N^\kappa\epsilon^{1/\kappa}} (-1)^{\tau_i},
  \end{equation}
  and
  \begin{equation}
    \label{eq:cp-game-pr}
    p_0 \approx_{N^\kappa\epsilon^{1/\kappa}} \frac{1}{N+1}.
  \end{equation}
\end{lemma}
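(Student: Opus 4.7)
For the value-$1$ claim I would construct the honest strategy directly from the quantum satisfying assignment: a collection of reflections $\{R_j\}$ (and, in the derived case, commuting tuples on the same register) satisfying each $C_i=(-1)^{\tau_i}\I$ as an operator identity. The honest player prepares $\hat U\bigl(\ket{V_{\prop}}\otimes\ket{\psi}\bigr)$, where $\hat U$ is the propagation unitary of Lemma~\ref{lem:prop-game-ext} built from these reflections, and responds to each question by measuring the assigned reflection. The \game{Propagation Check} accepts with probability $1$ by the first part of Lemma~\ref{lem:prop-game-ext}. For the \game{Constraint Check} on an edge $(N_{i-1},N_i)$, writing $Q_k=\hat U_k\cdots\hat U_1$, the identity $Q_{N_i}=\hat C_i^{*}Q_{N_{i-1}}=(-1)^{\tau_i}Q_{N_{i-1}}$ shows that the history state lies entirely in the $\Pi_e^{\tau_i}$ eigenspace, so $\Pi_e^{1-\tau_i}$ contributes zero.

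For the rigidity claim I would begin with a strategy $\Strategy$ of value at least $1-\epsilon$. Because the two subgames are sampled uniformly, each has rejection probability at most $2\epsilon$. The first step is to apply Lemma~\ref{lem:prop-game-ext} to the \game{Propagation Check} on $G_{\prop}$, producing a state $\rho'_{XR}\in\Density(\X\otimes\R)$ with
\begin{equation*}
\dtr(\rho,\tilde\rho)\le O(N^{3/2}\epsilon^{1/2}),\qquad \tilde\rho=\hat U\bigl(\ket{V_{\prop}}\bra{V_{\prop}}\otimes\rho'_{XR}\bigr)\hat U^{*}.
\end{equation*}
Because $Q_0=\I$ implies $(\ket{0}\bra{0}\otimes\I)\hat U=\ket{0}\bra{0}\otimes\I$, monotonicity of trace distance under measurement gives $p_0\approx 1/(N+1)$ and $\rho_0\approx\rho'_{XR}$, which already proves~\eqref{eq:cp-game-pr}. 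For each constraint edge $(N_{i-1},N_i)$, a direct computation on $\tilde\rho$ using $Q_{N_i}=\hat C_i^{*}Q_{N_{i-1}}$ gives
\begin{equation*}
\tr_{\tilde\rho}\bigl(\Pi_e^{1-\tau_i}\bigr)=\frac{1}{N+1}\Bigl(1-(-1)^{\tau_i}\Re\tr_{\rho'_{XR}}\bigl(Q_{N_{i-1}}^{*}\hat C_i^{*}Q_{N_{i-1}}\bigr)\Bigr),
\end{equation*}
and summing over the $m$ edges, combined with the constraint-check rejection bound and trace-distance monotonicity, yields
\begin{equation*}
\Re\tr_{\rho'_{XR}}\bigl(Q_{N_{i-1}}^{*}\hat C_i^{*}Q_{N_{i-1}}\bigr)\approx_{N^{\kappa}\epsilon^{1/\kappa}}(-1)^{\tau_i}
\end{equation*}
for every $i$ and some absolute constant $\kappa$.

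The main obstacle is then converting this conjugated bound into $\Re\tr_{\rho_0}\hat C_i\approx(-1)^{\tau_i}$: the quantity that appears naturally is conjugated by the partial propagation $Q_{N_{i-1}}$, whereas $\rho_0$ is the state before any propagation has been applied. I would handle this by repeating the $p_k,\rho_k$ analysis for every computational-basis outcome $k$: since $(\ket{k}\bra{k}\otimes\I)\hat U=\ket{k}\bra{k}\otimes Q_k$, each outcome has $p_k\approx 1/(N+1)$ and post-measurement state $\rho_k\approx Q_k\rho'_{XR}Q_k^{*}$. Instantiating this at $k=N_{i-1}$ rewrites the displayed bound as $\Re\tr_{\rho_{N_{i-1}}}\hat C_i^{*}\approx(-1)^{\tau_i}$, and the identity $\Re\tr_\rho A=\Re\tr_\rho A^{*}$ for Hermitian $\rho$ delivers~\eqref{eq:cp-game-c} in the form needed downstream. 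Combining the three sources of error---the propagation-check error $O(N^{3/2}\epsilon^{1/2})$, the constraint-check error $2\epsilon$ averaged over $m$ edges, and the $1/p_k$-type normalization---into a single bound of the form $N^{\kappa}\epsilon^{1/\kappa}$ with $\kappa$ independent of $N$ is the delicate bookkeeping that I expect to be the only nontrivial calculation.
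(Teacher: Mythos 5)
Your value-$1$ argument and the first half of the soundness analysis match the paper: applying Lemma~\ref{lem:prop-game-ext} to obtain the history-state approximation $\tilde\rho$, computing $\tr_{\tilde\rho}\bigl(\Pi_e^{1-\tau_i}\bigr)$ on the constraint edges, and deriving $p_0 \approx 1/(N+1)$ together with $\dtr(\rho_0,\rho'_{\reg{R}}) \le O(N\epsilon')$ by a computational-basis measurement on $\reg{S}$ are exactly the paper's steps. But your final step has a genuine gap. What the constraint edges give you is the conjugated bound $\Re\tr_{\rho'}\bigl(Q_{N_{i-1}}^*\hat{C}_i^* Q_{N_{i-1}}\bigr)\approx(-1)^{\tau_i}$; since $Q_{N_{i-1}}=\hat{C}_{1,i-1}^*$, this says constraint $C_i$ holds approximately on the state $\hat{C}_{1,i-1}^*\,\rho'\,\hat{C}_{1,i-1}$. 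Your proposed fix --- redoing the $p_k,\rho_k$ analysis at outcome $k=N_{i-1}$ --- merely renames this state as $\rho_{N_{i-1}}$: it establishes $\Re\tr_{\rho_{N_{i-1}}}\hat{C}_i\approx(-1)^{\tau_i}$, which is \emph{not} Eq.~\eqref{eq:cp-game-c}. The lemma asserts all $m$ constraints on the single state $\rho_0$, and for $i\ge 2$ the states $\rho_{N_{i-1}}$ and $\rho_0$ differ by conjugation with the partial propagation, over which you have no a priori control. Having each constraint hold only on its own constraint-dependent state is also useless downstream, where commutation and anti-commutation relations must hold simultaneously on one state in order to invoke Lemma~\ref{lem:XZ}.

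The missing idea is the paper's inductive peeling of the conjugation. The conjugated bound for constraint $j$ says that $(-1)^{\tau_j}\hat{C}_{1,j}\hat{C}_{1,j-1}^*$ approximately stabilizes $\rho'$ (error $O(N\epsilon')$), and Lemma~\ref{lem:approx-stab-2} shows that conjugating by an approximate stabilizer changes $\Re\tr_{\rho'}(\cdot)$ by at most $O(\sqrt{N\epsilon'})$, the signs cancelling between the operator and its adjoint. Hence
\begin{equation*}
  \Re\tr_{\rho'}\bigl(\hat{C}_{1,j}\,\hat{C}_i\,\hat{C}_{1,j}^*\bigr)
  \approx_{\sqrt{N\epsilon'}}
  \Re\tr_{\rho'}\bigl(\hat{C}_{1,j-1}\,\hat{C}_i\,\hat{C}_{1,j-1}^*\bigr),
\end{equation*}
and iterating over $j=i-1,\ldots,1$ strips off the prefix at total cost $O(m\sqrt{N\epsilon'})$, giving $\Re\tr_{\rho'}\hat{C}_i\approx(-1)^{\tau_i}$ on the \emph{fixed} state $\rho'$, which then transfers to $\rho_0$ by your (correct) distance bound. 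Without this step --- or an equivalent argument that $\hat{C}_{1,i-1}$ approximately stabilizes $\rho'$ up to a sign, e.g.\ via Lemma~\ref{lem:approx-stab} applied to the bounds for $j<i$ --- your proof stops one move short of the stated conclusion.
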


\begin{proof}
  If the constraint system has a quantum assignment, choose $\R$ to be
  the Hilbert space of the assignment and let $R_j, R_{j|n} \in
  \Herm(\R)$ be the reflections that satisfy all the constraints.
  The player chooses an arbitrary state $\ket{\psi} \in \R$ and
  initialize the registers $\reg{S}$ and $\reg{R}$ in the state $U
  (\ket{V} \otimes \ket{\psi})$ for $U$ and $\ket{V}$ defined as
  \begin{equation*}
    U = \sum_{i=0}^N \ket{v_i}\bra{v_i} \otimes U_i U_{i-1} \cdots U_1
  \end{equation*}
  and $\ket{V} = \sum_{v\in V}\ket{v}/\sqrt{\abs{V}}$.
  He replies each question $j\in [n]$ with the outcome of the
  measurement defined by reflection $R_j$ and replies question
  $q\subseteq [n]$ with the measurement outcomes of $R_{j|q}$ for all
  $j\in q$.
  A direct calculation then concludes the first part of the lemma.

  We now proves the second part of the lemma.
  Let $\Strategy = (\rho, \{\hat{R}_j\}, \{\hat{R}_{j|q}\})$ be a
  strategy that has value at least $1-\epsilon$.
  The referee rejects with probability at most $2\epsilon$ in the
  \game{Propagation Check} and, by Lemma~\ref{lem:prop-game-ext},
  \begin{equation}
    \label{eq:cp-game-dtr}
    \dtr(\rho, \tilde{\rho}) \le O(\epsilon'),
  \end{equation}
  for state $\tilde{\rho} = \hat{U} (\ket{V}\bra{V}\otimes
  \rho'_{\reg{R}}) \hat{U}^*$, $\rho'_{\reg{R}} \in \Density(\R)$, and
  $\epsilon' = N^{3/2}\epsilon^{1/2}$.

  We now analyze the checking corresponding to the edges in
  $E_{\cons}$.
  For edge $e_i = (N_{i-1}, N_i)$, define operator
  \begin{equation*}
    H_i = \frac{1-(-1)^{\tau_i}}{2} \Pi_{e_i}^0 +
    \frac{1+(-1)^{\tau_i}}{2} \Pi_{e_i}^1,
  \end{equation*}
  which simplifies to,
  \begin{equation*}
    H_i = \frac{1}{2} \bigl[ \ket{N_{i-1}}\bra{N_{i-1}} +
    \ket{N_{i}}\bra{N_{i}} - (-1)^{\tau_i} \ket{N_{i-1}}\bra{N_{i}} -
    (-1)^{\tau_i} \ket{N_{i}}\bra{N_{i-1}} \bigr],
  \end{equation*}
  by the definitions of $\Pi_{e_i}^0$ and $\Pi_{e_i}^1$.
  Edge $e_i$ is sampled with probability $1/2m$ in the game, and the
  probability that the referee rejects in this case is at most
  $2m\epsilon$ as the strategy has value at least $1-\epsilon$.
  This implies that $\tr_\rho H_i \le 2m\epsilon$, and by
  Eq.~\eqref{eq:cp-game-dtr},
  \begin{equation*}
    \tr_{\tilde{\rho}} H_i \le O(\epsilon').
  \end{equation*}

  By the definition of $\hat{U}$, $\hat{C}_i$ and $\tilde{\rho}$, this
  simplifies to
  \begin{equation*}
    \Re \tr_{\rho'_{\reg{R}}} \bigl( \hat{C}_1 \hat{C}_2
    \cdots \hat{C}_{i-1}\hat{C}_i \hat{C}_{i-1}^* \cdots
    \hat{C}_1^* \bigr) \approx_{N\epsilon'} (-1)^{\tau_i}.
  \end{equation*}

  For $i_0 \le i_1\in[m]$, introduce the notion
  \begin{equation*}
    \hat{C}_{i_0, i_1} = \hat{C}_{i_0}
    \hat{C}_{i_0+1} \cdots \hat{C}_{i_1},
  \end{equation*}
  and define $\hat{C}_{i_0,i_1} = \I$ if $i_0> i_1$.
  The last equation then becomes, for $i\in[m]$,
  \begin{equation}
    \label{eq:cp-game-approx}
    \Re \tr_{\rho'_{\reg{R}}} \bigl( \hat{C}_{1,i} \hat{C}_{1,i-1}^* \bigr)
    \approx_{N\epsilon'} (-1)^{\tau_i}.
  \end{equation}

  Using Lemma~\ref{lem:approx-stab-2}, we have for $1\le j<i$,
  \begin{equation*}
    \Re \tr_{\rho'_{\reg{R}}} \bigl( \hat{C}_{1,j} \hat{C}_i \hat{C}_{1,j}^*
    \bigr) \approx_{\sqrt{N\epsilon'}} \Re \tr_{\rho'_{\reg{R}}} \bigl[
    \bigl(\hat{C}_{1,j} \hat{C}_{1,j-1}^* \bigr)^* \hat{C}_{1,j}
    \hat{C}_i \hat{C}_{1,j}^* \bigl( \hat{C}_{1,j} \hat{C}_{1,j-1}^*
    \bigr) \bigr] = \Re \tr_{\rho'_{\reg{R}}} \bigl(\hat{C}_{1,j-1} \hat{C}_i
    \hat{C}_{1,j-1}^* \bigr).
  \end{equation*}
  A repeated application of the above approximation gives
  \begin{equation*}
    \Re \tr_{\rho'_{\reg{R}}} \hat{C}_i \approx_{m\sqrt{N\epsilon'}} \Re
    \tr_{\rho'_{\reg{R}}} \bigl( \hat{C}_{1,i-1} \hat{C}_i \hat{C}_{1,i-1}^*
    \bigr) \approx_{N\epsilon'} (-1)^{\tau_i}.
  \end{equation*}
  This proves the Eq.~\eqref{eq:cp-game-c} in the lemma but with state
  $\rho'_{\reg{R}}$ instead of $\rho_0$.
  To complete the proof it suffices to bound the distance between
  $\rho'_{\reg{R}}$ and $\rho_0$.

  Consider a computational basis measurement on system $\S$ of state
  $\tilde{\rho}$.
  By the definition of state $\tilde{\rho}$, it is obvious that the
  state left on $\R$ will be $\rho'_{\reg{R}}$ if the measurement
  outcome is $0$.
  Let $\tilde{p}_0 = 1/(N+1)$ be the probability that outcome $0$
  happens.
  By the monotonicity of the trace distance and
  Eq.~\eqref{eq:cp-game-dtr}, $\abs{p_0-\tilde{p}_0} \le \epsilon'$.
  This proves Eq.~\eqref{eq:cp-game-pr}.
  Also by the monotonicity of the trace distance and the triangle
  inequality, we have
  \begin{equation*}
    \begin{split}
      \dtr(\rho_0, \rho'_{\reg{R}}) & = \frac{N+1}{2}
      \norm{\tilde{p}_0 \rho_0
        - \tilde{p}_0 \rho'_{\reg{R}} }_1\\
      & \le \frac{N+1}{2} \left[ \norm{\tilde{p}_0 \rho_0 - p_0
          \rho_0}_1 + \norm{p_0 \rho_0 - \tilde{p}_0
          \rho'_{\reg{R}}}_1\right]\\
      & \le O(N\epsilon'),
    \end{split}
  \end{equation*}
  which completes the proof for Eq.~\eqref{eq:cp-game-c}.
\end{proof}

\subsection{Constraint Propagation Game for Multi-Qubit Pauli
  Operators}

Consider the following constraint system satisfied by the Pauli
operators of weight $k$ on $n$ qubits.
The reflections under consideration will be those in $\Pauli_{n,k}$.
Let $P|Q$ be the same as reflection $P$.
These reflections satisfy the constrains as follows:
\begin{enumerate}
\item $D_uD_vD_uD_v= \I$ for $u\ne v\in [n]$ and $D_u\in\{X_u,Z_u\}$
  and $D_v\in\{X_v,Z_v\}$;
\item $X_uZ_uX_uZ_u = -\I$ for $u\in [n]$;
\item $D_vX_uZ_uX_uZ_uD_v = -\I$ for $u,v\in [n]$, and $D_v \in \{X_v,
  Z_v\}$;
\item $P \cdot \prod_{v\in J} D_v = \I$ for $P\in \Pauli_{n,k}$, $J$
  the support of $P$ and $P = \prod_{v\in J}D_v$;
\item $(P|Q)P = \I$ for $Q\in \Power_{n,k}$ and $P\in Q$.
\end{enumerate}
We refer to this particular constraint system of reflections as the
$(n,k)$-constraint system.
It is easy to see that the size $N_{n,k}$ and the number of
constraints $m_{n,k}$ of the constraint system are at most
polynomially in $n$ for any constant $k$.

Consider the constraint propagation game of the $(n,k)$-constraint
system.
For operators of the form $X_u, Z_u, P, (P|Q)$, we add a hat to denote
the corresponding reflection in the player's strategy.
For example, $\hat{X}_u$, $\hat{Z}_u$ and $\hat{P}$ denote the
corresponding reflections of the player's strategy when receiving
measurement requests of $X_u$, $Z_u$, $P\in \Pauli_{n,k}$
respectively.
Similarly, $\hat{P|Q}$ denotes the derived reflection from the
player's projective measurement $\hat{Q}$ for the question $Q\in
\Power_{n,k}$ and $P\in Q$.

By Lemma~\ref{lem:cons-prop-game}, we can enforce approximate
satisfaction of these constraints on a quantum state.
For example, if $\bigl( \rho, \{ \hat{P} \}, \{ \hat{Q} \}\bigr)$ is a
strategy that has value at least $1-\epsilon$ in the constraint
propagation game defined by the $(n,k)$-constraint system, then for
constant $\kappa$ and $\epsilon' = N_{n,k}^\kappa
\epsilon^{1/\kappa}$,
\begin{equation*}
  \tr_{\rho_0} \bigl( \hat{D}_u \hat{D}_v \hat{D}_u \hat{D}_v \bigr)
  \approx_{\epsilon'} 1,
\end{equation*}
and
\begin{equation*}
  \tr_{\rho_0} \bigl( \hat{X}_u \hat{Z}_u \hat{X}_u \hat{Z}_u)
  \approx_{\epsilon'} -1.
\end{equation*}
These conditions will be helpful to prove rigidity type of results.
But unfortunately, we do not know if these conditions are already
sufficient enough to establish the existence of an isometry $V$ such
that, $\hat{P}$ is close to $P$ under the conjugation of $V$.
To complete the proof, we need to establish these approximation
properties not only on state $\rho_0$, but also on several other
states derived from it.
This is the reason that we will need to consider the following more
complicated game defined by the $(n,k)$-constraint system.

Let $\mathfrak{V}_{n,k} = \bigl( V_i \bigr)_{i=1}^{N_{n,k}}$ be the
sequence of the reflections (derived reflections) of the
$(n,k)$-constraint system.
Let $\mathfrak{W} = \bigl( W_i \bigr)_{i=1}^{N'}$ be the concatenation
of sequences
\begin{equation*}
  \mathfrak{V}_{n,k}, \Lambda_{2i-1}(X_i), \mathfrak{V}_{n,k},
  \Lambda_{2i}(Z_i)
\end{equation*}
for $i=1, 2, \ldots, n$.
The length of $\mathfrak{W}$ is $N'=2n(N_{n,k}+1)$.

A sequence of Pauli operators is called primitive if it consists of
$XZ$-form Pauli operators of weight one and has length at most $k$.
For any $Q\in \Power_{n,k}$, a derived sequence for $Q$ is a sequence
of the form $P_i|Q$ for $P_i\in Q$ of length at most $k$.
Let $\bigl( \mathfrak{Q}_l \bigr)_{l=1}^L$ be the sequence of all
sequences that is the concatenation of all possible primitive and
derived sequences, including the empty sequence as the first entry
$\mathfrak{Q}_1$.
The length of $\mathfrak{Q}_l$ is denoted as $q_l$.
For a sequence $\mathfrak{R}$ of reflections, let $\mathfrak{R}^*$ be
the sequence of entries in $\mathfrak{R}$ in the reversed order.
For $l\in [L]$, let $\mathfrak{U}_l$ be the concatenation of sequences
\begin{equation*}
  \mathfrak{Q}_l, \mathfrak{W}, \mathfrak{W}^*, \mathfrak{Q}_l^*.
\end{equation*}
The length of $\mathfrak{U}_l$ is $2(N' + q_l)$.
Finally, let $\mathfrak{U} = \bigl( U_i \bigr)_{i=1}^N$ be the
concatenation of sequences $\mathfrak{U}_l$ for $l=1, 2, \ldots, L$.

Let $N_j$ be the number of reflections in the first $j$ constraints of
the $(n,k)$-constraint system.
For $i=0, 1, \ldots, 2n-1$, define integer $N_0^i = i (N_{n,k}+1)$
that marks the vertex index for the start of the $(i+1)$-th occurrence
of $\mathfrak{V}_{n,k}$ in $\mathfrak{W}$.
For $i=0, 1, \ldots, 2n-1$, and $j=0, 1, \ldots, m_{n,k}$, define
integer $N_j^i = N_0^i + N_j$.
For $l\in [L]$, define
\begin{equation*}
  N_j^{i,l} = \sum_{l'=1}^{l-1} 2(N'+q_{l'}) + q_l + N_j^i.
\end{equation*}

Let graph $G_{\prop}=(V_{\prop}, E_{\prop})$ be the propagation graph
of the sequence $\mathfrak{U}$ over the vertex sequence $\{0, 1,
\ldots, N\}$.
For $i=1, 2, \ldots, 2n$, and $l=1, 2, \ldots, L$, let graph
$G_{\cons}^{i,l} = (V_{\cons}^{i,l}, E_{\cons}^{i,l})$ be the
propagation of $\bigl( (-1)^{c_j}\I \bigr)_{j=1}^{m_{n,k}}$ from the
right hand sides of the $(n,k)$-constraint system over vertex
sequences $\bigl( N_j^{i,l} \bigr)_{j=0}^{m_{n,k}}$.
Finally, define the constraint graph $G=(V,E)$ as $V=V_{\prop}$ and
\begin{equation*}
  E = E_{\prop} \cup \biggl( \bigcup_{i=1}^{2n} \bigcup_{l=1}^L
  E_{\cons}^{i,l} \biggr).
\end{equation*}

Define the $(n,k)$-constraint propagation game as in
Fig.~\ref{fig:mc-game}.
The $(n,k)$-constraint propagation game is an extended nonlocal game
between a referee and a player.
The referee possesses two registers $\reg{S}$ and $\reg{X}$.
Register $\reg{S}$ is a clock register with associated Hilbert space
$\complex^V$.
The control register $\reg{X}$ has $2n$ qubits.

\begin{figure}[!htb]
  \begin{shaded}
    \ul{$(n,k)$-Constraint Propagation Game}\\[1em]
    The referee does the following with equal probability:
    \begin{enumerate}
    \item \game{Propagation Check}.
      Plays the propagation game corresponding to the propagation
      graph $G_{\prop}$.

    \item \game{Initialization Check}.
      Measures the register $\reg{S}$, accepts if the outcome is not
      $0$ and continues otherwise; samples $i\in [2n]$, measures
      $X_{\reg{X},i}$ and accepts if the outcomes is $0$; rejects
      otherwise.

    \item \game{Constraint Check}.
      Randomly samples $i\in [2n]$, $l\in [L]$ and an edge $e \in
      E_{\cons}^{i,l}$; measures $\Pi_e$ on register $\reg{S}$ and
      accepts if the outcome is $2$ or is equal to $\tau$ for
      $\Gamma_e = (-1)^{\tau}\I$; rejects otherwise.
    \end{enumerate}
  \end{shaded}
  \caption{The constraint propagation game for the $(n,k)$-constraint
    system of Pauli operators on $n$ qubits of weight $k$.}
  \label{fig:mc-game}
\end{figure}

\begin{theorem}
  \label{thm:mc-game}
  The $(n,k)$-constraint propagation game has value $1$.
  Furthermore, there is a constant $\kappa$ such that for any strategy
  $\Strategy = \bigl( \rho, \{ \hat{P} \}, \{ \hat{Q} \} \bigr)$ that
  has value at least $1-\epsilon$, there exists an isometry $V\in
  \Lin(\R, \B^{\otimes n} \otimes \R')$ such that the following
  properties hold
  \begin{itemize}
  \item For all $P\in \Pauli_{n,k}$
    \begin{equation}
      \label{eq:mc-P}
      \dis_{\rho_0} \bigl( \hat{P}, \check{P} \bigr) \le
      O(N^\kappa \epsilon^{1/\kappa}),
    \end{equation}
    and, for all $Q\in \Power_{n,k}$,
    \begin{equation}
      \label{eq:mc-Q}
      \dis_{\rho_0} \bigl( \hat{Q}, \check{Q} \bigr) \le
      O(N^\kappa \epsilon^{1/\kappa}),
    \end{equation}
    where $\check{P} = V^* (P\otimes \I) V$, and $\check{Q}$ is the
    measurement of $k$ Pauli operators in $Q$ after the application of
    isometry $V$.

  \item The probability $p_0$ satisfies
    \begin{equation*}
      p_0 \approx_{N^\kappa \epsilon^{1/\kappa}} \frac{1}{N+1}.
    \end{equation*}
  \end{itemize}
  In the statement, $\rho_0$ is the reduced state on register
  $\reg{R}$ when the computational basis measurement on $\reg{S}$ has
  outcome $0$, and $p_0$ is the probability of outcome $0$ when
  measuring $\reg{S}$.
\end{theorem}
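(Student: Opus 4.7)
The first assertion, that the nonlocal value equals $1$, follows exactly as in Lemma~\ref{lem:cons-prop-game} and Lemma~\ref{lem:prop-game-ext}: the player prepares the history superposition over the clock register defined by the sequence $\mathfrak{U}$ starting from the honest Pauli reflections and the control qubits in $\ket{0}^{\otimes 2n}$, so that all three sub-games accept with certainty. For soundness, I will combine three ingredients, in roughly the order they are applied.

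First, since the strategy $\Strategy$ wins with probability at least $1-\epsilon$, it wins the \game{Propagation Check} with probability $1-O(\epsilon)$. Applying Lemma~\ref{lem:prop-game-ext} to the propagation graph $G_{\prop}$ of $\mathfrak{U}$, the state $\rho$ is $O(N^{3/2}\epsilon^{1/2})$-close in trace distance to a history state $\hat{U}(\ket{V}\bra{V}\otimes \rho'_{\reg{XR}})\hat{U}^*$. The \game{Initialization Check} then forces $\rho'_{\reg{XR}}$ to be approximately $\ket{0}\bra{0}^{\otimes 2n}\otimes \rho_0$ on the control register, so at every ``entry point'' $N_0^{i,l}$ in the clock, the controlled gates $\Lambda_{2i-1}(X_i)$ and $\Lambda_{2i}(Z_i)$ have not yet fired, meaning the reduced state on $\reg{R}$ at vertex $N_0^{i,l}$ equals (up to the isometry-like propagator) the image of $\rho_0$ under the reflection string $\mathfrak{Q}_l$. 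Second, the \game{Constraint Check} over each $E_{\cons}^{i,l}$ triggers the analysis of Lemma~\ref{lem:cons-prop-game}: each constraint $C_j$ of the $(n,k)$-constraint system is approximately satisfied, in the sense of Eq.~\eqref{eq:cp-game-c}, on the state obtained by acting on $\rho_0$ by the primitive or derived reflection string $\mathfrak{Q}_l$ (and by conditional Paulis through the $\Lambda$ operators). This yields, for every $l$ and every constraint, an approximate identity of the form $\Re \tr_{\rho_0}\bigl(\hat{\mathfrak{Q}}_l^{\,*}\,\hat{C}_j\,\hat{\mathfrak{Q}}_l\bigr)\approx (-1)^{\tau_j}$, which is exactly the flexibility needed for the rigidity argument below. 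The probability bound $p_0 \approx 1/(N+1)$ is the corresponding instance of Eq.~\eqref{eq:cp-game-pr}.

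Third, I specialize the approximately-satisfied constraints to extract the isometry. The anti-commutation constraints $X_uZ_uX_uZ_u = -\I$ give $\Re\tr_{\rho_0}(\hat{X}_u\hat{Z}_u\hat{X}_u\hat{Z}_u)\approx -1$, so Lemma~\ref{lem:XZ} produces, for each $u\in[n]$, a local unitary $V_u\in \Lin(\R, \B_u\otimes \tilde{\R})$ such that $\hat{Z}_u = V_u^*(Z\otimes \I)V_u$ and $\hat{X}_u\approx V_u^*(X\otimes \I)V_u$ in $\dis_{\rho_0}$. The commutation constraints $D_uD_vD_uD_v=\I$, together with the constraints $D_wX_uZ_uX_uZ_uD_w=-\I$ which are precisely what the primitive sequences $\mathfrak{Q}_l$ of length one are designed to witness on the ``rotated'' state $\hat{D}_w\rho_0\hat{D}_w$, let me run the EPR-and-SWAP assembly of Theorem~\ref{thm:ms-game} (Eqs.~\eqref{eq:ms-V}--\eqref{eq:ms-depolarize}): I sequentially depolarize through the qubits $u=1,\dots,n$ to obtain a single isometry $V=W_n\cdots W_1$ with $\check{D}_u = V^*(D_u\otimes \I)V$ approximately equal to $\hat{D}_u$ in $\dis_{\rho_0}$ for every $D_u\in\{X_u,Z_u\}$. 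Next, the product constraints $P \cdot \prod_{v\in J} D_v = \I$ from the $(n,k)$-constraint system extend this bound to all $P\in\Pauli_{n,k}$, yielding Eq.~\eqref{eq:mc-P}. Finally, the derived-reflection constraints $(P|Q)P=\I$ combined with Lemma~\ref{lem:derived} give Eq.~\eqref{eq:mc-Q} for every $Q\in \Power_{n,k}$.

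The main obstacle is that each individual invocation of Lemma~\ref{lem:XZ} and of the commutation deduction requires the relevant approximate identity to hold on the \emph{right} state — not always $\rho_0$ itself, but $\rho_0$ rotated by some Pauli-like string from the player's measurements. This is precisely the reason for the concatenation $\mathfrak{U}_l = \mathfrak{Q}_l,\mathfrak{W},\mathfrak{W}^*,\mathfrak{Q}_l^*$: by running every primitive and every derived string $\mathfrak{Q}_l$ before (and after) the constraint block $\mathfrak{W}$, the constraint check forces each constraint to hold on every state $\hat{\mathfrak{Q}}_l\rho_0\hat{\mathfrak{Q}}_l^{\,*}$, and I will need to combine these many approximate identities using Lemma~\ref{lem:approx-stab}, Lemma~\ref{lem:approx-stab-2}, and Lemma~\ref{lem:approx-stab-3} to conjugate conditions back to $\rho_0$ without accumulating more than $N^\kappa\epsilon^{1/\kappa}$ error. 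Tracking this polynomial blow-up carefully, and handling the shift between state-dependent norms on $\rho$ and on $\rho_0$ via the trace-distance bound from the \game{Propagation Check}, is the bookkeeping heart of the argument; everything else parallels the proofs of Lemma~\ref{lem:cons-prop-game} and Theorem~\ref{thm:ms-game}.
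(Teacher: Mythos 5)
There is a genuine gap, and it sits exactly at the heart of the construction: you have the control register $\reg{X}$ backwards. The \game{Initialization Check} measures $X_{\reg{X},i}$ and accepts on outcome $0$, which by the convention $R^a = (\I + (-1)^a R)/2$ forces each control qubit into $\ket{+}$, not $\ket{0}$; the paper's honest state is $U'(\ket{V}_{\reg{S}}\ket{\Phi}_{\reg{X}}\ket{\psi}_{\reg{R}})$ with $\ket{\Phi}_{\reg{X}}$ the \emph{uniform superposition} on $2n$ qubits. With your $\ket{0}^{\otimes 2n}$ initialization even perfect completeness fails: whenever the clock reads $0$ (probability $1/(N+1)$ on the history state) the $X$ measurement rejects with probability $1/2$, so the value would be $1-\Omega(1/N)$, not $1$. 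Consequently your soundness reading — that the controlled gates $\Lambda_{2i-1}(X_i)$, $\Lambda_{2i}(Z_i)$ ``have not yet fired'' at the entry points $N_0^{i,l}$, so the constraints only need to be transported to the states $\hat{\mathfrak{Q}}_l\,\rho_0\,\hat{\mathfrak{Q}}_l^{\,*}$ — misses the entire purpose of interleaving those controlled gates in $\mathfrak{W}$. With $\ket{+}$ controls the gates fire in superposition, and tracing out $\reg{X}$ turns the states at the entry points $N_0^{i,l}$ for $i\ge 1$ into the depolarized states $\check{\rho}_{i,l} = \mathfrak{F}_i\circ\cdots\circ\mathfrak{F}_1(\hat{\rho}_{0,l})$ with $\mathfrak{F}_{2i-1}(\sigma) = (\sigma + \hat{X}_i\sigma\hat{X}_i)/2$ and $\mathfrak{F}_{2i}(\sigma) = (\sigma + \hat{Z}_i\sigma\hat{Z}_i)/2$; the \game{Constraint Check} then forces every constraint to hold approximately on this whole family of states, which is Eq.~\eqref{eq:mc-approx-sat-check}.

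This matters because the step you delegate to ``the EPR-and-SWAP assembly of Theorem~\ref{thm:ms-game}'' does not go through without those states. In Theorem~\ref{thm:ms-game}, pushing $\tilde{D}_u$ through the channels $\mathfrak{T}_1\circ\cdots\circ\mathfrak{T}_{u-1}$ as in Eq.~\eqref{eq:ms-depolarize} is powered by the \emph{other seven players}: the induction for the analogue of Eq.~\eqref{eq:ms-D-vu} repeatedly inserts reflections that are consistent with operators on a different register (Lemma~\ref{lem:commute} and the ``existence of consistency reflections''). Here there is a single player and no consistent partner, so the conjugations by $\hat{X}_v$, $\hat{Z}_v$ must be justified some other way — and your substitute, the $\mathfrak{Q}_l$-conjugated constraints, cannot do it: primitive and derived sequences have length at most $k = O(1)$, while the depolarization induction needs the approximate identities on states conjugated by products of up to $\Theta(n)$ of the operators $\hat{X}_v$, $\hat{Z}_v$. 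Supplying exactly these conditions is what the $2n$ control qubits and the $2n$ interleaved copies of $\mathfrak{V}_{n,k}$ in $\mathfrak{W}$ are for (leading to Eq.~\eqref{eq:mc-cond-3} via the two chains of approximations on $\check{\rho}_{i,l}$). The rest of your outline — Lemma~\ref{lem:prop-game-ext} for the propagation part, the constraints conjugated by $\mathfrak{Q}_l$, Lemma~\ref{lem:XZ} for the per-qubit unitaries, the product and $(P|Q)P=\I$ constraints handled through Lemmas~\ref{lem:overline2} and~\ref{lem:derived2}, and $p_0 \approx 1/(N+1)$ — does match the paper, but the argument as proposed breaks at the single step where the multi-player crutch of Theorem~\ref{thm:ms-game} is unavailable.
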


We prove the following lemmas before proving the above theorem.

\begin{lemma}
  \label{lem:overline2}
  Let $R_1, R_2, \ldots, R_k\in \Herm(\Y)$ be $k$ pairwise commuting
  reflections, $V\in \Lin(\X,\Y)$ an isometry, $\rho \in
  \Density(\X\otimes \Z)$ a quantum state.
  Define $\check{R}_i = V^* R_i V$.
  If for all $x\in \{0,1\}^k$, and $\tilde{R}_x =
  \prod_{i=1}^kR_i^{x_i}$,
  \begin{equation*}
    \Re\tr_\rho \bigl( \tilde{R}_x^* R_i \check{R}_i \tilde{R}_x
    \bigr) \approx_\epsilon 1,
  \end{equation*}
  then
  \begin{equation*}
    \Re\tr_\rho \biggl[ V^* \Bigl( \prod_{i=1}^k R_i \Bigr) V \,
    \prod_{i=1}^k \Bigl( \check{R}_i \Bigr) \biggr]
    \approx_{\sqrt{\epsilon}} 1.
  \end{equation*}
\end{lemma}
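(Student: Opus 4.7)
The plan is to prove Lemma~\ref{lem:overline2} by induction on $k$, following the structure of the proof of Lemma~\ref{lem:overline}, but replacing the consistency-of-reflections hypothesis used there with direct appeals to the richer family of bounds indexed by $x \in \{0,1\}^k$ supplied in the present hypothesis. The base case $k = 1$ is immediate from the hypothesis at $x = 0$, which reads $\Re \tr_\rho(R_1 \check{R}_1) \approx_\epsilon 1$.

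For the inductive step, I would first apply the induction hypothesis to the first $k-1$ reflections, noting that the restriction of the given family to strings $x$ with $x_k = 0$ supplies the required hypotheses and that by commutativity $\tilde{R}_{(x,0)} = \tilde{R}_x$. Following the pattern of the proof of Lemma~\ref{lem:overline}, I would then sandwich the resulting intermediate identity with $\check{R}_k$'s via Lemma~\ref{lem:approx-stab}, using the present hypothesis at a suitably chosen $x$ (arranged so that $\tilde{R}_x^* R_k \tilde{R}_x$ collapses to $R_k$ by pairwise commutativity) to justify the sandwich, and arrive at
\begin{equation*}
\Re \tr_\rho \Bigl[ \check{R}_k V^* \Bigl(\prod_{i=1}^{k-1} R_i\Bigr) V \Bigl(\prod_{i=1}^{k-1} \check{R}_i\Bigr) \check{R}_k \Bigr] \approx 1.
\end{equation*}

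Finally, I would commute the outermost $\check{R}_k = V^* R_k V$ through the isometry $V$ to produce the target expression $V^*\bigl(\prod_{i=1}^k R_i\bigr) V \prod_{i=1}^k \check{R}_i$. Since $V V^* \ne \I$ when $V$ is only an isometry, this leaves a defect of the form $V^* R_k (\I - V V^*)(\cdots) V$ that is controlled exactly as in Equation~\eqref{eq:VV*} by a Cauchy-Schwarz estimate. This final Cauchy-Schwarz is the sole source of the square-root loss distinguishing the conclusion here from the $O(\epsilon)$ bound of Lemma~\ref{lem:overline}: in the absence of a consistency reflection on $\Z$ for $\check{R}_k$, the first factor $(1 - \tr_\rho \check{R}_k^2)^{1/2}$ can only be bounded at $O(\sqrt{\epsilon})$ from the raw hypothesis, for instance via $|\tr_\rho(R_k \check{R}_k)|^2 \le \tr_\rho R_k^2 \cdot \tr_\rho \check{R}_k^2$ together with the $x = 0$ case of the hypothesis, whereas in Lemma~\ref{lem:overline} the analogous factor was $O(\epsilon)$.

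The main obstacle I anticipate is identifying, at each appeal to the hypothesis, the correct bit string $x \in \{0,1\}^k$ so that $\tilde{R}_x^* R_i \tilde{R}_x$ simplifies via pairwise commutativity to $R_i$, which is what makes the telescoping of conjugations in the inductive step close properly. A related subtlety is that the hypothesis mixes operators on $\Y$ (the $R_i$ and $\tilde{R}_x$) with operators on $\X$ (the $\check{R}_i$), so one must track the insertion and removal of $V$ carefully at each step to ensure that the resulting expressions can be meaningfully composed and that the induction invariant is preserved up to the single $\sqrt{\epsilon}$ loss absorbed at the very end.
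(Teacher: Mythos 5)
Your skeleton (peel off $R_k$, control a $VV^*$ insertion by Cauchy--Schwarz through $1-\tr_\rho \check{R}_k^2 \le O(\epsilon)$, and recurse) matches the paper's proof in outline, and your base case is correct. But the inductive step as you describe it has a genuine gap at the sandwich. From the induction hypothesis you obtain $\Re\tr_\rho(B) \approx_{\sqrt{\epsilon}} 1$ for $B = V^* \bigl( \prod_{i=1}^{k-1} R_i \bigr) V \prod_{i=1}^{k-1} \check{R}_i$ \emph{on the state $\rho$}, and you then invoke Lemma~\ref{lem:approx-stab} to pass to $\Re\tr_\rho \bigl( \check{R}_k B \check{R}_k \bigr) \approx 1$. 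Lemma~\ref{lem:approx-stab} only lets you multiply by contractions that approximately \emph{stabilize} $\rho$, and $\check{R}_k$ is not one: $\tr_\rho \check{R}_k$ is in general nowhere near $1$. In the proof of Lemma~\ref{lem:overline} the analogous sandwich is legitimate only because the hypothesis there supplies an $\epsilon$-consistent reflection $U_k$ on $\Z$, so that $\check{R}_k \otimes U_k$ does $O(\epsilon)$-stabilize $\rho$ and $(\check{R}_k \otimes U_k) B (\check{R}_k \otimes U_k)$ reduces to $\check{R}_k B \check{R}_k$ since $U_k^2 = \I$; that consistency hypothesis is precisely what Lemma~\ref{lem:overline2} drops. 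Your remark that a "suitably chosen $x$" makes $\tilde{R}_x^* R_k \tilde{R}_x$ collapse to $R_k$ cannot rescue this: that collapse holds for \emph{every} $x$ by commutativity on $\Y$ and says nothing on $\X$, where the $\check{R}_i$ do not commute; and $\Re\tr_\rho(B) \approx 1$ simply does not transfer to the conjugated state $\check{R}_k \rho \check{R}_k$ without additional input.

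The repair, which is what the paper's (terse) proof actually does and what the $x$-indexed family is for, is to reverse your order of operations: change the state first, then invoke the induction hypothesis. One inserts $VV^*$ at the outermost level (Cauchy--Schwarz defect $O(\sqrt{\epsilon})$, the factor $1-\tr_\rho\check{R}_k^2$ being bounded via the hypothesis for the relevant string), which rewrites the target as the $(k-1)$-fold expression evaluated on the conjugated state $\check{R}_k \rho \check{R}_k$; the induction hypothesis is then applied to \emph{that} state, whose required consistency conditions are exactly the hypotheses with $x_k = 1$. The family over all $x \in \{0,1\}^k$ exists because each peel conjugates the state by one more reflection, so consistency must be available on every state reached along the recursion --- it is not a menu from which one picks a single convenient $x$ after proving the identity on $\rho$; in particular your restriction to $x_k = 0$ is backwards for the step you need. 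A secondary slip: your error accounting misattributes the square-root loss. The factor $\bigl(1-\tr_\rho\check{R}_k^2\bigr)^{1/2}$ is $O(\sqrt{\epsilon})$ in \emph{both} lemmas (in Lemma~\ref{lem:overline} via consistency, here via the hypothesis at $x=0$ and Cauchy--Schwarz); what degrades is the \emph{other} factor in Eq.~\eqref{eq:VV*}, which in Lemma~\ref{lem:overline} is also $O(\sqrt{\epsilon})$ thanks to the $O(\epsilon)$-accurate induction hypothesis there, but here can only be bounded by a constant, giving the overall $O(\sqrt{\epsilon})$.
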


\begin{proof}
  By the Cauchy-Schwarz inequality, it is easy to show
  \begin{equation}
    \Re\tr_\rho \biggl[ V^* \Bigl( \prod_{i=1}^k R_i \Bigr) V \,
    \prod_{i=1}^k \Bigl( \check{R}_i \Bigr) \biggr]
    \approx_{\sqrt{\epsilon}}
    \Re\tr_\rho \biggl[ V^* \Bigl( R_k VV^* \prod_{i=1}^{k-1} R_i
    \Bigr) V \, \prod_{i=1}^k \Bigl( \check{R}_i \Bigr) \biggr].
  \end{equation}
  By this equation and Lemma~\ref{lem:approx-stab-2}, we have
  \begin{equation*}
    \Re\tr_\rho \biggl[ V^* \Bigl( \prod_{i=1}^k R_i \Bigr) V \,
    \prod_{i=1}^k \Bigl( \check{R}_i \Bigr) \biggr]
    \approx_{\sqrt{\epsilon}} \Re\tr_\rho \biggl[ R_k V^* \Bigl(
    \prod_{i=1}^{k-1} R_i \Bigr) V \, \prod_{i=1}^{k-1} \Bigl(
    \check{R}_i \Bigr) R_k \biggr].
  \end{equation*}
  A repeated application of the above procedure then proves the claim
  in the lemma.
\end{proof}

\begin{lemma}
  \label{lem:derived2}
  Let $M = \bigl\{ M^a \bigr\}$ be a projective measurement of $k$-bit
  outcome on quantum register $\reg{X}$ and $R_1, R_2, \ldots, R_k$ be
  its derived reflections.
  Let $N = \bigl\{ N^a \bigr\}$ be a projective measurement of $k$-bit
  outcome on quantum register $\reg{Y}$ and $S_1, S_2, \ldots, S_k$ be
  its derived reflections.
  Let $V\in \Lin(\X,\Y)$ be an isometry and $\rho\in
  \Density(\X\otimes \Z)$ a quantum state.
  For $i\in [k]$, define $\check{S}_i = V^* S_i V$.
  Let $\check{N}$ be the quantum measurement that measures $N$ after
  the application of isometry $V$.

  If for $i\in [k]$, and all $x\in \{0,1\}^k$, $\tr_\rho \bigl(
  \tilde{R}_x R_i \check{S_i} \tilde{R}_x \bigr) \approx_\epsilon 1$
  where $\tilde{R}_x = \prod_{}R_i^{x_i}$, then
  \begin{equation*}
    \dis_\rho (M, \check{N}) \le O(\epsilon^{1/4}).
  \end{equation*}
\end{lemma}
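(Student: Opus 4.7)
The plan is to follow the proof of Lemma~\ref{lem:derived} almost verbatim, with Lemma~\ref{lem:overline2} playing the role that Lemma~\ref{lem:overline} played there. Since Lemma~\ref{lem:overline2} guarantees only a $\sqrt{\epsilon}$ closeness (as opposed to the $\epsilon$ guarantee of Lemma~\ref{lem:overline}), the final bound naturally degrades from $\epsilon^{1/2}$ to $\epsilon^{1/4}$, which is exactly what the statement claims.

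Concretely, I would begin with the same Naimark-type choice of measurement operators, namely $\{V M^a\}$ for $M$ and $\{N^a V\}$ with $N^a = \prod_i (\I + (-1)^{a_i} S_i)/2$ for $\check{N}$. Together with the isometry property of $V$ these yield
$$\dis_\rho(M,\check{N})^2 \le 2 - 2\Re \sum_a \tr_\rho(M^a \check{N}^a).$$
Expanding $M^a$ and $\check{N}^a$ in their derived reflections and using the Fourier orthogonality on $\{0,1\}^k$, exactly as in the proof of Lemma~\ref{lem:derived}, collapses the sum over $a$ into
$$\sum_a \tr_\rho(M^a \check{N}^a) = \frac{1}{2^k} \sum_{x \in \{0,1\}^k} \tr_\rho\bigl[\tilde{R}_x\, V^*\bigl(\textstyle\prod_i S_i^{x_i}\bigr) V\bigr].$$
It therefore suffices to show, for every $x \in \{0,1\}^k$, that the $x$-th summand lies within $O(\sqrt{\epsilon})$ of $1$.

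For a fixed $x$, applying Lemma~\ref{lem:overline2} to the pairwise commuting reflections $\{S_i^{x_i}\}_{i}$ on $\Y$ produces
$$\Re\tr_\rho\bigl[V^*\bigl(\textstyle\prod_i S_i^{x_i}\bigr) V \cdot \textstyle\prod_i \check{S}_i^{x_i}\bigr] \approx_{\sqrt{\epsilon}} 1,$$
whose hypothesis is a direct translation of our given condition. Specialising the given condition to $y = 0$ gives $\Re\tr_\rho(R_i \check{S}_i) \approx_\epsilon 1$ for each $i$; repeated applications of Lemma~\ref{lem:approx-stab} (using the mutual commutativity of the $R_i$'s) and Lemma~\ref{lem:approx-stab-3} then let one reshuffle the factors so that $\prod_i \check{S}_i^{x_i}$ on the right is replaced by $\tilde{R}_x$ on the left, yielding $\Re\tr_\rho[\tilde{R}_x V^*(\prod_i S_i^{x_i}) V] \approx_{\sqrt{\epsilon}} 1$. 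Averaging over $x$ and undoing the Naimark reduction then gives $\dis_\rho(M,\check{N}) \le O(\epsilon^{1/4})$.

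The main obstacle is this final reshuffling step: the pulled-back reflections $\check{S}_i = V^* S_i V$ do not in general commute (even though the $S_i$ do), so one must chain Lemmas~\ref{lem:approx-stab}, \ref{lem:approx-stab-2} and~\ref{lem:approx-stab-3} carefully and exploit that the hypothesis of Lemma~\ref{lem:derived2} is stated for every $y$, not just $y = 0$, in order to absorb the non-commuting conjugations by $\tilde{R}_y$ that arise along the way. A secondary obstacle is controlling the $VV^* \neq \I$ mismatch inside Lemma~\ref{lem:overline2}, which is exactly the source of the unavoidable $\sqrt{\epsilon}$ loss.
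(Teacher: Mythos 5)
Your proposal is correct, lands the claimed $O(\epsilon^{1/4})$ bound, and runs on the same engine as the paper's proof: the identical Naimark/Fourier reduction to showing $\Re \tr_\rho \bigl[ \tilde{R}_x \, V^* \bigl( \prod_i S_i^{x_i} \bigr) V \bigr] \approx_{\sqrt{\epsilon}} 1$ for each $x$, the identical Cauchy--Schwarz treatment of the $\I - VV^*$ mismatch, and the identical (and crucial) recognition that the hypothesis must be invoked at all $x$, not just $x=0$, to absorb the conjugations that accumulate. The organizational difference is this: the paper never invokes Lemma~\ref{lem:overline2} inside this proof (nor, for that matter, does the proof of Lemma~\ref{lem:derived} invoke Lemma~\ref{lem:overline} as a black box---it only reuses the Cauchy--Schwarz display in Eq.~\eqref{eq:VV*}, so your ``verbatim with overline2 in place of overline'' framing slightly misdescribes the analogy, though harmlessly). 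Instead, the paper processes each summand by a single interleaved induction, alternating two moves per round: pull $S_k^{x_k}$ through $V$ at cost $O(\sqrt{\epsilon})$, producing $\check{S}_k^{x_k}$ on the outside, then swap $\check{S}_k^{x_k}$ for $R_k^{x_k}$ and use commutativity of the $R_i$'s to split it so that it conjugates both ends of the shorter expression; the conjugating products $\tilde{R}_y$ thereby appear automatically in exactly the form the hypothesis supplies, so no transitivity gluing is needed. Your two-pass version---Lemma~\ref{lem:overline2} first to trade $V^* \bigl( \prod_i S_i^{x_i} \bigr) V$ for $\prod_i \check{S}_i^{x_i}$, then a second peeling to trade $\prod_i \check{S}_i^{x_i}$ for $\tilde{R}_x$, glued by Lemma~\ref{lem:approx-stab-3}---does go through with the same per-summand loss, but the second pass must redo essentially the same induction (bounding $\norm{(\check{S}_i R_i - \I) B}_\rho$ for $B$ a product of the remaining $R_j$'s via the conjugated hypothesis), so the modularization buys readability rather than any quantitative saving. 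One caution if you write it out in full: Lemma~\ref{lem:overline2} as printed has a type mismatch in its hypothesis (operators on $\Y$ traced against $\rho \in \Density(\X \otimes \Z)$), so you should state explicitly the reading under which you apply it, namely the one matching the hypothesis of the present lemma, with the conjugations taken by products of the $\X$-side reference reflections.
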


\begin{proof}
  By the first part of the proof for Lemma~\ref{lem:derived}, we have
  \begin{equation*}
    \sum_a \tr_\rho (M^a \check{N}^a)
    = \frac{1}{2^k} \sum_{x\in \{0,1\}^k} \tr_\rho \biggl[ \Bigl(
    \prod_{i=1}^k R_i^{x_i} \Bigr) V^* \Bigl( \prod_{i=1}^k
    S_i^{x_i} \Bigr) V \biggr].
  \end{equation*}

  For all $x\in \{0,1\}^k$, we have
  \begin{equation*}
    \begin{split}
      \tr_\rho \biggl[ \Bigl(\prod_{i=1}^k R_i^{x_i} \Bigr) V^* \Bigl(
      \prod_{i=1}^k S_i^{x_i} \Bigr) V \biggr] &
      \approx_{\sqrt{\epsilon}} \tr_\rho \biggl[ \Bigl(\prod_{i=1}^k
      R_i^{x_i} \Bigr) V^* \Bigl(\prod_{i=1}^{k-1} S_i^{x_i} \Bigr) V
      \check{S}_k^{x_k} \biggr]\\
      & \approx_{\sqrt{\epsilon}} \tr_\rho \biggl[
      R_k^{x_k}\Bigl(\prod_{i=1}^{k-1} R_i^{x_i} \Bigr) V^*
      \Bigl(\prod_{i=1}^{k-1} S_i^{x_i} \Bigr) V
      R_k^{x_k} \biggr].\\
    \end{split}
  \end{equation*}
  The proof completes by repeating the above approximation procedure.
\end{proof}

\begin{proof}[Proof of Theorem~\ref{thm:mc-game}]
  For $\mathfrak{U} = \bigl( U_i \bigr)_{i=1}^N$, define a sequence
  $\mathfrak{U}' = \bigl( U'_i \bigr)_{i=1}^N$ as follows.
  Operator $U'_i$ is defined to be $P$ if $U_i = P$ or $U_i = (P|Q)$
  for $P\in \Pauli_{n,k}$, $Q\in \Power_{n,k}$, and is defined to be
  $\Lambda_{\reg{X},c}(D_u)$ if $U_i = \Lambda_c(D_u)$ for $D_u \in
  \{X_u, Z_u\}$ and $u\in [n]$.

  Define unitary operator
  \begin{equation*}
    U' = \sum_{i=0}^N \ket{i}\bra{i} \otimes U'_i U'_{i-1} \cdots U'_1,
  \end{equation*}
  and
  \begin{equation*}
    \ket{\Phi}_{\reg{X}} = \frac{1}{2^n} \sum_{x\in \{0,1\}^{2n}}
    \ket{x}.
  \end{equation*}
  If the player chooses state $U'(\ket{V}_{\reg{S}}
  \ket{\Phi}_{\reg{X}} \ket{\psi}_{\reg{R}})$ for some $\ket{\psi} \in
  \R$ and measure honestly, the referee accepts with certainty.
  This proves the first part of the theorem.

  We now prove the second part of the theorem.
  Define a sequence $\hat{\mathfrak{U}} = \bigl( \hat{U}_i
  \bigr)_{i=1}^N$ as follows.
  Operators $\hat{U}_i$ is defined to be $\hat{P}$ if $U_i = P$,
  $\hat{P|Q}$ if $U_i = (P|Q)$ and $\Lambda_{\reg{X},c}(\hat{P})$ if
  $U_i = \Lambda_c(P)$.
  For $i_0<i_1 \in [N]$, introduce the notion
  \begin{equation*}
    \hat{U}_{i_1\leftarrow i_0} = \hat{U}_{i_1} \hat{U}_{i_1 - 1}
    \cdots \hat{U}_{i_0}.
  \end{equation*}
  Define unitary operator
  \begin{equation*}
    \hat{U} = \sum_{i=0}^N \ket{i}\bra{i} \otimes \hat{U}_{i\leftarrow 1},
  \end{equation*}

  As the strategy $\Strategy$ has value at least $1-3\epsilon$ in the
  \game{Propagation Check} part, Lemma~\ref{lem:prop-game-ext} implies
  that there is a state
  \begin{equation*}
    \rho'_0 \in \Density(\X \otimes \R),
  \end{equation*}
  such that
  \begin{equation}
    \label{eq:mc-dtr}
    \dtr(\rho, \rho') \le O(\epsilon_1)
  \end{equation}
  for $\rho' = \hat{U}(\ket{V}\bra{V}\otimes \rho'_0) \hat{U}^*$ and
  $\epsilon_1 = N^{3/2}\epsilon^{1/2}$.

  Define a strategy $\Strategy'$ that is the same as $\Strategy$
  except that the shared state is $\rho'$ instead of $\rho$.
  By monotonicity of the trace distance, strategy $\Strategy'$ has
  value at least $1-O(\epsilon_1^{1/2})$.
  Define Hamiltonian
  \begin{equation*}
    H_{\text{in}} = \frac{1}{2n} \sum_{i=1}^{2n} \ket{0}
    \bra{0}_{\reg{S}} \otimes \frac{(\I-X)_{\reg{X},i}}{2}.
  \end{equation*}
  The referee rejects $\Strategy'$ in the \game{Initialization Check}
  part with probability
  \begin{equation*}
    \tr_{\rho'} \bigl(H_{\text{in}} \bigr) = \frac{1}{N} \tr_{\rho'_0}
    \Bigl[\sum_{i=1}^{2n} \frac{(\I-X)_{\reg{X},i}}{2} \Bigr] \le
    O(\epsilon_1^{1/2}).
  \end{equation*}
  Define states $\hat{\rho}_0 \in \Density(\R), \tilde{\rho}_0 \in
  \Density(\X \otimes \R)$ as
  \begin{equation*}
    \hat{\rho}_0 = \frac{\bra{\Phi}_{\reg{X}} \,\rho'_0\,
      \ket{\Phi}_{\reg{X}}}{\tr_{\rho'_0} \bigl( \ket{\Phi} \bra{\Phi}_{\reg{X}}
      \bigr)}, \quad \tilde{\rho}_0 = \ket{\Phi}\bra{\Phi}_{\reg{X}}
    \otimes \hat{\rho}_0.
  \end{equation*}
  By the spectrum of $H_{\text{in}}$ and Lemma~\ref{lem:gapped}, it
  follows that
  \begin{equation}
    \label{eq:mc-dtr2}
    \dtr(\rho'_0, \tilde{\rho}_0) \le O(\epsilon_2),
  \end{equation}
  where $\epsilon_2 = n^{1/2}N^{1/2}\epsilon_1^{1/4}$.
  Define state $\tilde{\rho}$ as
  \begin{equation*}
    \tilde{\rho} = \hat{U} (\ket{V}\bra{V}\otimes \tilde{\rho}_0)
    \hat{U}^*.
  \end{equation*}
  By Eqs.~\eqref{eq:mc-dtr}, ~\eqref{eq:mc-dtr2} and the triangle
  inequality, we have
  \begin{equation*}
    \dtr(\rho, \tilde{\rho}) \le O(\epsilon_2).
  \end{equation*}

  For $i=0,1, \ldots, 2n-1$ and $l=1,2,\ldots, L$, define states
  $\rho_{i,l}, \tilde{\rho}_{i,l} \in \Density(\X \otimes \R)$,
  \begin{equation*}
    \rho_{i,l} = \frac{\bra{N_0^{i,l}} \rho
      \ket{N_0^{i,l}}}{\tr_{\rho} \bigl(
      \ket{N_0^{i,l}}\bra{N_0^{i,l}} \bigr)}, \quad
    \tilde{\rho}_{i,l} = \frac{\bra{N_0^{i,l}} \tilde{\rho}
      \ket{N_0^{i,l}}}{\tr_{\tilde{\rho}} \bigl(
      \ket{N_0^{i,l}}\bra{N_0^{i,l}} \bigr)}.
  \end{equation*}
  By the definition of $\tilde{\rho}$, it is easy to see
  \begin{equation}
    \label{eq:mc-tri}
    \tilde{\rho}_{i,l} = \hat{U}_{N_0^{i,l} \leftarrow 1}
    \,\tilde{\rho}_0\, \hat{U}_{N_0^{i,l} \leftarrow 1}^*.
  \end{equation}
  As the product of all reflections in $\mathfrak{U}_l$ reduces to
  $\I$, we have
  \begin{equation}
    \label{eq:mc-states}
    \tilde{\rho}_{0,l} = \tilde{Q}_l^* \,\tilde{\rho}_{0}\,
    \tilde{Q}_l,
  \end{equation}
  for $\tilde{Q}_l = \hat{P}_{1} \hat{P}_{2} \cdots \hat{P}_{q_l}$
  where $\mathfrak{Q}_l = \bigl( P_j \bigr)_{j=1}^{q_l}$.
  We note that, as in the discussion before the construction of the
  $(n,k)$-constraint propagation game, states of the form in
  Eq.~\eqref{eq:mc-states} are the states on which we need to
  establish the approximate satisfaction of the $(n,k)$-constraint
  system.

  Consider strategy $\tilde{\Strategy}$ which is the same as
  $\Strategy$ except that the shared state becomes $\tilde{\rho}$
  instead of $\rho$.
  The value of $\tilde{\Strategy}$ is at least
  $1-O(\epsilon_2^{1/2})$.
  Consider the \game{Constraint Check} part.
  For $j\in [m_{n,k}]$, if edge $e_j = (N_{j-1}^{i,l},N_j^{i,l}) \in
  E_{\cons}^{i,l}$ is sampled, the referee rejects $\tilde{\Strategy}$
  with probability
  \begin{equation*}
    \tr_{\tilde{\rho}} H_j^i \le O \bigl( nLm_{n,k}\epsilon_2^{1/2}
    \bigr),
  \end{equation*}
  where
  \begin{equation*}
    H_j^{i,l} = \frac{1}{2} \Bigl[ \bigl(
    \ket{N_{j-1}^{i,l}}\bra{N_{j-1}^{i,l}} +
    \ket{N_j^{i,l}}\bra{N_j^{i,l}} \bigr) \otimes \I - (-1)^{\tau_j}
    \bigl(\ket{N_{j-1}^{i,l}}\bra{N_j^{i,l}} +
    \ket{N_j^{i,l}}\bra{N_{j-1}^{i,l}} \bigr) \Bigr].
  \end{equation*}
  By the definition of $\tilde{\rho}$ and $\tilde{\rho}_{i,l}$, this
  is equivalent to
  \begin{equation*}
    \Re\tr_{\tilde{\rho}_{i,l}} \bigl( \hat{U}_{N^{i,l}_j\leftarrow
      N^{i,l}_0+1}^* \hat{U}_{N^{i,l}_{j-1}\leftarrow N^{i,l}_0+1}
    \bigr) \approx_{\epsilon_3} (-1)^{\tau_j},
  \end{equation*}
  for $\epsilon_3 = nLm_{n,k}N\epsilon_2^{1/2}$.
  This can be further simplified to
  \begin{equation*}
    \Re\tr_{\tilde{\rho}_{i,l}} \bigl( \hat{C}_1 \hat{C}_2 \cdots
    \hat{C}_{j} \hat{C}_{j-1}^* \cdots \hat{C}_1^* \bigr)
    \approx_{\epsilon_3} (-1)^{\tau_j},
  \end{equation*}
  or equivalently, using the shorthand notion introduced in the proof
  of Lemma~\ref{lem:cons-prop-game},
  \begin{equation*}
    \Re\tr_{\tilde{\rho}_{i,l}} \bigl( \hat{C}_{1,j} \,
    \hat{C}_{1,j-1}^* \bigr) \approx_{\epsilon_3} (-1)^{\tau_j}.
  \end{equation*}

  Using similar arguments as in the proof of
  Lemma~\ref{lem:cons-prop-game}, this proves that for $i=0,1,\ldots,
  2n-1$, $l\in [L]$, $j \in [m_{n,k}]$ and $\epsilon_4 =
  m_{n,k}\epsilon_3^{1/2}$,
  \begin{equation}
    \label{eq:mc-approx-sat-tilde}
    \Re \tr_{\tilde{\rho}_{i,l}} \hat{C}_j \approx_{\epsilon_4} (-1)^{\tau_j}.
  \end{equation}

  For $i=0,1,\ldots, 2n-1$, and $l\in [L]$, define states
  $\hat{\rho}_{i,l} \in \Density(\R)$ as
  \begin{equation*}
    \hat{\rho}_{i,l} = \tr_{\X} \tilde{\rho}_{i,l}.
  \end{equation*}
  As $\hat{C}_j$ is an operator in $\Unitary(\R)$,
  Eq.~\eqref{eq:mc-approx-sat-tilde} can be written as
  \begin{equation}
    \label{eq:mc-approx-sat}
    \Re \tr_{\hat{\rho}_{i,l}} \hat{C}_j \approx_{\epsilon_4} (-1)^{\tau_j}.
  \end{equation}

  By the definition of $\tilde{\rho}_{i,l}$ in Eq.~\eqref{eq:mc-tri},
  we have for $i=1,2,\ldots, 2n-1$,
  \begin{equation*}
    \tilde{\rho}_{i,l} = \hat{U}_{N_0^{i,l}\leftarrow N_0^{i-1,l}+1}
    \,\tilde{\rho}_{i-1,l}\, \hat{U}_{N_0^{i,l}\leftarrow N_0^{i-1,l}+1}^*.
  \end{equation*}
  Equivalently, for $i=1,2,\ldots, n$,
  \begin{equation*}
    \tilde{\rho}_{2i-1,l} = \Lambda_{2i-1}(\hat{X}_i)
    \hat{C}_{1,m_{n,k}}^* \tilde{\rho}_{2i-2,l} \hat{C}_{1,m_{n,k}}
    \Lambda_{2i-1}(\hat{X}_i),
  \end{equation*}
  and for $i=1,2,\ldots, n-1$,
  \begin{equation*}
    \tilde{\rho}_{2i,l} = \Lambda_{2i}(\hat{Z}_i)
    \hat{C}_{1,m_{n,k}}^* \tilde{\rho}_{2i-1,l} \hat{C}_{1,m_{n,k}}
    \Lambda_{2i}(\hat{Z}_i).
  \end{equation*}

  Taking partial trace over $\X$ on the above two equations, we have
  for $i=1,2,\ldots, 2n-1$,
  \begin{equation*}
    \hat{\rho}_{i,l} = \mathfrak{F}_i \bigl( \hat{C}_{1,m_{n,k}}^*
    \hat{\rho}_{i-1,l} \hat{C}_{1,m_{n,k}} \bigr),
  \end{equation*}
  where $\mathfrak{F}_1, \mathfrak{F}_2, \ldots, \mathfrak{F}_{2n-1}$
  are quantum channels defined as
  \begin{equation*}
    \begin{split}
      \mathfrak{F}_{2i-1}(\rho) & = \frac{\rho + \hat{X}_i \rho
        \hat{X}_i}{2}, \text{ for } i=1,2,\ldots, n,\\
      \mathfrak{F}_{2i} (\rho) & = \frac{\rho + \hat{Z}_i \rho
        \hat{Z}_i}{2}, \text{ for } i=1,2,\ldots,n-1.
    \end{split}
  \end{equation*}

  Define states $\check{\rho}_{i,l} \in \Density(\R)$ for
  $i=0,1,\ldots, 2n-1$,
  \begin{equation*}
    \check{\rho}_{i,l} = \mathfrak{F}_i \circ \mathfrak{F}_{i-1} \circ
    \cdots \circ \mathfrak{F}_1 (\hat{\rho}_{0,l}).
  \end{equation*}

  We claim that Eq.~\eqref{eq:mc-approx-sat} then implies, for
  $\epsilon_5 = nm_{n,k}\epsilon_4$,
  \begin{equation}
    \label{eq:mc-approx-sat-check}
    \Re \tr_{\check{\rho}_{i,l}} (\hat{C}_j) \approx_{\epsilon_5} (-1)^{\tau_j}.
  \end{equation}
  In fact, by Eq.~\eqref{eq:mc-approx-sat} and the definition of
  $\hat{\rho}_{i,l}$, we have
  \begin{equation*}
    \Re \tr_{\hat{\rho}_{i-1,l}} \Bigl(\hat{C}_{1,m_{n,k}}
    \mathfrak{F}_i \bigl( \hat{C}_j \bigr) \hat{C}_{1,m_{n,k}}^*
    \Bigr) \approx_{\epsilon_4} (-1)^{\tau_j}.
  \end{equation*}
  Using Lemma~\ref{lem:approx-stab} and Eq.~\eqref{eq:mc-approx-sat},
  the above approximation is simplified to
  \begin{equation*}
    \Re \tr_{\hat{\rho}_{i-1,l}} \Bigl(
    \mathfrak{F}_i \bigl( \hat{C}_j \bigr)
    \Bigr) \approx_{m_{n,k}\epsilon_4} (-1)^{\tau_j}.
  \end{equation*}
  The claim in Eq.~\eqref{eq:mc-approx-sat-check} the follows by a
  repeated application of the above procedure and the choice of
  $\epsilon_5$.

  Lemma~\ref{lem:XZ} applied to Eq.~\eqref{eq:mc-approx-sat}
  and~\eqref{eq:mc-approx-sat-check} with constraint $X_uZ_uX_uZ_u =
  -\I$ implies the existence of unitary operators $V_u\in \Lin(\R,\B_u
  \otimes \R')$ such that
  \begin{equation*}
    \tilde{Z}_u = \hat{Z}_u,
  \end{equation*}
  and
  \begin{equation}
    \label{eq:mc-cond-1}
    \begin{split}
      \Re\tr_{\hat{\rho}_{i,l}} \bigl( \tilde{X}_u \hat{X}_u \bigr) &
      \approx_{\epsilon_4} 1,\\
      \Re\tr_{\check{\rho}_{i,l}} \bigl( \tilde{X}_u \hat{X}_u \bigr)
      & \approx_{\epsilon_5} 1,
    \end{split}
  \end{equation}
  where $\tilde{D}_u = V_u^*(D\otimes \I) V_u$ for $D\in \{X,Z\}$.

  Similarly, using the condition for constraints $D_vX_uZ_uX_uZ_uD_v =
  -\I$, we have
  \begin{equation}
    \label{eq:mc-cond-2}
    \begin{split}
      \Re\tr_{\hat{\rho}_{i,l}} \bigl( \hat{D}_v \tilde{X}_u \hat{X}_u
      \hat{D}_v
      \bigr) & \approx_{\epsilon_4} 1,\\
      \Re\tr_{\check{\rho}_{i,l}} \bigl( \hat{D}_v \tilde{X}_u
      \hat{X}_u \hat{D}_v \bigr) & \approx_{\epsilon_5} 1,
    \end{split}
  \end{equation}

  As in the proof of Theorem~\ref{thm:ms-game}, define isometry $V$ as
  in Eq.~\eqref{eq:ms-V} and quantum channels $\mathfrak{T}_u$ as in
  Eq.~\eqref{eq:ms-T} and operators
  \begin{equation*}
    \check{D}_u = \mathfrak{T}_1 \circ \mathfrak{T}_2 \circ \cdots
    \circ \mathfrak{T}_{u} (\tilde{D}_u).
  \end{equation*}

  Define operators $R$ and $R'$ as
  \begin{equation*}
    \begin{split}
      R & = \mathfrak{T}_2 \circ \cdots \circ
      \mathfrak{T}_{u} (\tilde{D}_u),\\
      R' & = \frac{R + \tilde{Z}_1 R \tilde{Z}_1 }{2}.
    \end{split}
  \end{equation*}

  For $l\in [L]$, we have
  \begin{equation*}
    \begin{split}
      \Re\tr_{\hat{\rho}_{0,l}} \bigl( \check{D}_u \hat{D}_u \bigr) &
      = \frac{1}{2}\Re\tr_{\check{\rho}_{0,l}} \bigl( R' \hat{D}_u +
      \tilde{X}_1 R' \tilde{X}_1 \hat{D}_u \bigr)\\
      & \approx_{\epsilon_5^{1/2}}
      \frac{1}{2}\Re\tr_{\check{\rho}_{0,l}} \bigl( R' \hat{D}_u +
      \hat{X}_1 R' \hat{X}_1 \hat{D}_u \bigr)\\
      & \approx_{\epsilon_5^{1/2}}
      \frac{1}{2}\Re\tr_{\check{\rho}_{0,l}} \bigl( R' \hat{D}_u +
      \hat{X}_1 R' \hat{D}_u \hat{X}_1 \bigr)\\
      & = \Re\tr_{\check{\rho}_{1,l}} (R' \hat{D}_u),\\
    \end{split}
  \end{equation*}
  where the first approximation is by Eqs.~\eqref{eq:mc-cond-1}
  and~\eqref{eq:mc-cond-2}, and the second is by
  Eq.~\eqref{eq:mc-approx-sat-check}.
  Similarly, we have
  \begin{equation*}
    \begin{split}
      \Re\tr_{\check{\rho}_{1,l}} (R' \hat{D}_u) & =
      \frac{1}{2}\Re\tr_{\check{\rho}_{1,l}} \bigl( R \hat{D}_u +
      \tilde{Z}_1 R \tilde{Z}_1 \hat{D}_u \bigr)\\
      & = \frac{1}{2}\Re\tr_{\check{\rho}_{1,l}} \bigl( R \hat{D}_u +
      \hat{Z}_1 R \hat{Z}_1 \hat{D}_u \bigr)\\
      & \approx_{\epsilon_5^{1/2}}
      \frac{1}{2}\Re\tr_{\check{\rho}_{1,l}} \bigl( R \hat{D}_u +
      \hat{Z}_1 R \hat{D}_u \hat{Z}_1\bigr)\\
      & = \Re\tr_{\check{\rho}_{2,l}} (R\hat{D}_u),
    \end{split}
  \end{equation*}

  Repeating the above procedure, we have for $l\in [L]$ and
  $\epsilon_6=n\epsilon_5^{1/2}$,
  \begin{equation}
    \label{eq:mc-cond-3}
    \Re\tr_{\hat{\rho}_{0,l}} \bigl( \check{D}_u \hat{D}_u \bigr)
    \approx_{\epsilon_6} 1.
  \end{equation}
  This proves Eq.~\eqref{eq:mc-P} in the theorem for the case $P\in
  \Pauli_{n,1}$ with state $\hat{\rho}_0$ by taking $l=0$.

  Consider constraints of the form $P \prod_{u\in J} D_u = \I$ for $P
  = \prod_{u\in J} D_u$, we have by Eq.~\eqref{eq:mc-approx-sat},
  \begin{equation*}
    \Re\tr_{\hat{\rho}_{0,l}} \bigl( \hat{P} \, \prod_{u\in J} \hat{D}_u
    \bigr) \approx_{\epsilon_4} 1.
  \end{equation*}
  By Eq.~\eqref{eq:mc-cond-3}, we have
  \begin{equation*}
    \Re\tr_{\hat{\rho}_{0}} \bigl( \hat{P} \, \prod_{u\in J} \check{D}_u
    \bigr) \approx_{\epsilon_6^{1/2}} 1.
  \end{equation*}
  Finally, by Lemma~\ref{lem:overline2} and
  Lemma~\ref{lem:approx-stab-3}, we have
  \begin{equation*}
    \Re\tr_{\hat{\rho}_{0}} \bigl( \hat{P}\, \check{P} \bigr)
    \approx_{\epsilon_6^{1/2}} 1.
  \end{equation*}

  By constraints of the form $(P|Q)P = \I$ in
  Eq.~\eqref{eq:mc-approx-sat}, we have
  \begin{equation*}
    \Re\tr_{\hat{\rho}_{0,l}} \bigl( \hat{P|Q}\, \hat{P} \bigr)
    \approx_{\epsilon_4} 1,
  \end{equation*}
  and using Eq.~\eqref{eq:mc-cond-3} and Lemma~\ref{lem:overline2},
  this implies that
  \begin{equation*}
    \Re\tr_{\hat{\rho}_{0,l'}} \bigl(\hat{P|Q}\, \check{P}\bigr)
    \approx_{\epsilon_6^{1/2}} 1,
  \end{equation*}
  for all index $l'$ that corresponds to sequences of the
  concatenations of empty primitive sequence and derived reflections.
  Lemma~\ref{lem:derived2} then completes the proof for state
  $\hat{\rho}_0$.

  Finally, using a similar argument as in the proof for
  Lemma~\ref{lem:cons-prop-game}, the statements in the theorem are
  proved for the state $\rho_0 = \tr_{\X}\rho_{0,0}$.
\end{proof}

We mention that even though the definition of the $(n,k)$-constraint
system game is an extended nonlocal game with a single player, it is
straightforward to extend it the case of $r$ players.
For this, consider a copy of the $(n,k)$-constraint system for each
player and take the union of all the constraints.
The referee then does the same as in the one player case and direct
questions of reflections of the $i$-th copy of the constraint system
to the player $(i)$.
Similar rigidity results can be established for this $r$-player
version of the $(n,k)$-constraint propagation game.

The use of the $(n,k)$-constraint propagation game is to enforce that
the players follow the measurement specifications. In order to check
other properties, such as the correct propagation of some quantum
computation, we need to first measure the clock register $\reg{S}$ and
continue only when the result is $0$. This is not very efficient and
introduces another polynomial loss in efficiency. This is not
important in our case. But this loss may be recovered by using a more
intricate decoding of the players' answers when measurement outcome
other than $0$ appears in the $(n,k)$-constraint propagation game.

\section{From Interactive Proofs to Nonlocal Games}
\label{sec:proof}
\subsection{Localization with Honest Players}

In this section, we show how to transform an $r$-prover quantum
interactive proof system to a game between a quantum referee and $r$
honest players, in which the referee measures and asks each player to
measure at most constant number of qubits.
Therefore the questions in the game consist of at most logarithmic
number of bits.
We start with the following lemma proved in~\cite{KKMV08}.

\begin{lemma}
  For any $r,m\in \poly$, $c'$ and $s'$ satisfying $c'-s'\in
  \poly^{-1}$, there exists an $s\in 1 - \poly^{-1}$, such that
  $\QMIP*(r,m,c',s') \subseteq \QMIP*(r,3,1,s)$.
\end{lemma}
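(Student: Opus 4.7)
The plan is to combine three standard techniques from the theory of quantum interactive proofs: gap amplification by sequential repetition, a Kitaev--Watrous-style round compression generalised to the multi-prover setting, and a final conversion to perfect completeness.

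First, I would amplify the completeness--soundness gap. Starting from a protocol with completeness $c'$ and soundness $s'$ satisfying $c'-s' \ge 1/q(\abs{x})$ for some polynomial $q$, apply sequential repetition $k=\Theta(q(\abs{x})^2 \cdot \abs{x})$ times and let the new verifier accept only if at least a $(c'+s')/2$ fraction of the $k$ repetitions accept. Against entangled provers this is analysed by a martingale Chernoff argument, which is valid because the provers cannot signal between blocks of the amplified interaction (any joint strategy can be replaced, block by block, by its optimal continuation given the history so far). This drives the completeness above $1-2^{-\abs{x}}$ and the soundness below $2^{-\abs{x}}$, at the cost of polynomially blowing up the number of turns to $m\cdot k\in\poly$.

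Second, I would compress the number of turns to three using the multi-prover analogue of the forward/backward parallelisation of $\QIP$. The new verifier picks a uniformly random midpoint turn $t$ of the amplified interaction, receives from the provers a purported snapshot of the joint system at turn $t$, then sends a uniform bit $b$; for $b=0$ the provers continue the original computation from $t$ to the final acceptance check, while for $b=1$ they run it backward to the initialisation check that all private qubits start in $\ket{0}$. A swap-test style consistency check between snapshots held by different provers ties the two halves together. A standard hybrid argument then shows that this three-turn protocol has completeness at least $1-2^{-\abs{x}}$ and soundness at most $1-\Omega(\poly^{-1})$, so that the inverse-polynomial gap is preserved up to a polynomial loss.

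Finally, to replace near-perfect completeness by perfect completeness I would adapt the standard perfect-completeness transformation for quantum proof systems: reshape the verifier's final test into a projector that the honest interaction stabilises exactly, using amplitude amplification on the part of the verifier's circuit that introduces the $2^{-\abs{x}}$ completeness error. This incurs yet another polynomial loss in the soundness gap but yields $s=1-\poly^{-1}$ and $c=1$, giving the claimed containment in $\QMIP*(r,3,1,s)$. The main obstacle is the round-compression step: one must rule out that entangled cheating provers exploit the middle snapshot to decouple the forward and backward halves, which is handled by the cross-prover consistency test together with a rewinding-style soundness analysis that translates any three-turn cheating strategy into an $m$-turn strategy for the amplified protocol.
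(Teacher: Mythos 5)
First, a point of context: the paper does not prove this lemma at all --- it is imported wholesale from Kempe, Kobayashi, Matsumoto and Vidick~\cite{KKMV08}, so your proposal should be measured against that proof. Your three-stage outline (gap amplification, snapshot-based forward/backward round compression in the style of Kitaev--Watrous, and a perfect-completeness transformation) is the right family of ideas and matches the cited route in spirit. Your first stage is essentially fine: sequential repetition with a threshold verdict is analysed exactly by the conditioning argument you sketch, since after the verifier measures each block's outcome the provers' residual strategy is itself a valid single-shot strategy, so Azuma-type concentration applies; note only that amplification to error $2^{-\abs{x}}$ is overkill, as the later steps lose only polynomially and the target gap is itself $\poly^{-1}$.

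The genuine gaps are in your second and third stages. As written, your compressed protocol is not three-turn: if, after receiving the snapshot at turn $t$ and the coin $b$, the provers ``continue the original computation from $t$ to the final acceptance check,'' that continuation is itself an interaction of up to $m/2$ further turns. What you have described is the recursive halving $m \mapsto m/2 + O(1)$, which must be iterated $O(\log m)$ times (each iteration shrinking the gap by a constant factor, for a total polynomial loss --- this needs to be stated and verified), or else replaced by the one-shot variant in which the provers commit in turn~1 to snapshots of \emph{all} rounds and the verifier announces a single random transition to be checked in turns~2--3; a ``uniformly random midpoint $t$'' revealed before the snapshot is sent is not consistent with either variant. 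Your ``swap-test style consistency check between snapshots held by different provers'' is also not a meaningful primitive here: the provers hold complementary registers $\reg{P}_i$ of one global snapshot whose $\reg{V},\reg{M}$ parts go to the verifier --- there are no comparable copies --- and a swap test cannot certify equality of entangled marginals in any case. In~\cite{KKMV08} consistency is enforced by the propagation checks themselves, and the soundness proof explicitly composes the (a priori unconstrained) backward strategies, inverted, with the forward strategies to build an $m$-turn cheating strategy for the original protocol; this fidelity/hybrid argument is the technical heart that your ``standard hybrid argument'' leaves unsupplied. Finally, your last stage does not go through as stated: exact amplitude amplification requires the acceptance amplitude to be known exactly, whereas after your first two stages you only know completeness is at least $1-2^{-\abs{x}}$, so there is no projector the honest interaction ``stabilises exactly.'' Achieving perfect completeness is a genuinely interactive phenomenon (for \QMA{} it is a well-known open problem), and the established route applies a prover-assisted perfect-completeness transformation \emph{before} parallelization, so that the honest snapshots pass both the forward check and the backward check --- which, incidentally, tests only the verifier's qubits against $\ket{0}$, the provers' private registers being unconstrained --- with certainty.
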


It therefore suffices to start with $r$-prover, $3$-message quantum
interactive proof systems with perfect completeness.
Recall that $\reg{V}$ is the private quantum register of the verifier
$V$, $\reg{P}_i$ for $i\in [r]$ is the private quantum register of
prover $P_i$, and $\reg{M}_i$ for $i\in [r]$ are the quantum registers
for the message qubits.
The registers $\reg{V}$ and $\reg{M}_i$ consist of $q_V, q_M\in \poly$
qubits respectively, while there are no restrictions on the sizes of
$\reg{P}_i$'s.
The associated Hilbert spaces of these registers are denoted as $\V,
\M_i, \P_i$ respectively.
Registers $\reg{M}, \reg{P}$ refer to the collection of quantum
registers $\bigl( \reg{M}_i \bigr)_{i=1}^r$ and $\bigl( \reg{P}_i
\bigr)_{i=1}^r$ respectively, and $\M = \bigotimes_{i=1}^r\M_i$ and
$\P = \bigotimes_{i=1}^r\P_i$ are their associated Hilbert spaces.
Let $(V^1, V^2)$ and $(W^1, W^2, \ldots, W^r)$ be the polynomial-time
quantum verifier and the quantum provers' circuits for the $3$-message
interactive proof system.
For simplicity, we assume that both $V^1$ and $V^2$ consist of $L$
elementary gates from some universal gate set specified below.
If they have different size, one can add extra elementary gates that
act on auxiliary qubits.
Define $T=2L+1$ be the total number of time steps including the
provers' action as in Fig.~\ref{fig:QMIP}.

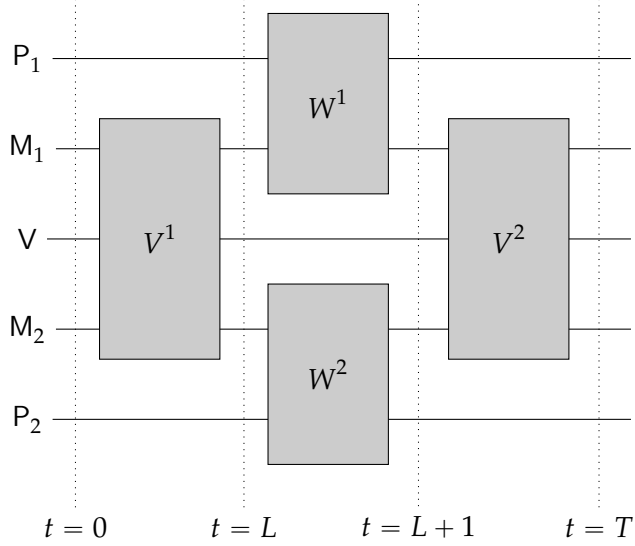
\begin{figure}[!htb]
  \centering
    \begin{tikzpicture}[scale=.8,
      V/.style={draw, minimum height = 3.2cm, minimum width = 1.6cm,
        fill=ChannelColor},
      P/.style={draw, minimum height = 2.4cm, minimum width = 1.6cm,
        fill=ChannelColor}]

      \node (inP1) at (-2,3) {$\reg{P}_1$};
      \node (inM1) at (-2,1.5) {$\reg{M}_1$};
      \node (inV) at (-2,0) {$\reg{V}$};
      \node (inM2) at (-2,-1.5) {$\reg{M}_2$};
      \node (inP2) at (-2,-3) {$\reg{P}_2$};

      \node (outP1) at (8.2,3) {};
      \node (outM1) at (8.2,1.5) {};
      \node (outV) at (8.2,0) {};
      \node (outM2) at (8.2,-1.5) {};
      \node (outP2) at (8.2,-3) {};

      \draw (inM1) -- (outM1);
      \draw (inM2) -- (outM2);
      \draw (inP1) -- (outP1);
      \draw (inP2) -- (outP2);
      \draw (inV) -- (outV);

      \node[V] (V1) at (0.2,0) {$V^1$};
      \node[P] (P1) at (3,2.25) {$W^1$};
      \node[P] (P2) at (3,-2.25) {$W^2$};
      \node[V] (V2) at (6,0) {$V^2$};

      \node (t0) at (-1.2,-4.8) {$t=0$};
      \node (tL) at (1.6,-4.8) {$t=L$};
      \node (tL1) at (4.5,-4.8) {$t=L+1$};
      \node (tT) at (7.5,-4.8) {$t=T$};

      \draw[dotted] (t0) -- (-1.2,4);
      \draw[dotted] (tL) -- (1.6,4);
      \draw[dotted] (tL1) -- (4.5,4);
      \draw[dotted] (tT) -- (7.5,4);
    \end{tikzpicture}
  \caption{An illustration of two-prover, three-message quantum
    interactive proof system with verifier's circuits $(V^1, V^2)$ and
    provers' circuits $(W^1, W^2)$.}
  \label{fig:QMIP}
\end{figure}

In the $r$-prover, $3$-message interactive proof system, the provers
initialize a state $\ket{\Psi}\in \M\otimes \P$, send registers
$\reg{M}_i$ to the verifier.
The verifier then applies $V^1 \in \Unitary(\V\otimes \M)$ and sends
$\reg{M}_i$ to prover $P_i$.
The provers apply $W^i \in \Unitary(\M_i \otimes \P_i)$ and sends
$\reg{M}_i$ back to the verifier.
Finally the verifier applies $V^2 \in \Unitary(\V\otimes \M)$ and
accepts if and only if the first qubit $\reg{V}$ measures to $1$.
Define projection $\Pi_{\text{acc}} = \ket{1}\bra{1}_{\reg{V},1}$.
For any verifier described by $(V^1, V^2)$, The maximum acceptance
probability of the provers is given by
\begin{equation*}
  \MAP(V^1,V^2) = \sup \norm{ \Pi_{\text{acc}} V^2 W V^1
    \bigl( \ket{0^{q_V}}_{\reg{V}} \ket{\Psi}_{\reg{M},\reg{P}} \bigr)}^2,
\end{equation*}
where $W = \bigotimes_{i=1}^r W^i$, and the supreme is taken over all
possible Hilbert spaces $\P_i$, all quantum state $\ket{\Psi}$ and all
quantum provers $W^i \in \Unitary(\M_i\otimes \P_i)$.
For any language $A \in \QMIP*(r,3,1,s)$, we have $\MAP(V^1,V^2) = 1$
if $x\in A$ and $\MAP(V^1, V^2)\le s$ if $x\not\in A$ by the
completeness and the soundness of the proof system.

Our transformation from this three-message interactive proof system to
a one-round multi-player game with honest players can be regarded as a
generalization of the circuit-to-Hamiltonian transformation of Kitaev
to the interactive setting.
The multi-player game consists of a referee and $r$ players, playing
the role of the verifier and provers respectively.
The referee possesses a clock register $\reg{C}$ and a register
$\reg{V}$.
For each $i\in [r]$, player $(i)$ possesses register $\reg{B}_i$, a
copy of the $L+1$-th qubit in $\reg{C}$, and two registers $\reg{M}_i,
\reg{P}_i$.
We use unary clock encoding for the clock register $\reg{C}$
consisting of $T$ qubits.
The legal states of the register are spanned by states of the form
$\bigket{1^t0^{T-t}}$.
Let $\reg{T}$ be the collection of the clock register $\reg{C}$ and
all registers $\bigl(\reg{B}_i \bigr)_{i=1}^r$.
The legal clock states are spanned by
\begin{equation*}
  \bigket{\hat{t}}_{\reg{T}} = \bigket{1^t0^{T-t}}_{\reg{C}} \otimes
  \Bigl( \bigotimes_{i=1}^r\; \bigket{\delta_t}_{\reg{B}_i} \Bigr),
\end{equation*}
where $\delta_t \in \{0,1\}$ equals $0$ if $t\le L$ and equals $1$
otherwise.
We will use $\C,\T,\B_i$ to denote the corresponding Hilbert spaces of
register $\reg{C}, \reg{T}, \reg{B}_i$.
Define Hilbert space $\H$ to be $\T \otimes \V \otimes \M\otimes \P$.

In the game, there are two possible types of questions that the
referee may ask.
The first type is a measurement specification of either one or several
commuting Pauli operators on qubits in registers $\reg{B}_i$ and
$\reg{M}_i$.
The players are honest in the sense that they will perform the
measurements corresponding to the received Pauli operators and reply
with the measurement outcome.
The second type consists of only one special question $\xap$, which
asks player $(i)$ to measure $X$ on $\reg{B}_i$ after the application
of the prover's circuit $(W^i)^*$ conditioned on the qubit in
$\reg{B}_i$.
The player is however not required to follow this protocol exactly.

The game proceeds as follows.
The players first prepare a state $\rho$ in all the registers of the
referee and the players.
The players are not allowed to communicate after this initialization
step.
The referee sends questions to the players as in
Fig.~\ref{fig:honest-game} and the players respond honestly in the
sense described above.
Finally, the referee determines whether to accept or reject based on
the questions, answers and his own measurement outcomes.
The strategy of the players can be described by $\Strategy = (\rho,
\hat{\xap}^{(i)})$, for state $\rho \in \Density(\H)$ and reflection
$\hat{\xap}^{(i)}$ the players applies for question $\xap$.

In order to use $XZ$-form Pauli measurements in the game, we will
assume that circuit $V^1, V^2$ uses two elementary gates---the Toffoli
gate and the Hadamard gate~\cite{Shi02}.
We further assume that the each Hadamard gate on a qubit in $\reg{V}$
(or $\reg{M}_i$) appears in pair with another Hadamard gate on
$\reg{V}$ ($\reg{M}_i$ respectively).
This can be easily achieved by adding a dummy qubit to these registers
and it is a technique first used in~\cite{BJSW16} to simplify the
design of a zero-knowledge proof for \QMA.
With this convention of the verifier circuit, the referee will play
the \game{Hadamard Check} and \game{Toffoli Check} given in
Figs.~\ref{fig:hadamard-game} and~\ref{fig:toffoli-game} to check the
propagation of the verifier's circuits.

\begin{figure}[!htb]
  \begin{shaded}
    \ul{Hadamard Check}\\[1em]
    Let $u_1,u_2$ be the two qubits that the two Hadamard gates act on
    in the Hadamard check at time $t$, the referee measures
    $X_{\reg{C},t}$ with outcome $x$.
    He samples $j\in \{0,1\}$ uniformly at random and does the
    following:
    \begin{enumerate}
    \item If $j=0$, measures or asks the players to measure
      $X_{u_1}X_{u_2}, Z_{u_1}Z_{u_2}$ (if both $u_1,u_2$ are qubits
      form register $\reg{M}$) and let $a_1, a_2$ be the two outcome
      bits; rejects if $x \oplus a_1 = x \oplus a_2 = 1$ and accepts
      otherwise.
    \item If $j=1$, measures or asks the players to measure
      $X_{u_1}Z_{u_2}, Z_{u_1}X_{u_2}$ and let $a_1,a_2$ be the two
      outcome bits; rejects if $x\oplus a_1 = x\oplus a_2 = 1$ and
      accepts otherwise.
    \end{enumerate}
  \end{shaded}
  \caption{The protocol that checks the Hadamard gate propagation at
    time step $t$.}
  \label{fig:hadamard-game}
\end{figure}

\begin{lemma}
  \label{lem:hadamard-game}
  Let $\rho$ be the shared state used in the \game{Hadamard Check} and
  $U_t$ be the corresponding doubled Hadamard gate, the referee
  rejects with probability
  \begin{equation*}
    \frac{1}{4} \tr_\rho \bigl[ \I - X_{\reg{C},t} \otimes U_t \bigr].
  \end{equation*}
\end{lemma}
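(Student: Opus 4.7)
The plan is to express the rejection probability as $\tr(\rho R)$ for an explicit operator $R$ that I will construct directly from the protocol, and then verify the identity $R = \frac{1}{4}(\I - X_{\reg{C},t}\otimes U_t)$ by a direct Pauli calculation, with $U_t = H_{u_1}H_{u_2}$.

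First I would record the individual projectors. Let $P_x = \frac{1}{2}(\I + (-1)^x X_{\reg{C},t})$ be the clock projector for outcome $x\in\{0,1\}$, and $\Pi_{\bar{x}}(P) = \frac{1}{2}(\I - (-1)^x P)$ be the projector for outcome $\bar{x}$ of a Pauli measurement $P$. Write $(P_1^{(0)},P_2^{(0)}) = (X_{u_1}X_{u_2},\, Z_{u_1}Z_{u_2})$ and $(P_1^{(1)},P_2^{(1)}) = (X_{u_1}Z_{u_2},\, Z_{u_1}X_{u_2})$. For each $j$ the pair $P_1^{(j)},P_2^{(j)}$ commutes, and both commute with $X_{\reg{C},t}$ since they act on a disjoint register; hence all projectors mutually commute and the rejection operator is
\begin{equation*}
R \;=\; \frac{1}{2}\sum_{j\in\{0,1\}}\sum_{x\in\{0,1\}} P_x\, \Pi_{\bar{x}}(P_1^{(j)})\, \Pi_{\bar{x}}(P_2^{(j)}).
\end{equation*}

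Next I would expand $\Pi_{\bar{x}}(P_1^{(j)})\Pi_{\bar{x}}(P_2^{(j)}) = \frac{1}{4}\bigl[\I - (-1)^x(P_1^{(j)}+P_2^{(j)}) + P_1^{(j)}P_2^{(j)}\bigr]$ and compute the quadratic terms using $XZ = -iY$ and $ZX = +iY$. This gives $P_1^{(0)}P_2^{(0)} = X_{u_1}X_{u_2}Z_{u_1}Z_{u_2} = -Y_{u_1}Y_{u_2}$ but $P_1^{(1)}P_2^{(1)} = X_{u_1}Z_{u_2}Z_{u_1}X_{u_2} = +Y_{u_1}Y_{u_2}$. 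The crucial observation is that these two $Y_{u_1}Y_{u_2}$ contributions carry opposite signs and therefore cancel exactly upon averaging over $j$---this is the whole point of randomizing between the two basis choices. The surviving linear pieces combine into $-(-1)^x(X_{u_1}+Z_{u_1})(X_{u_2}+Z_{u_2}) = -2(-1)^x H_{u_1}H_{u_2}$, using $H = (X+Z)/\sqrt{2}$.

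Putting these together,
\begin{equation*}
\frac{1}{2}\sum_{j\in\{0,1\}} \Pi_{\bar{x}}(P_1^{(j)})\Pi_{\bar{x}}(P_2^{(j)}) \;=\; \frac{1}{4}\bigl(\I - (-1)^x H_{u_1}H_{u_2}\bigr),
\end{equation*}
and summing against $P_x$, using $\sum_x P_x = \I$ and $\sum_x (-1)^x P_x = X_{\reg{C},t}$, yields $R = \frac{1}{4}(\I - X_{\reg{C},t}\otimes U_t)$. The lemma then follows by taking the trace against $\rho$. The calculation is routine Pauli algebra; the only step worth flagging is the $Y_{u_1}Y_{u_2}$ cancellation between the two branches, engineered precisely so that the rejection operator is a clean Kitaev-style propagation term rather than carrying a stray $Y\otimes Y$ correction.
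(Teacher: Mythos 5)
Your proposal is correct and is essentially the paper's argument: the paper writes down the same rejection operator (the sum over the two basis choices $j$ and the two clock outcomes $x$ of clock projector times joint outcome projectors) and asserts the simplification to $\frac{1}{4}\tr_\rho\bigl[\I - X_{\reg{C},t}\otimes U_t\bigr]$ ``by a direct calculation,'' which is exactly the Pauli computation you carry out. Your write-up merely makes explicit the step the paper leaves implicit, namely the cancellation of the $Y_{u_1}Y_{u_2}$ terms between the two branches and the recombination $\frac{1}{2}(X+Z)\otimes(X+Z) = H_{u_1}\otimes H_{u_2}$.
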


\begin{proof}
  The referee rejects with probability
  \begin{equation*}
    \begin{split}
      \frac{1}{2} \tr_\rho \Bigl[ \frac{\I-X_{\reg{C},t}}{2} \otimes
      \frac{\bigl[ (\I+XX)(\I+ZZ) \bigr]_{u_1,u_2}}{4} +
      \frac{\I+X_{\reg{C},t}}{2} \otimes \frac{\bigl[ (\I-XX)(\I-ZZ)
        \bigr]_{u_1,u_2}}{4} + \\
      \frac{\I-X_{\reg{C},t}}{2} \otimes \frac{\bigl[ (\I+XZ)(\I+ZX)
        \bigr]_{u_1,u_2}}{4} + \frac{\I+X_{\reg{C},t}}{2} \otimes
      \frac{\bigl[ (\I-XZ)(\I-ZX) \bigr]_{u_1,u_2}}{4} \Bigr].
    \end{split}
  \end{equation*}
  The above express simplifies to
  \begin{equation*}
    \frac{1}{4} \tr_\rho \bigl[ \I - X_{\reg{C},t} \otimes U_t \bigr]
  \end{equation*}
  by a direct calculation.
\end{proof}

\begin{figure}[!htb]
  \begin{shaded}
    \ul{Toffoli Check}\\[1em]
    Let $u_1,u_2,u_3$ be the three qubits the Toffoli gate acts on
    with $u_3$ the target qubit.
    In the Toffoli check at time $t$, the referee measures
    $X_{\reg{C},t}$ with outcome $x$.
    He samples $j\in \{0,1\}$ uniformly at random, accepts if $j=1$
    and continues otherwise.
    He measures or asks the players to measure $Z_{u_1}, Z_{u_2},
    X_{u_3}$ (if any of $u_1,u_2,u_3$ is a qubit from register
    $\reg{M}$) and let $a_1, a_2, a_3$ be the three outcome bits;
    rejects if $a_1=a_2=1, x\oplus a_3=1$ or $a_1a_2 = 0, x=1$ and
    accepts otherwise.
  \end{shaded}
  \caption{The protocol that checks the Toffoli gate propagation at
    time step $t$.}
  \label{fig:toffoli-game}
\end{figure}

\begin{lemma}
  \label{lem:toffoli-game}
  Let $\rho$ be the shared state used in the \game{Toffoli Check} and
  $U_t$ be the corresponding Toffoli gate, the referee rejects with
  probability
  \begin{equation*}
    \frac{1}{4} \tr_\rho \bigl[ \I - X_{\reg{C},t} \otimes U_t \bigr].
  \end{equation*}
\end{lemma}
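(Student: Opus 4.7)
The plan is to expand the rejection probability explicitly and then algebraically collapse it to the stated form. The key observation is that the Toffoli gate admits the decomposition
\begin{equation*}
  U_t = \I - P_{11}\otimes(\I - X_{u_3}),
\end{equation*}
where $P_{11} = \ket{11}\bra{11}_{u_1,u_2}$ is the projection onto the ``both controls are $1$'' subspace. Equivalently $U_t = P_{\neg 11}\otimes \I_{u_3} + P_{11}\otimes X_{u_3}$, which matches the definition of $\TOFFOLI$ on computational basis states.

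First I would split the analysis according to the internal coin $j$: the $j=1$ branch contributes zero to the rejection probability since the referee always accepts. So I need only analyze the $j=0$ branch, weighted by the $1/2$ probability of sampling it. In that branch, the referee performs three commuting measurements: the $X_{\reg{C},t}$ measurement yielding $x$, the joint $Z_{u_1}\otimes Z_{u_2}$ measurement yielding $(a_1,a_2)$, and the $X_{u_3}$ measurement yielding $a_3$. Let $Q^x_C = (\I + (-1)^x X_{\reg{C},t})/2$ and $R^{a_3}_{u_3} = (\I + (-1)^{a_3}X_{u_3})/2$ denote the corresponding spectral projections. The rejection event is (i) $a_1 = a_2 = 1$ together with $x\oplus a_3 = 1$, or (ii) $a_1a_2 = 0$ together with $x = 1$, and these two events are disjoint, so the rejection operator is the sum of the corresponding projector products.

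Next I would sum over the relevant outcomes. For the first case, a direct computation gives
\begin{equation*}
  \sum_{x\oplus a_3 = 1} Q^x_C\, R^{a_3}_{u_3} = \tfrac{1}{2}\bigl(\I - X_{\reg{C},t}\otimes X_{u_3}\bigr),
\end{equation*}
while the second case simply contributes $Q^1_C = \tfrac{1}{2}(\I - X_{\reg{C},t})$ tensored with $P_{\neg 11}$. Summing and factoring out $X_{\reg{C},t}$, the $j=0$ rejection probability becomes
\begin{equation*}
  \tfrac{1}{2}\tr_\rho\bigl[\I - X_{\reg{C},t}\otimes(P_{11}\,X_{u_3} + P_{\neg 11})\bigr].
\end{equation*}
The final step is to recognize that $P_{11}\,X_{u_3} + P_{\neg 11} = U_t$ by the decomposition above. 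Multiplying by the prefactor $1/2$ from the $j$-sampling then yields $\tfrac{1}{4}\tr_\rho[\I - X_{\reg{C},t}\otimes U_t]$, as required.

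The routine bookkeeping of the four disjoint outcome patterns (which contributions correspond to accept vs.\ reject, and keeping straight the parity conditions $a_1=a_2=1$ versus $a_1a_2=0$) is the only place one can easily slip up; there is no real mathematical obstacle, since the argument mirrors the Hadamard-check calculation of Lemma~\ref{lem:hadamard-game}, just with a three-qubit gate and a conditional $X$ on the target. In particular, no new ideas beyond the identity $U_t = \I - P_{11}\otimes(\I-X_{u_3})$ are needed.
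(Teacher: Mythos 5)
Your proposal is correct and takes essentially the same approach as the paper: the paper's one-line proof writes the $j$-averaged rejection probability as $\frac{1}{2}\tr_\rho\bigl[\frac{\I-X_{\reg{C},t}}{2}\otimes(\I-\ket{11}\bra{11})_{u_1,u_2} + \frac{\I-X_{\reg{C},t}\otimes X_{u_3}}{2}\otimes\ket{11}\bra{11}_{u_1,u_2}\bigr]$ and collapses it via exactly your identity $U_t = P_{\neg 11}\otimes\I + P_{11}\otimes X_{u_3}$, with your outcome-summation steps left implicit as ``a direct calculation.'' Your only imprecision---calling the control-qubit measurement the joint observable $Z_{u_1}\otimes Z_{u_2}$ (which yields one parity bit) when the referee measures $Z_{u_1}$ and $Z_{u_2}$ as two commuting observables yielding two bits---is cosmetic, since your algebra correctly uses the two-bit projections $P_{11}$ and $P_{\neg 11}$.
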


\begin{proof}
  Let $u_1,u_2,u_3$ be the qubits that $U_t$ acts on.
  The referee rejects with probability
  \begin{equation*}
    \frac{1}{2} \tr_\rho \Bigl[\frac{\I-X_{\reg{C},t}}{2} \otimes
    \bigl(\I - \ket{11}\bra{11} \bigr)_{u_1,u_2} +
    \frac{\I-X_{\reg{C},t}X_{u_3}}{2} \otimes \ket{11}
    \bra{11}_{u_1,u_2} \Bigr] = \frac{1}{4} \tr_\rho \bigl[\I -
    X_{\reg{C},t}\otimes U_t \bigr].
  \end{equation*}
\end{proof}

Note that we have scaled down the rejection probability by a half
using the random bit $j$ on purpose so that both checks have the
rejection probabilities of the same form.

\begin{figure}[!htb]
  \begin{shaded}
    \ul{Multi-Player One-Round Game for $\QMIP*$ with Honest Players}\\[1em]
    The referee performs the following checks with equal probability:
    \begin{enumerate}

    \item \game{Clock Check}.
      The referee checks the validity of clock states.
      He does the following with equal probability:
      \begin{enumerate}
      \item Randomly samples $t\in [T-1]$; measures $Z_{\reg{C},t},
        Z_{\reg{C},t+1}$ and rejects if the outcomes are $0,1$
        respectively; accepts otherwise.
      \item Randomly samples $i\in [r]$; sends measurement
        specification $Z_{\reg{B}_i}$ to player $(i)$; measures
        $Z_{\reg{C},L+1}$ and rejects if the outcome is different from
        the player's answer bit; accepts otherwise.
      \end{enumerate}

    \item \game{Verifier Propagation Check}.
      The referee checks propagation of the verifier steps:
      \begin{enumerate}
      \item Samples $t\in [T]\setminus \{L+1\}$ uniformly at random;
      \item Measures $Z_{\reg{C},t-1}$ and $Z_{\reg{C},t+1}$ (if
        $t=1$, assume that the first measurement always has outcome
        $1$, and if $t=T$, assume that the second measurement always
        has outcome $0$); accepts if the outcomes are not $1,0$
        respectively and continues otherwise;
      \item If $U_t$ is a Toffoli gate, does the \game{Toffoli Check}
        at time $t$; accepts or rejects as in the \game{Toffoli
          Check};
      \item If $U_t$ consists of two Hadamard gates, does the
        \game{Hadamard Check} at time $t$; accepts or rejects as in
        the \game{Hadamard Check}.
      \end{enumerate}

    \item \game{Prover Propagation Check}.
      The referee checks the propagation of the provers' step:
      \begin{enumerate}
      \item Measures $Z_{\reg{C},L}$, $Z_{\reg{C},L+2}$ and accepts if
        the outcomes are not $1, 0$ respectively; continues otherwise;
      \item Sends $\xap$ to player $(i)$ and receives $a^{(i)}$ back;
      \item Measures $X_{\reg{C},L+1}$ and accepts if the outcome
        $a=\bigoplus a^{(i)}$; rejects otherwise.
      \end{enumerate}

    \item \game{Initialization Check}.
      The referee checks that the state is correctly initialized:
      \begin{enumerate}
      \item Measures $Z_{\reg{C},1}$ and accepts if the outcome is
        $1$; continues otherwise;
      \item Samples $j\in \reg{V}$; measures $Z_{\reg{V},j}$ and
        accepts if the outcome is $0$; rejects otherwise.
      \end{enumerate}

    \item \game{Output Check}.
      The output qubit should indicate acceptance in the interactive
      proof:
      \begin{enumerate}
      \item Measures $Z_{\reg{C},T}$ and accepts if the outcome is
        $0$; continues otherwise;
      \item Measures $Z_{\reg{V},1}$ and accepts if outcome is $1$;
        rejects otherwise.
      \end{enumerate}
    \end{enumerate}
  \end{shaded}
  \caption{The multi-player one-round game for \QMIP* with honest
    players.}
  \label{fig:honest-game}
\end{figure}

We use the following lemma in the analysis.

\begin{lemma}
  \label{lem:error}
  Let $h,p,s$ be positive real numbers such that $s\in 1-\poly^{-1}$,
  $h\in \poly$, $p\in \poly^{-1}$ and $p\in 1 - \poly^{-1}$.
  Let $\kappa\ge 0$ be a constant.
  Then for function
  \begin{equation*}
    f(\epsilon) = (1-p)(1-\epsilon) + p
    \min \bigl(1,s+h\epsilon^{1/\kappa} \bigr),
  \end{equation*}
  we have $\max_\epsilon f(\epsilon) \in 1-\poly^{-1}$.
\end{lemma}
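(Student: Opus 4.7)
The plan is to split the domain of $\epsilon$ at a threshold $\epsilon_0 \in \poly^{-1}$ and bound $f$ using two different estimates on the two sides. Bounding $\min(1,s+h\epsilon^{1/\kappa})$ by $1$ gives
\begin{equation*}
f(\epsilon) \le 1 - (1-p)\epsilon,
\end{equation*}
which is small when $\epsilon$ is not too small, while bounding the $\min$ by $s + h\epsilon^{1/\kappa}$ gives
\begin{equation*}
f(\epsilon) \le 1 - p(1-s) + p h \epsilon^{1/\kappa},
\end{equation*}
which is small when $\epsilon$ is not too large. The strategy is to pick $\epsilon_0$ so that the two estimates balance and both remain comfortably inverse polynomial at the break point.

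A natural choice is $\epsilon_0 = \bigl((1-s)/(2h)\bigr)^{\kappa}$. Since $1-s \in \poly^{-1}$, $h \in \poly$, and $\kappa$ is a constant, this threshold itself lies in $\poly^{-1}$. For $\epsilon \ge \epsilon_0$, the first estimate yields $f(\epsilon) \le 1 - (1-p)\epsilon_0$. For $\epsilon \le \epsilon_0$, the choice of $\epsilon_0$ guarantees $p h \epsilon^{1/\kappa} \le p(1-s)/2$, so the second estimate yields $f(\epsilon) \le 1 - p(1-s)/2$. Combining the two bounds,
\begin{equation*}
\max_\epsilon f(\epsilon) \;\le\; 1 - \min\bigl( (1-p)\epsilon_0,\; p(1-s)/2 \bigr).
\end{equation*}

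To conclude, I would observe that the hypotheses $p\in \poly^{-1}$ and $p\in 1-\poly^{-1}$ together force both $p$ and $1-p$ to be at least inverse polynomial; together with $1-s \in \poly^{-1}$ and $\epsilon_0 \in \poly^{-1}$, both arguments of the outer $\min$ are at least $\poly^{-1}$. There is no genuine obstacle here: the only point to track is that raising an inverse polynomial to the constant power $\kappa$ preserves the class $\poly^{-1}$, which is exactly what keeps $\epsilon_0$ inverse polynomial rather than, say, inverse exponential; this is why the constancy of $\kappa$ appears as a hypothesis.
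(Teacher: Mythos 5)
Your proof is correct and follows essentially the same route as the paper's: the paper splits at the identical threshold $\epsilon_0 = \bigl[(1-s)/2h\bigr]^{\kappa}$, bounds the $\min$ by $s + h\epsilon^{1/\kappa}$ below the threshold to get $f(\epsilon) \le 1 - p(1-s)/2$, and by $1$ above it to get $f(\epsilon) \le 1 - (1-p)\epsilon$. Your additional remarks on why each quantity stays in $\poly^{-1}$ (including the role of the constancy of $\kappa$) are accurate and merely make explicit what the paper leaves implicit.
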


\begin{proof}
  If $\epsilon \le [(1-s)/2h]^\kappa$, then
  \begin{equation*}
    f(\epsilon) \le 1-p + p(1+s)/2 = 1 - p(1-s)/2 \in 1 - \poly^{-1}.
  \end{equation*}
  Otherwise,
  \begin{equation*}
    f(\epsilon) \le (1-p)(1-\epsilon) + p = 1 - (1-p)\epsilon \in 1 -
    \poly^{-1}.
  \end{equation*}
\end{proof}

\begin{theorem}
  \label{thm:honest-game}
  For $r\in \poly$, $s\in 1 - \poly^{-1}$, there exists a real number
  $s'\in 1 - \poly^{-1}$ such that, for any language $A \in
  \QMIP*(r,3,1,s)$ and an instance $x$, the following properties hold
  for the game in Fig.~\ref{fig:honest-game}:
  \begin{enumerate}
  \item If $x\in A$, the referee accepts with certainty;
  \item If $x\not\in A$, the referee accepts with probability at most
    $s'$.
  \end{enumerate}
\end{theorem}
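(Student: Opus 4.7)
The plan is to adapt Kitaev's circuit\hyp{}to\hyp{}Hamiltonian construction to the interactive setting: I will write the rejection probability of each of the five checks as the expectation of a positive operator on the shared state, sum them into a single Hamiltonian, and lower bound its minimum eigenvalue (over all choices of $\hat{\xap}^{(i)}$) using a three\hyp{}layer spectral gap argument combined with Lemmas~\ref{lem:gapped} and~\ref{lem:error}.

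For completeness, given an accepting prover strategy $(\ket{\Psi}, W^1, \ldots, W^r)$ for the underlying $\QMIP*(r,3,1,s)$ protocol, the honest players prepare the Feynman\hyp{}Kitaev history state
\[
\ket{\eta} \;=\; \frac{1}{\sqrt{T+1}} \sum_{t=0}^{T} \bigket{\hat{t}}_{\reg{T}} \otimes \ket{\psi_t}_{\reg{V}\reg{M}\reg{P}},
\]
where $\ket{\psi_t}$ is the snapshot of the interaction at time $t$, and choose $\hat{\xap}^{(i)}$ so that its action implements the GHZ\hyp{}type stabilizer enforcing the $W^i$ step. Direct verification then shows that every branch of every check accepts with certainty: the clock, initialization, and output checks follow from the definition of $\ket{\psi_t}$, while the verifier propagation branches use Lemmas~\ref{lem:hadamard-game} and~\ref{lem:toffoli-game} to match the standard Kitaev propagation terms $\tfrac14 (\I - X_{\reg{C},t}\otimes U_t)$.

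For soundness, writing the rejection probability of each check as $\tr_\rho$ of a positive operator yields the total rejection probability $\tfrac{1}{5}\tr_\rho(H)$ for a Hamiltonian
\[
H \;=\; H_{\text{clock}} + H_{\text{prop},V} + H_{\text{prop},P} + H_{\text{init}} + H_{\text{out}},
\]
in which $H_{\text{prop},P}$ is the only term depending on the player's chosen reflections $\hat{\xap}^{(i)}$. I will analyse $H$ in three stages. First, $H_{\text{clock}}$ has the legal clock subspace as its zero\hyp{}eigenspace with constant spectral gap, so by Lemma~\ref{lem:gapped} any low\hyp{}energy state is close to one supported on legal clock states $\bigket{\hat{t}}$. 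Second, on the legal clock subspace, conjugation by the step\hyp{}by\hyp{}step unitary $\hat{U}$ assembled from $V^1, V^2$ and a suitable controlled version of $\hat{\xap}^{(i)}$ turns $H_{\text{prop},V}+H_{\text{prop},P}$ into $L(G)\otimes \I$ for the path graph $G$ on $T+1$ vertices, whose Laplacian has spectral gap $\Omega(1/T^2)$ by the same calculation as in the proof of Lemma~\ref{lem:prop-game}; a second application of Lemma~\ref{lem:gapped} then forces the clock register into the uniform superposition $\ket{V}$. Third, on this doubly\hyp{}reduced subspace, $H_{\text{init}} + H_{\text{out}}$ (after conjugation by $\hat{U}^*$) coincides with Kitaev's $\QMA$\hyp{}style energy, with the prover's step realised by the reflection the players selected.

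The main obstacle is this last stage, because $\hat{\xap}^{(i)}$ is not required to be the honest controlled $W^i$. The key observation is that any such reflection, via $\hat{U}$, defines a legitimate (possibly mixed) prover strategy for the underlying three\hyp{}message protocol, so the best acceptance probability after the initialization and output checks is bounded by $\MAP(V^1,V^2) \le s$. Letting $\epsilon$ be the cumulative error contributed by $H_{\text{clock}}+H_{\text{prop}}+H_{\text{init}}$, Lemma~\ref{lem:error} (with $p$ the sampling probability of the output check, $h$ polynomial from the gap losses, and $\kappa$ constant from the square\hyp{}root losses in Lemma~\ref{lem:gapped}) converts these bounds into an overall acceptance probability $s' \in 1 - \poly^{-1}$, completing the argument.
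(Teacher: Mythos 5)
Your completeness argument and Steps 1--2 (clock check via Lemma~\ref{lem:gapped}, then verifier propagation via conjugation to a Laplacian) track the paper's proof, but your stage-two claim breaks down exactly at the prover step, and this is where the paper's real work lies. The conjugation trick behind Lemma~\ref{lem:prop-game} requires a \emph{unitary} label on every edge of the propagation graph. For the edge $(L,L+1)$ the relevant operator is $X_{\reg{C},L+1}\otimes\bigl(\bigotimes_{i=1}^r \hat{\xap}^{(i)}\bigr)$, and on the legal clock subspace (where $\reg{B}_i$ must hold $\delta_t$) its transition block between the $\ket{0^r}_{\reg{B}}$ and $\ket{1^r}_{\reg{B}}$ sectors is only a contraction: an adversarial $\hat{\xap}^{(i)}$ need not map one sector to the other at all (take $\hat{\xap}^{(i)}=Z_{\reg{B}_i}$, say). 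So there is no ``suitable controlled version of $\hat{\xap}^{(i)}$'' making $H_{\propv}+H_{\propp}$ unitarily equivalent to $L(G)\otimes\I$ for the full path on $T+1$ vertices, and consequently your ``key observation''---that any chosen reflections automatically define a legitimate prover strategy with acceptance bounded by $\MAP(V^1,V^2)$---is precisely the statement that needs proof, in a robust ($\epsilon$-error) form, not an observation.

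The paper fills this gap as follows. The verifier propagation alone only forces the state into $S_{\propv}=S_{\propv}^1\oplus S_{\propv}^2$, a direct sum of \emph{two half-history} spaces with clock supports $\ket{\omega^1}$ and $\ket{\omega^2}$; the uniform superposition over all $T+1$ steps is \emph{not} yet enforced. The \game{Prover Propagation Check} is then analyzed as a GHZ-stabilizer test: passing it with high probability gives $\tr_{\rho'}\bigl[X_{\reg{C},L+1}\otimes\bigotimes_i\hat{\xap}^{(i)}\bigr]\approx 1$, while $\tr_{\rho'}\bigl[Z_{\reg{C},L+1}\otimes Z_{\reg{B}_i}\bigr]=1$ holds automatically on $S_{\propv}$. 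Lemma~\ref{lem:approx-stab} converts these into approximate anti-commutation of $\hat{\xap}^{(i)}$ with $Z_{\reg{B}_i}$ on the state, and Lemma~\ref{lem:XZ} (Jordan's lemma) produces unitaries $W_i$, block-diagonal with respect to $Z_{\reg{B}_i}$, under which $\hat{\xap}^{(i)}\approx W_i^*(X\otimes\I)W_i$. The GHZ projection then glues the two halves coherently, and the block structure $W_i=\ket{0}\bra{0}\otimes W_i^0+\ket{1}\bra{1}\otimes W_i^1$ lets one extract honest prover unitaries $W^i=(W_i^1)^*W_i^0$ and show the shared state is $O(h^{1/2}\epsilon^{1/4})$-close to a genuine history state of the interactive proof; only then does the $\MAP(V^1,V^2)\le s$ bound apply in Step 4. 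Without this extraction argument---the ``nonlocal game implementation of the forward-backward checking technique'' that the introduction flags as a key ingredient---your three-layer spectral argument cannot close, because the Hamiltonian you want to gap depends on the adversarially chosen $\hat{\xap}^{(i)}$ and is not of Feynman--Kitaev form for those choices.
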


\begin{proof}[Proof of Theorem~\ref{thm:honest-game}]
  We first prove the case $x\in A$.
  Let $(V^1, V^2)$ and $(W^1, W^2, \ldots, W^r)$ be the
  polynomial-time quantum verifier and provers' circuits in the
  three-message quantum interactive proof system.
  Let $U_1, U_2, \ldots, U_L$ and $U_{L+2},U_{L+3},\ldots, U_T$ be the
  $L$ elementary gates in $V^1$ and $V^2$ respectively.
  Define $U_{L+1} = \bigotimes_{i=1}^r W^i$.
  Let $\ket{\Psi}$ be the state that the provers initializes in
  registers $\reg{M}, \reg{P}$.
  In the multi-player game, the honest players share state
  \begin{equation*}
    \frac{1}{\sqrt{T+1}}\sum_{t=0}^T \bigket{\hat{t}}_{\reg{T}} \otimes
    U_tU_{t-1}\cdots U_1 \bigl( \bigket{0^{q_V}}_{\reg{V}}
    \bigket{\Psi}_{\reg{M},\reg{P}} \bigr).
  \end{equation*}
  The referee in the game then accepts with certainty, a fact which
  can be verified directly but it will also become clear in the next
  part of the proof.
  This proves the first part of the theorem.

  Now we prove the second part and suppose that $x\not\in A$.
  Let $\Strategy = (\rho, \hat{\xap})$ be the strategy of the players.
  Define three games $G_2$, $G_3$ and $G_4$ derived from the honest
  player game $G$ as follows.
  In game $G_4$, the referee performs the \game{Initialization Check}
  and \game{Output Check} with equal probability.
  In game $G_3$, the referee does the \game{Prover Propagation Check}
  and $G_4$ with probability $1/3$ and $2/3$ respectively.
  In game $G_2$, the referee does the \game{Verifier Propagation
    Check} and $G_3$ with probability $1/4$ and $3/4$.
  Finally, game $G$ is equivalent to the game in which the referee
  does the \game{Clock Check} and $G_2$ with probability $1/5$ and
  $4/5$ respectively.
  We will analyze the five checks in the game sequentially as follows.

  \textit{Step 1}.
  For the \game{Clock Check}, define a Hamiltonian
  \begin{equation*}
    H_{\text{clock}} = \frac{1}{2(T-1)} \sum_{t=1}^{T-1}
    \ket{0}\bra{0}_{\reg{C},t} \otimes \ket{1}\bra{1}_{\reg{C},t+1} +
    \frac{1}{2r} \sum_{a\in \{0,1\}}\sum_{i=1}^r
    \ket{a}\bra{a}_{\reg{C},L+1} \otimes \bigl[\I -
    \ket{a}\bra{a}\bigr]_{\reg{B}_i}.
  \end{equation*}
  In this proof, we assume that the Hamiltonians are operators in
  $\Herm(\H)$.
  The referee then rejects in the \game{Clock Check} subgame with
  probability $\epsilon = \tr_\rho (H_{\text{clock}})$.
  The Hamiltonian $H_{\text{clock}}$ has an eigenspace
  $S_{\legal}\subseteq \H$ with eigenvalue $0$ spanned by the legal
  clock states of the form $\ket{\hat{t}}_{\reg{T}}
  \ket{\xi}_{\reg{V},\reg{M},\reg{P}}$, and the nonzero eigenvalues
  are least $\Omega(1/h)$ for $h=\max(T,r)$.
  We then have, by Lemma~\ref{lem:gapped},
  \begin{equation*}
    \dtr(\rho,\rho_{\legal}) \le O(\sqrt{h\epsilon}),
  \end{equation*}
  where $\rho_{\legal} = \Pi_{\legal} \rho \Pi_{\legal} /
  \tr_\rho(\Pi_{\legal})$ for $\Pi_{\legal}$ is the projection onto
  $S_{\legal}$.
  Hence, by the monotonicity of the trace distance, the referee
  accepts in game $G$ with probability at most
  \begin{equation*}
    \frac{1}{5}(1-\epsilon) + \frac{4}{5} \min \bigl( 1,s_2 +
    c\sqrt{h\epsilon} \bigr),
  \end{equation*}
  where $s_2$ is the maximum acceptance probability of the referee in
  game $G_2$ for the shared state $\rho$ supported on $S_{\legal}$,
  namely, $\rho = \Pi_{\legal} \rho \Pi_{\legal}$.
  Lemma~\ref{lem:error} then reduces our problem to proving that $s_2
  \in 1 - \poly^{-1}$.

  \textit{Step 2}.
  Suppose that the state $\rho$ satisfies $\rho = \Pi_{\legal} \rho
  \Pi_{\legal}$.
  We want to prove that the acceptance probability $s_2$ of game $G_2$
  is at most $1 - \poly^{-1}$ under this condition.
  Define a Hamiltonian
  \begin{equation*}
    \begin{split}
      H_{\propv} & = \frac{1}{4(T-1)} \Bigl[
      \ket{0}\bra{0}_{\reg{C},2}\otimes (\I - X_{\reg{C},1} \otimes
      U_1) + \sum_{\substack{t:1<t<T\\t\ne L+1}}
      \ket{10}\bra{10}_{\reg{C},t-1,t+1}\otimes (\I-X_{\reg{C},t}
      \otimes U_t) + \\
      &\qquad\qquad \ket{1}\bra{1}_{\reg{C},T-1} \otimes (\I -
      X_{\reg{C},T} \otimes U_T) \Bigr].
    \end{split}
  \end{equation*}

  It is easy to verify that
  \begin{equation*}
    H_{\propv}{\restriction_{S_{\legal}}} = \frac{1}{4(T-1)}
    \sum_{\substack{t:1\le t\le T\\t\ne L+1}}
    \Bigl[ \bigl( \ket{\hat{t-1}}\bra{\hat{t-1}} +
    \ket{\hat{t}}\bra{\hat{t}} \bigr)_{\reg{T}} \otimes \I -
    \bigl( \ket{\hat{t}}\bra{\hat{t-1}} +
    \ket{\hat{t}}\bra{\hat{t-1}} \bigr)_{\reg{T}} \otimes U_t \Bigr]
  \end{equation*}

  By Lemmas~\ref{lem:hadamard-game} and~\ref{lem:toffoli-game}, the
  referee rejects in the \game{Verifier Propagation Check} with
  probability
  \begin{equation*}
    \tr_\rho H_{\propv} = \tr_\rho \bigl(
    H_{\propv}{\restriction_{S_{\legal}}} \bigr).
  \end{equation*}

  Define unitary operator $U \in \Unitary (S_{\legal})$, and states
  $\ket{\omega^1}$, $\ket{\omega^2}$ as
  \begin{equation*}
    U = \sum_{t=0}^{L} \ket{\hat{t}}\bra{\hat{t}}_{\reg{T}} \otimes
    U_{t+1}^* U_{t+2}^* \cdots U_L^* + \sum_{t=L+1}^T
    \ket{\hat{t}}\bra{\hat{t}}_{\reg{T}} \otimes U_t U_{t-1} \cdots
    U_{L+2},
  \end{equation*}
  and
  \begin{equation*}
    \begin{split}
      \ket{\omega^1} & = \frac{1}{\sqrt{L+1}}\; \sum_{t=0}^{L}
      \;\;\ket{\hat{t}}_{\reg{T}},\\
      \ket{\omega^2} & = \frac{1}{\sqrt{L+1}} \sum_{t=L+1}^T
      \ket{\hat{t}}_{\reg{T}}.
    \end{split}
  \end{equation*}

  Define two subspaces $S_{\propv}^1$ and $S_{\propv}^2$ of
  $S_{\legal}$ as the spans respectively of states of the form
  \begin{equation*}
    \begin{split}
      \ket{\psi^1} & = U (\ket{\omega^1}\otimes \ket{\psi^1_{L}}),\\
      \ket{\psi^2} & = U (\ket{\omega^2}\otimes \ket{\psi^2_0}),
    \end{split}
  \end{equation*}
  for states $\ket{\psi^1_{L}}, \ket{\psi^2_0}\in
  \V\otimes\M\otimes\P$.
  Let $S_{\propv}$ be the direct sum of $S_{\propv}^1$ and
  $S_{\propv}^2$.
  Let $\Pi_{\propv}^1$, $\Pi_{\propv}^2$ and $\Pi_{\propv}$ be the
  projections onto subspaces $S_{\propv}^1$, $S_{\propv}^2$ and
  $S_{\propv}$ respectively.
  It is easy to verify that $S_{\propv}$ is the $0$-eigenspace of
  $H_{\propv}\restriction_{S_{\legal}}$.

  The Hamiltonian $H_{\propv}{\restriction_{S_{\legal}}}$ is the sum
  of two propagation checking Hamiltonians as in the circuit to
  Hamiltonian construction.
  It follows that the nonzero eigenvalues of
  $H_{\propv}{\restriction_{S_{\legal}}}$ is at least $\Omega(1/T^3)$.
  Suppose that the referee rejects in the \game{Verifier Propagation
    Check} part of $G_2$ with probability $\epsilon$.
  Define state
  \begin{equation*}
    \rho_{\propv} = \Pi_{\propv} \rho \Pi_{\propv} / \tr_\rho (\Pi_{\propv}).
  \end{equation*}
  It then follows similarly that
  \begin{equation*}
    \tr_\rho \Pi_{\propv} \ge 1 - O(T^3\epsilon),
  \end{equation*}
  and
  \begin{equation*}
    \dtr(\rho, \rho_{\propv}) \le O(T^{3/2}\epsilon^{1/2}).
  \end{equation*}

  The referee then accepts strategy $\Strategy$ in game $G_2$ with
  probability at most
  \begin{equation*}
    \frac{1}{4}(1-\epsilon) + \frac{3}{4} \min \bigl( 1, s_3 +
    cT^{3/2}\epsilon^{1/2} \bigr),
  \end{equation*}
  where $s_3$ is the maximum acceptance probability of the referee in
  game $G_3$ for the shared state $\rho$ supported on $S_{\propv}$.
  Lemma~\ref{lem:error} then reduces the problem to showing that $s_3
  \in 1 - \poly^{-1}$.

  \textit{Step 3}.
  Suppose that the state $\rho$ is supported on $S_{\propv}$.
  The aim is to show that the referee will reject with at least with
  inverse polynomial probability in game $G_3$.

  Define an operator
  \begin{equation*}
    H_{\propp} = \frac{1}{2}\ket{10}\bra{10}_{\reg{C},L,L+2} \otimes
    \Bigl[\I - X_{\reg{C},L+1}\otimes \Bigl(\bigotimes_{i=1}^r
    \hat{\xap}^{(i)} \Bigr)\Bigr].
  \end{equation*}

  The restriction $H_{\propp}\restriction_{S_{\propv}}$ can be
  computed as
  \begin{equation*}
    \begin{split}
      H_{\propp}\restriction_{S_{\propv}} & = \bigl(\Pi_{\propv}^1 +
      \Pi_{\propv}^2 \bigr) H_{\propp} \bigl(\Pi_{\propv}^1 +
      \Pi_{\propv}^2 \bigr)\\
      & = \frac{1}{L+1} \Bigl[ U \bigl( \ket{\omega^1} \bra{\hat{L}} +
      \ket{\omega^2} \bra{\hat{L+1}} \bigr) R \bigl(\ket{\hat{L}}
      \bra{\omega^1} + \ket{\hat{L+1}} \bra{\omega^2} \bigr) U^*
      \Bigr],
    \end{split}
  \end{equation*}
  where
  \begin{equation*}
    R = \frac{1}{2}\Bigl[ \I - X_{\reg{C},L+1} \otimes
    \Bigl(\bigotimes_{i=1}^r \hat{\xap}^{(i)} \Bigr) \Bigr].
  \end{equation*}
  Define operator $B$ as
  \begin{equation*}
    B = \ket{\omega^1}\bra{\hat{L}} + \ket{\omega^2}\bra{\hat{L+1}}.
  \end{equation*}

  The referee rejects in the \game{Prover Propagation Check} part of
  $G_3$ with probability
  \begin{equation*}
    \tr_\rho H_{\propp} = \tr_{\rho} \bigl( H_{\propp}\restriction_{S_{\propv}}
    \bigr) = \frac{1}{L+1} \tr_{\rho'} R,
  \end{equation*}
  for $\rho' = B^*U^*\rho UB$.
  If the rejection probability is $\epsilon$, it then follows that
  \begin{equation}
    \label{eq:XXX}
    \tr_{\rho'} \Bigl[ X_{\reg{C},L+1} \otimes \Bigl(\bigotimes_{i=1}^r
    \hat{\xap}^{(i)} \Bigr) \Bigr] = 1 - 2(L+1)\epsilon.
  \end{equation}

  On the other hand,
  \begin{equation}
    \label{eq:ZZ}
    \tr_{\rho'} \bigl[ Z_{\reg{C},L+1} \otimes Z_{\reg{B}_i} \bigr] =
    \tr_\rho \bigl[ UB \bigl( Z_{\reg{C},L+1} \otimes Z_{\reg{B}_i}
    \bigr) B^* U^* ] = \tr_\rho \Pi_{\propv} = 1,
  \end{equation}
  for all $i\in [r]$.
  These two conditions and Lemma~\ref{lem:approx-stab} then imply that
  \begin{equation*}
    \tr_{\rho'} \bigl[ \hat{\xap}^{(i)} Z_{\reg{B}_i} \hat{\xap}^{(i)}
    Z_{\reg{B}_i} \bigr] \approx_{L\epsilon} -1,
  \end{equation*}
  for $i\in [r]$.

  By Lemma~\ref{lem:XZ}, there exist unitary operators $W_i \in
  \Unitary(\B_i\otimes \M_i \otimes \P_i)$, such that for all $i\in
  [r]$
  \begin{equation}
    \label{eq:propp-Z}
    Z_{\reg{B}_i} = W_i^* (Z\otimes \I) W_i,
  \end{equation}
  and
  \begin{equation}
    \label{eq:propp-X}
    \dis_{\rho'} \bigl( \hat{\xap}^{(i)}, W_i^* (X\otimes \I) W_i \bigr) \le
    O(\sqrt{L\epsilon}).
  \end{equation}

  Define operator
  \begin{equation*}
    \tilde{W} = \Bigl( \bigotimes_{i=1}^r W_i \Bigr).
  \end{equation*}
  These two conditions and Eqs.~\eqref{eq:XXX} and~\eqref{eq:ZZ} then
  implies that the state
  \begin{equation*}
    \tilde{\rho} =  \tilde{W} \rho' \tilde{W}^*
  \end{equation*}
  satisfies that
  \begin{equation*}
    \tr_{\tilde{\rho}} \Bigl[ X_{\reg{C},L+1}\otimes \Bigl( \bigotimes_{i=1}^r
    X_{\reg{B}_i} \Bigr) \Bigr] \ge 1 - O(r\sqrt{L\epsilon}),
  \end{equation*}
  and
  \begin{equation*}
    \tr_{\tilde{\rho}} \bigl( Z_{\reg{C},L+1} \otimes Z_{\reg{B}_i}
    \bigr) = 1.
  \end{equation*}

  State $\tilde{\rho}$ is therefore approximately stabilized by
  $X_{\reg{C},L+1}\otimes \Bigl( \bigotimes_{i=1}^r X_{\reg{B}_i}
  \Bigr)$ and $Z_{\reg{C},L+1} \otimes Z_{\reg{B}_i}$.
  These operators generate the stabilizer for the \textrm{GHZ} state
  \begin{equation*}
    \ket{\Phi_{\GHZ}} = \frac{\ket{0}_{\reg{C},L+1} \ket{0^r}_{\reg{B}} +
      \ket{1}_{\reg{C},L+1} \ket{1^r}_{\reg{B}}}{\sqrt{2}}.
  \end{equation*}
  Let $\Pi_{\GHZ}$ be the projection to state $\ket{\Phi_{\GHZ}}$ and
  $h$ be $r\sqrt{L}$.
  We have
  \begin{equation*}
    \tr_{\tilde{\rho}} \Pi_{\GHZ} \ge 1 - O(h\sqrt{\epsilon}).
  \end{equation*}
  Define state
  \begin{equation*}
    \tilde{\rho}_{\GHZ} = \frac{\Pi_{\GHZ} \, \tilde{\rho} \,
      \Pi_{\GHZ}}{\tr_{\tilde{\rho}} \Pi_{\GHZ}}.
  \end{equation*}
  By Lemma~\ref{lem:gentle},
  \begin{equation}
    \label{eq:propp-state-approx}
    \dtr (\tilde{\rho}, \tilde{\rho}_{\GHZ}) \le O(h^{1/2}
    \epsilon^{1/4}).
  \end{equation}

  By the condition in~\eqref{eq:propp-Z}, it follows that each unitary
  $W_i$ has the block form
  \begin{equation*}
    W_i = \ket{0}\bra{0}_{\reg{B}_i} \otimes W_i^0 +
    \ket{1}\bra{1}_{\reg{B}_i} \otimes W_i^1.
  \end{equation*}
  Define
  \begin{equation*}
    \tilde{W}^0 = \bigotimes_{i=1}^r W_i^0, \tilde{W}^1 =
    \bigotimes_{i=1}^4 W_i^1.
  \end{equation*}

  As $\rho$ is supported on $S_{\propv}$, it follows by the definition
  of the state $\tilde{\rho}$ that the state $\tilde{\rho}$ is
  supported on states spanned by states of the form
  \begin{equation*}
    \ket{\hat{L}} \ket{\xi}, \ket{\hat{L+1}} \ket{\xi'}.
  \end{equation*}
  Without loss of generality, we may assume that the state $\rho$ in
  the strategy is a pure state.
  Then $\rho'$, $\tilde{\rho}$ and $\tilde{\rho}_{\GHZ}$ are all pure
  states.
  It follows that the pure state corresponding to
  $\tilde{\rho}_{\GHZ}$ can be written as
  \begin{equation*}
    \ket{\tilde{\Psi}_{\GHZ}} =
    \frac{\ket{\hat{L}}+\ket{\hat{L+1}}}{\sqrt{2}} \otimes \ket{\psi}.
  \end{equation*}

  By Eq.~\eqref{eq:propp-state-approx}, it follows that state
  \begin{equation*}
    \ket{\Psi} = U B W^* \ket{\tilde{\Psi}_{\GHZ}} = \frac{U
      \bigl[\ket{\omega^1} \bigl( \tilde{W}^0 \bigr)^* \ket{\psi} +
      \ket{\omega^2} \bigl( \tilde{W}^1 \bigr)^* \ket{\psi}
      \bigr]}{\sqrt{2}}
  \end{equation*}
  is a good approximation of the state $\rho$.

  Define state
  \begin{equation*}
    \ket{\psi_0} = U_1^* U_2^* \cdots U_L^* \bigl( \tilde{W}^0
    \bigr)^* \ket{\psi},
  \end{equation*}
  and unitary
  \begin{equation*}
    W^i = \bigl( \tilde{W}^1_i \bigr)^* \tilde{W}^0_i.
  \end{equation*}

  Let $\hat{\Strategy}$ be the strategy that uses the state and
  reflection in the following equation
  \begin{equation}
    \label{eq:propp-strategy}
    \frac{1}{\sqrt{T+1}} \sum_{t=0}^T \ket{\hat{t}} \otimes U_t
    U_{t-1} \cdots U_1 \ket{\psi_0},\quad
    \Lambda(W^i) \bigl( X\otimes \I \bigr) \Lambda(W^i)^*.
  \end{equation}
  They form a strategy that is accepted with probability
  $1-O(h^{1/2}\epsilon^{1/4})$.

  Assuming the condition that the state is supported on $S_{\propv}$,
  the strategy is accepted in game $G_3$ with probability at most
  \begin{equation*}
    \frac{1}{3}(1-\epsilon) + \frac{2}{3} \min \bigl( 1,s_4 + c
    h^{1/2}\epsilon^{1/4} \bigr),
  \end{equation*}
  where $s_4$ is the maximum acceptance probability of the referee in
  game $G_4$ if the players use a strategy of the form in
  Eq.~\eqref{eq:propp-strategy}.
  By Lemma~\ref{lem:error}, the problem then reduces to proving $s_4
  \in 1 - \poly^{-1}$.

  \textit{Step 4}.
  Define Hamiltonian
  \begin{equation*}
    H_{\text{in}} = \frac{1}{q_V} \ket{0}\bra{0}_{\reg{C},1} \otimes
    \sum_{j=1}^{q_V} \ket{1}\bra{1}_{\reg{V},j}.
  \end{equation*}

  For state $\rho$ of the form in~\eqref{eq:propp-strategy}, the
  referee rejects with probability
  \begin{equation*}
    \tr_{\rho} H_{\text{in}} = \frac{1}{q_V(T+1)} \bra{\psi_0}
    \sum_{j=1}^{q_V} \ket{1}\bra{1}_{\reg{V},j} \ket{\psi_0}.
  \end{equation*}

  Let $\Pi_{\text{in}}$ be the projection
  \begin{equation*}
    \Pi_{\text{in}} = \ket{0^{q_V}}\bra{0^{q_V}}_{\reg{V}}.
  \end{equation*}

  Suppose the referee rejects with probability $\epsilon$ in
  \game{Initialization Check}, then
  \begin{equation*}
    \bra{\psi_0} \Pi_{\text{in}} \ket{\psi_0} \ge 1 - O(h\epsilon),
  \end{equation*}
  where $h=q_V(T+1)$.

  Define $\rho_{\text{in}}$ be density matrix of the pure state
  \begin{equation*}
    \Pi_{\text{in}} \ket{\psi_0} / \norm{ \Pi_{\text{in}} \ket{\psi_0} }
  \end{equation*}
  Then
  \begin{equation*}
    \dtr(\rho_{\text{in}}, \ket{\psi_0}\bra{\psi_0}) \le
    O(\sqrt{h\epsilon}).
  \end{equation*}

  The referee rejects with probability
  \begin{equation*}
    \frac{1}{2} (1-\epsilon) + \frac{1}{2} \min \bigl(
    1,s+c\sqrt{h\epsilon} \bigr).
  \end{equation*}
  By Lemma~\ref{lem:error}, it is at most $1-\poly^{-1}$ as $s\in 1 -
  \poly^{-1}$.
  This completes the proof.
\end{proof}

\subsection{Extended Nonlocal Game for QMIP}

In this section, we transform the honest player game in the previous
subsection to an extended nonlocal game.

The referee possesses registers $\reg{C}$ and $\reg{V}$ as in the
honest player game and additional registers $\reg{S}$ and $\reg{X}$.
The register $\reg{S}$ will be used as the clock register for the
constraint propagation subgame and its size $q_S$ will be determined
correspondingly.
The player $(i)$ possess registers $\reg{B}_i$, $\reg{M}_i$ and
$\reg{P}_i$ as in the honest player game.
The questions have the same form as in the honest player game, which
can be either a measurement specification or the special question
$\xap$.
But the players are not required to play honestly anymore.
The game is specified in Fig.~\ref{fig:extended-game}.

For later convenience, we now use unary clock encoding in register
$\reg{S}$ and, to accommodate this change, the measurement $\Pi_e$
used in the $(n,k)$-constraint propagation game will be updated
accordingly as follows.

For $t\in [q_S]$ and edge $e=(t-1,t)$, the measurement $\Pi_e$ is
\begin{equation*}
  \begin{split}
    \Pi_e^0 & = \ket{10}\bra{10}_{\reg{S},t-1,t+1} \otimes
    \frac{\I + X_{\reg{S},t}}{2}\\
    \Pi_e^1 & = \ket{10}\bra{10}_{\reg{S},t-1,t+1} \otimes
    \frac{\I - X_{\reg{S},t}}{2}\\
    \Pi_e^2 & = (\I-\ket{10}\bra{10})_{\reg{S},t-1,t+1}.
  \end{split}
\end{equation*}
The measurement can be easily implemented by $X,Z$ measurements on the
$t-1$, $t$ and $t+1$-th qubit of register $\reg{S}$.

For edge $e=(t_1,t_2)$ with $t_2-t_1 =k$, the measurement $\Pi_e$ is
\begin{equation*}
  \begin{split}
    \Pi_e^0 & = \frac{1}{2}\ket{10}\bra{10}_{\reg{S},t_1,t_2+1}
    \otimes
    \sum_{a,b\in\{0,1\}}\ket{a^k}\bra{b^k}_{\reg{S},t_1+1,\ldots,t_2}\\
    \Pi_e^1 & = \frac{1}{2}\ket{10}\bra{10}_{\reg{S},t_1,t_2+1}
    \otimes \sum_{a,b\in\{0,1\}}(-1)^{a\oplus
      b}\ket{a^k}\bra{b^k}_{\reg{S},t_1+1,\ldots,t_2}\\
    \Pi_e^2 & = \I - \Pi_e^0 - \Pi_e^1.
  \end{split}
\end{equation*}
For any constant $k$, the measurement can be implemented using
collective $X,Z$ measurements on constant number of qubits.

\begin{figure}[!htb]
  \begin{shaded}
    \ul{Extended Nonlocal Game for $\QMIP*$}\\[1em]
    The referee does the following with equal probability:
    \begin{enumerate}
    \item \game{Clock Check}.
      Randomly samples $t\in [q_S-1]$; measures $Z_{\reg{S},t},
      Z_{\reg{S},t+1}$ and rejects if the outcomes are $0,1$
      respectively; accepts otherwise.

    \item \game{Constraint Propagation}.
      Plays the $(n,k)$-constraint propagation game with the
      $r$-players using registers $\reg{S}$ and $\reg{X}$ and accepts
      or rejects accordingly.

    \item \game{Output Check}.
      Measures $Z_{\reg{S},1}$ with outcome $a$; plays the honest
      player game as in Fig~\ref{fig:honest-game} using registers
      $\reg{C}$ and $\reg{V}$; rejects if $a=0$ and the honest player
      game rejects; accepts otherwise.
    \end{enumerate}
  \end{shaded}
  \caption{The extended nonlocal game for \QMIP*.}
  \label{fig:extended-game}
\end{figure}

\begin{theorem}
  \label{thm:extended-game}
  For any $r\in \poly$, $s\in 1 - \poly^{-1}$, there is a $s'\in 1 -
  \poly^{-1}$ such that for any language $A \in \QMIP*(r,3,1,s)$ and
  instance $x$, the extended game in Fig~\ref{fig:extended-game} has
  the property that
  \begin{enumerate}
  \item If $x\in A$, the referee accepts with certainty;
  \item If $x\not\in A$, the referee accepts with probability at most
    $s'$.
  \end{enumerate}
\end{theorem}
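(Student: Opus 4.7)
The plan is to mirror the three-step structure of the game: the Clock Check first constrains the state on register $\reg{S}$ to the span of legal unary clock states, the \game{Constraint Propagation} subgame rigidifies the players' Pauli measurements on the conditional state $\rho_0$ on $\reg{R}$ obtained by measuring $\reg{S}$ to $0$, and the \game{Output Check} then reduces on $\rho_0$ to the honest-player game of Theorem~\ref{thm:honest-game}.

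For completeness, honest players prepare the shared state that the honest-player game prescribes on registers $\reg{C},\reg{V},\reg{B},\reg{M},\reg{P}$, tensored with the state on $\reg{S}\otimes \reg{X}\otimes \reg{R}$ that wins the $(n,k)$-constraint propagation game of Theorem~\ref{thm:mc-game} with respect to the Pauli measurements the honest-player game asks them to perform on $\reg{B},\reg{M}$. Since the constraint system of the propagation game is precisely the set of Pauli identities the honest players obey, the \game{Constraint Propagation} and \game{Clock Check} subgames accept with certainty, and the \game{Output Check} accepts with certainty by Theorem~\ref{thm:honest-game}.

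For soundness, suppose $x\notin A$ and a strategy $\Strategy = (\rho, \{\hat{P}\}, \{\hat{Q}\}, \{\hat{\xap}^{(i)}\})$ has value $1-\epsilon$. First I would apply Lemma~\ref{lem:gapped} to the Clock Check Hamiltonian on $\reg{S}$ to replace $\rho$, at a cost $O(\sqrt{q_S \epsilon})$ in trace distance, by a state supported on legal unary clock states; this lets the measurements $\Pi_e$ used in the constraint propagation subgame behave as in Section~\ref{sec:prop}. Next I would invoke Theorem~\ref{thm:mc-game} on the $r$-player version of the $(n,k)$-constraint propagation game: from its rigidity, there exist isometries $V_i \in \Lin(\R_i, \B_i^{\otimes n}\otimes \R_i')$ such that for the state $\rho_0$ on $\reg{R}$ obtained by conditioning on the clock outcome $0$, the players' reflections $\hat{P}^{(i)}$ and measurements $\hat{Q}^{(i)}$ are $O(N^\kappa \epsilon^{1/\kappa})$-close in $\dis_{\rho_0}$ to the honest Pauli operators $\check{P}^{(i)}=V_i^*(P\otimes\I)V_i$ and $\check{Q}^{(i)}$, and $p_0 \approx 1/(N+1)$ where $p_0$ is the probability of outcome $0$ on $\reg{S}$.

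Having fixed this $V = \bigotimes_i V_i$, I would then define a new strategy $\check{\Strategy}$ for the honest-player game on registers $\reg{C},\reg{V},\reg{B}^{\otimes n},\reg{M},\reg{P}$ that uses the reduced state of $V\rho V^*$ conditional on clock outcome $0$ and uses $\check{P}^{(i)},\check{Q}^{(i)},\hat{\xap}^{(i)}$ as its measurements; Lemma~\ref{lem:drho} and the bounds from Theorem~\ref{thm:mc-game} imply that the Output Check accepts $\Strategy$ with probability at most
\begin{equation*}
  (1-p_0) + p_0\bigl[\, s' + O(n^k N^\kappa \epsilon^{1/\kappa})\bigr],
\end{equation*}
where $s'<1-\poly^{-1}$ is the soundness guaranteed by Theorem~\ref{thm:honest-game} applied to $\check{\Strategy}$. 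Combining the three subgames' rejection probabilities via the uniform mixture and applying Lemma~\ref{lem:error} to each of the Clock Check, Constraint Propagation, and Output Check losses in turn shows that the overall acceptance probability is bounded away from $1$ by an inverse polynomial, yielding the claimed soundness $s'\in 1-\poly^{-1}$.

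The main obstacle I foresee is bookkeeping: the rigidity of the constraint propagation game gives guarantees only on the state $\rho_0$ (measured in $\dis_{\rho_0}$), whereas the measurements asked for in the Output Check act on a state that is a mixture involving the $(T+1)$ clock states of the honest-player game. I will handle this by running Theorem~\ref{thm:mc-game} with $n$ equal to the number of qubits across $\reg{B},\reg{M}$ (which is polynomial), absorbing the loss factor $1/(N+1)$ and $p_0\approx 1/(N+1)$ into the inverse-polynomial soundness gap via repeated application of Lemma~\ref{lem:error}, and checking that the $\xap$ question of the honest-player game is left untouched by the constraint propagation rigidity---the analysis of $\hat{\xap}^{(i)}$ in the proof of Theorem~\ref{thm:honest-game} allows arbitrary reflections for $\xap$, so no additional rigidity on $\hat{\xap}^{(i)}$ is needed beyond what Theorem~\ref{thm:honest-game} already provides.
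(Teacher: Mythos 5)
Your proposal follows the paper's proof essentially step for step: Clock Check plus Lemma~\ref{lem:gapped} to restrict to the legal clock subspace, rigidity of the $(n,k)$-constraint propagation game (Theorem~\ref{thm:mc-game}) to replace $\hat{P},\hat{Q}$ by $\check{P},\check{Q}$ on the conditioned state $\rho_0$ with $p_0 \approx 1/(N+1)$, the acceptance bound $(1-p_0)+p_0\bigl(s_1+O(n^\kappa\epsilon^{1/\kappa})\bigr)$ for the Output Check via Theorem~\ref{thm:honest-game}, and Lemma~\ref{lem:error} to combine the subgames---and your observation that $\hat{\xap}^{(i)}$ requires no rigidity beyond what Theorem~\ref{thm:honest-game} already tolerates is exactly right. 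The one imprecision is in completeness: the winning state is not a tensor product of the honest-player history state with a separate propagation-game state (those would occupy the same registers $\reg{B},\reg{M},\reg{P}$); rather, the propagation history state is obtained by applying the clock-controlled honest Pauli reflections directly to the honest-player state, which works because the perfect strategy of Theorem~\ref{thm:mc-game} accepts an arbitrary data state---a wording slip that does not affect the argument.
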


\begin{proof}
  If $x\in A$, it is easy to see that the players can will the game
  with certainty.

  Consider the case for $x\not\in A$.
  Define game $G_2$ to be the game where the referee plays the
  \game{Constraint Propagation} and \game{Output Check} parts with
  equal probability.
  Let the strategy be $\Strategy = \bigl(\rho, \{ \hat{P} \}, \{
  \hat{Q} \} \bigr)$.
  Suppose that the referee rejects with probability $\epsilon$ in the
  \game{Clock Check}.
  It then follows that
  \begin{equation*}
    \dtr \bigl(\rho, \rho_{\legal}\bigr) \le O(\sqrt{q_S\epsilon}).
  \end{equation*}
  where $\rho_{\legal} = \Pi_{\legal} \rho \Pi_{\legal} / \tr_\rho
  \Pi_{\legal}$ and $\Pi_{\legal}$ be the projection to the legal
  clock subspace as in the proof in Theorem~\ref{thm:honest-game}.
  Let $s_2$ be the maximum acceptance probability in game $G_2$ for
  players who share a state supported on the legal clock subspace.
  Then the referee accepts with probability
  \begin{equation*}
    \frac{1}{3}(1-\epsilon) + \frac{2}{3} \min \bigl(1,s_2 +
    c\sqrt{q_S\epsilon}\bigr).
  \end{equation*}
  Lemma~\ref{lem:error} then reduced the problem to proving that $s_2
  \in 1 - \poly^{-1}$.

  We now analyze game $G_2$ with states supported on the legal clock
  subspace.
  Suppose the referee rejects with probability $\epsilon$ in the
  \game{Constraint Propagation}.
  Theorem~\ref{thm:mc-game} then implies that the players must play
  approximately honestly for the measurement specification type of
  questions.
  That is, there is a constant $\kappa$, such that there exists an
  isometry $V_i$
  \begin{equation*}
    \begin{split}
      \dis_{\rho_0} (\hat{P}, \check{P}) & \le O(n^\kappa
      \epsilon^{1/\kappa}),\\
      \dis_{\rho_0} (\hat{Q}, \check{Q}) & \le O(n^\kappa
      \epsilon^{1/\kappa}),
    \end{split}
  \end{equation*}
  where $\check{P} = V_i^* (P\otimes \I) V_i$ and $\check{Q}$ is the
  measurement that measures Pauli operators in $Q$ after application
  of $V_i$.
  Furthermore, the probability $p_0$ that outcome $0$ occurs in the
  \game{Output Check} satisfies
  \begin{equation*}
    p_0 \approx_{n^{\kappa} \epsilon^{1/\kappa}} \frac{1}{q_S}.
  \end{equation*}

  Consider the strategy $\Strategy'$ with all measurements $\hat{P}$
  and $\hat{Q}$ replaced by $\check{P}$ and $\check{Q}$ for all
  players.
  Strategy $\Strategy'$ is accepted with probability at most $s_1 \in
  1 - \poly^{-1}$ conditioned on the event that the measurement
  $Z_{\reg{S},1}$ in \game{Output Check} has outcome $0$, as promised
  by Theorem~\ref{thm:honest-game}.

  Therefore the referee accepts with probability at most
  \begin{equation}
    \frac{1}{2}(1-\epsilon) + \frac{1}{2} \min \Bigl( 1,
    1-p_0 + p_0
    \bigl( s_1 + cn^\kappa \epsilon^{1/\kappa} \bigr) \Bigr).
  \end{equation}

  The proof follows from the above equation and Lemma~\ref{lem:error}.
\end{proof}

\subsection{Nonlocal Games for QMIP}

In this section, further transform the extended nonlocal game to a
nonlocal game.
The idea is to introduce eight extra players $(1'), (2'), \ldots,
(8')$ and let them to share an encoding of the referee's state and
measure the local $X$, $Z$ measurements the referee does in the
extended nonlocal game.
As in the extended nonlocal game, the referee's measurements on his
registers correspond to at most $k$ Pauli operators of weight at most
$k$, the referee in the following nonlocal game can delegate the
measurement to the eight extra players and the rigidity theorem for
the $(n,k)$-stabilizer game guarantees that the players have to
measure honestly.

\begin{figure}[!htb]
  \begin{shaded}
    \ul{Nonlocal Game for $\QMIP*$}\\[1em]
    The referee does the following with equal probability:
    \begin{enumerate}
    \item Plays the $(n,k)$-stabilizer game in Fig.~\ref{fig:ms-game}
      with players $(1'), (2'), \ldots, (8')$ and rejects or accepts
      accordingly.
    \item Simulates the extended nonlocal game using logical $X,Z$
      measurements.
    \end{enumerate}
  \end{shaded}
  \caption{The nonlocal game for \QMIP*.}
  \label{fig:nonlocal-game}
\end{figure}

\begin{theorem}
  \label{thm:nonlocal-game}
  For $r\in \poly$, $s\in 1 - \poly^{-1}$ there is an $s'\in 1 -
  \poly^{-1}$.
  For any language $A \in \QMIP*(r,3,1,s)$ and an instance $x$, the
  nonlocal game in Fig~\ref{fig:nonlocal-game} satisfies that
  \begin{enumerate}
  \item If $x\in A$, the referee accepts with certainty;
  \item If $x\not\in A$, the referee accepts with probability at most
    $s'$.
  \end{enumerate}
\end{theorem}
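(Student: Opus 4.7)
The plan is to reduce the analysis to Theorem~\ref{thm:extended-game} by using the rigidity of the $(n,k)$-stabilizer game (Theorem~\ref{thm:ms-game}) to certify that the eight extra players are effectively holding an encoded copy of the referee's quantum register and are honestly performing the delegated Pauli measurements. The extended nonlocal game from the previous subsection only requires the referee to measure at most $k$ pairwise commuting $XZ$-form Pauli operators of weight at most $k$ at any point, so it fits precisely into the framework of the $(n,k)$-stabilizer game for a suitable constant $k$.

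For completeness, when $x \in A$ the eight extra players together with the original $r$ players simply implement the honest strategy: the eight extra players prepare the logical encoding (of appropriate length) of the referee's register in the extended nonlocal game, the original $r$ players prepare the same state as in the optimal strategy of Theorem~\ref{thm:extended-game}, and when asked a question in the stabilizer part they measure the requested Pauli operators on their share of the encoded state. This strategy is accepted with certainty by both branches of the game.

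For soundness, suppose the total value is at least $1 - \epsilon$. Then the $(n,k)$-stabilizer game subgame is won with probability at least $1 - 2\epsilon$. Applying Theorem~\ref{thm:ms-game}, there exist isometries $V_i \in \Lin(\R_i, \B_i^{\otimes n} \otimes \R_i')$ for $i \in [8]$ such that, letting $V = \bigotimes_{i=1}^8 V_i$ and $\rho' = V \rho V^*$, the state $\rho'$ is $O(n^\kappa \epsilon^{1/\kappa})$-close to the code space of the $n$-qubit encoding and the eight players' measurements are $O(n^\kappa \epsilon^{1/\kappa})$-close in the $\dis_\rho$ metric to the honest logical Pauli measurements after applying $V$. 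Consequently, the joint strategy of the eight extra players plus the original $r$ players, after the change of basis by $V$, is $O(n^\kappa \epsilon^{1/\kappa})$-close (in acceptance probability) to a strategy in which the eight extra players literally store the encoded referee register and honestly decode Pauli queries.

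Given this approximate reduction, the acceptance probability of the simulation branch is within $O(n^\kappa \epsilon^{1/\kappa})$ of the value of an extended nonlocal game strategy for the $r$ original players against the referee of Theorem~\ref{thm:extended-game}: the referee's register is now the (decoded) quantum system carried by the eight extra players, and its measurements are $X$ and $Z$ (collective) measurements on that system. By Theorem~\ref{thm:extended-game}, since $x \notin A$ that extended game is won with probability at most some $s'_0 \in 1 - \poly^{-1}$. Putting the branches together, the total acceptance probability is at most $\tfrac12(1-\epsilon) + \tfrac12 \min\bigl(1, s'_0 + c\, n^\kappa \epsilon^{1/\kappa}\bigr)$, and Lemma~\ref{lem:error} gives a single bound $s' \in 1 - \poly^{-1}$, independent of the adversarial choice of $\epsilon$.

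The main obstacle will be the honest-to-delegated reduction in the simulation branch: after replacing the eight players' measurements by their honest counterparts through the isometry $V$, one must still argue that the referee's dynamics in the simulated extended game are faithfully reproduced when Pauli outcomes are obtained by querying eight separated players rather than being produced from an actual internal quantum register. This is handled by observing that the isometry $V$ commutes with all interactions between the eight players and the rest of the system (the eight players never receive questions from the simulation branch, they only handle the delegated Pauli measurements), so the statistics of the decoded logical measurement coincide with measurements on the referee's simulated register; the $O(n^\kappa \epsilon^{1/\kappa})$ losses from rigidity and from Lemma~\ref{lem:drho} then propagate through the analysis exactly as in the proof of Theorem~\ref{thm:extended-game}. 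All remaining ingredients, including setting $k$ to the maximum measurement weight used in the extended game and choosing the parameters so that question length is $O(\log \abs{x})$ and answer length is $O(1)$, are routine.
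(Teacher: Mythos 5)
Your proposal follows essentially the same route as the paper's own proof: use the rigidity of the $(n,k)$-stabilizer game (Theorem~\ref{thm:ms-game}) to replace the eight extra players' measurements with honest logical Pauli measurements at cost $O(n^\kappa\epsilon^{1/\kappa})$, bound the resulting honest simulation branch by Theorem~\ref{thm:extended-game}, and combine the two branches via the bound $\tfrac12(1-\epsilon)+\tfrac12\min\bigl(1,\,s_1+cn^\kappa\epsilon^{1/\kappa}\bigr)$ and Lemma~\ref{lem:error}. Your treatment is correct, and in fact slightly more explicit than the paper's (which omits the completeness argument and the honest-to-delegated reduction you flag), so no further changes are needed.
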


\begin{proof}
  Suppose that the referee rejects with probability $\epsilon$ in the
  first part.
  Theorem~\ref{thm:ms-game} then implies that the players must play
  approximately honestly for the measurement specification type of
  questions.
  Therefore the referee accepts with probability at most
  \begin{equation}
    \frac{1}{2}(1-\epsilon) + \frac{1}{2} \min \bigl( 1, s_1 + cn^\kappa
    \epsilon^{1/\kappa} \bigr),
  \end{equation}
  where $s_1$ is the maximum acceptance probability for the second
  part where all the eight players $(1'), (2'), \ldots, (8')$ measures
  honestly.
  By Theorem~\ref{thm:extended-game}, it follows that $s_1\in 1 -
  \poly^{-1}$.
  The proof then follows from Lemma~\ref{lem:error}.
\end{proof}

Our main theorem (Theorem~\ref{thm:main}) follows by observing that
the questions of the nonlocal game in Fig.~\ref{fig:nonlocal-game} are
measurement specification of at most $k$ Pauli operators of weight at
most $k$ and can be encoded with logarithmically number of bits.
\bibliographystyle{acm}

\bibliography{NEXP-entangled}

%% Start-Of-Trailer

\end{document}